\documentclass[prx,twocolumn,floatfix,superscriptaddress,longbibliography,notitlepage]{revtex4-1}

\usepackage{graphicx, color, graphpap}% Include figure files
\usepackage{enumitem}
\usepackage{amssymb}
\usepackage{amsthm}
\usepackage{multirow}
\usepackage[colorlinks=true,citecolor=blue,linkcolor=magenta]{hyperref}
\usepackage[T1]{fontenc}
\usepackage{bbm}
\usepackage{thmtools,thm-restate}
\usepackage{verbatim}
\usepackage{mathtools}
\usepackage{titlesec}
% Algorithms package
%\usepackage[linesnumbered,ruled,vlined]{algorithm2e}
%\SetKwInput{kwInit}{Init}

\usepackage{color}  
\usepackage{xcolor}

\long\def\ca#1\cb{} %Use for commenting out: \ca...\cb

% Defined commands

%\newcommand{\pat}[1]{\textcolor{red}{[P: #1]}}

\newcommand{\ketbra}[2]{| \hspace{1pt} #1 \rangle \langle #2 \hspace{1pt} |}

\newcommand{\avg}[1]{\langle #1\rangle }
\newcommand{\ket}[1]{|#1\rangle}               %ket
              %colon in math with less space
\newcommand{\bra}[1]{\langle #1|}              %bra
\newcommand{\dya}[1]{\ket{#1}\!\bra{#1}}
        %dyad
      %abstract inner product
      %quantum inner product
 %matrix element

%%% Circuits
% Power of one Qubit

% Power of two Qubits

% Hilbert-Schmidt Test

%%%% Complexity classes

%%%% Computational problems

%%% General CS
\newcommand{\poly}{\operatorname{poly}}

\newcommand{\rank}{\text{rank}}

\newcommand{\AC}{\mathcal{A}}
\newcommand{\BC}{\mathcal{B}}

\newcommand{\HC}{\mathcal{H}}

\newcommand{\LC}{\mathcal{L}}

\newcommand{\OC}{\mathcal{O}}

\newcommand{\SC}{\mathcal{S}}

\newcommand{\HS}{\text{HS}}

\newcommand{\Tr}{{\rm Tr}}

\newcommand{\Var}{{\rm Var}}

               %average
\renewcommand{\geq}{\geqslant}
\renewcommand{\leq}{\leqslant}
\newcommand{\mte}[2]{\langle#1|#2|#1\rangle }

\newcommand{\LCb}{\overline{\LC}}
\newcommand{\VL}{V_{\LC}}
\newcommand{\VLb}{V_{\LCb}}

\renewcommand{\vec}[1]{\boldsymbol{#1}}  % Bold vectors instead of arrow vectors
\newcommand{\ot}{\otimes}
\newcommand{\ad}{^\dagger}
 %Frobenius inner product

\newcommand*{\id}{\openone}

%Greek Letters

 %Latex \th = thor n

\newcommand{\thv}{\vec{\theta}}

%Theorems, Lemmas, etc.
%\newtheoremstyle{example}{\topsep}{\topsep}%
{}%         Body font
{}%         Indent amount (empty = no indent, \parindent 

\newtheoremstyle{mystyle}%                % Name
  {}%                                     % Space above
  {}%                                     % Space below
  {}%                                     % Body font
  {}%                                     % Indent amount
  {\itshape}%                            % Theorem head font
  {.}%                                    % Punctuation after theorem head
  { }%                                    % Space after theorem head, ' ', or \newline
  {}%                                     % Theorem head spec (can be left empty, meaning `normal')

\theoremstyle{mystyle}

\newtheorem{theorem}{Theorem}
\newtheorem{lemma}{Lemma}
\newtheorem{corollary}{Corollary}
\newtheorem{proposition}{Proposition}

%\newtheorem{claim}{Claim}

%\theoremstyle{definition}
%\newtheorem{definition}[theorem]{Definition}
%\newtheorem{remark}{Remark}
%\newtheorem{remark}[theorem]{Remark}

%\newenvironment{specialproof}{\paragraph{Proof:}}{\hfill$\square$}

%Updated sentence

\begin{document}
%\title{Cost-Function-Dependent Barren Plateaus in Shallow Quantum Neural Networks}
\title{Cost Function Dependent Barren Plateaus in Shallow Parametrized Quantum Circuits}

\author{M. Cerezo}
\thanks{Corresponding authors: \\ Marco Cerezo - cerezo@lanl.gov\\ Patrick J. Coles - pcoles@lanl.gov}
\affiliation{Theoretical Division, Los Alamos National Laboratory, Los Alamos, NM, USA.}
\affiliation{Center for Nonlinear Studies, Los Alamos National Laboratory, Los Alamos, NM, USA
}

\author{Akira Sone}
\affiliation{Theoretical Division, Los Alamos National Laboratory, Los Alamos, NM, USA.}
\affiliation{Center for Nonlinear Studies, Los Alamos National Laboratory, Los Alamos, NM, USA
}

\author{Tyler Volkoff}
\affiliation{Theoretical Division, Los Alamos National Laboratory, Los Alamos, NM, USA.}

\author{Lukasz Cincio}
\affiliation{Theoretical Division, Los Alamos National Laboratory, Los Alamos, NM, USA.}

\author{Patrick J. Coles}
\thanks{Corresponding authors: \\ Marco Cerezo - cerezo@lanl.gov\\ Patrick J. Coles - pcoles@lanl.gov}
\affiliation{Theoretical Division, Los Alamos National Laboratory, Los Alamos, NM, USA.}

\begin{abstract}
Variational quantum algorithms (VQAs) optimize the parameters $\vec{\theta}$ of a parametrized quantum circuit $V(\vec{\theta})$ to minimize a cost function $C$. While VQAs may enable practical applications of noisy quantum computers, they are nevertheless heuristic methods with unproven scaling. Here, we rigorously prove two results, assuming $V(\vec{\theta})$  is an alternating layered  ansatz composed  of blocks forming local 2-designs. Our first result states that defining $C$ in terms of global observables leads to exponentially vanishing gradients (i.e., barren plateaus) even when $V(\vec{\theta})$ is shallow. Hence, several VQAs in the literature must revise their proposed costs. On the other hand, our second result states that defining $C$ with local observables leads to at worst a polynomially vanishing gradient, so long as the depth of $V(\vec{\theta})$ is $\mathcal{O}(\log n)$. Our results establish a connection between locality and trainability. We illustrate these ideas with large-scale simulations, up to 100 qubits, of a quantum autoencoder implementation.
\end{abstract}
\maketitle

\section*{INTRODUCTION}\label{sc:intro}

One of the most important technological questions is whether Noisy Intermediate-Scale Quantum (NISQ) computers will have practical applications~\cite{preskill2018quantum}. NISQ devices are limited both in qubit count and in gate fidelity, hence preventing the use of quantum error correction.

The leading strategy to make use of these devices are variational quantum algorithms (VQAs)~\cite{mcclean2016theory}. VQAs employ a quantum computer to efficiently evaluate a cost function $C$, while a classical optimizer trains the parameters $\thv$ of a Parametrized Quantum Circuit (PQC) $V(\thv)$. The benefits of VQAs are three-fold. First, VQAs allow for task-oriented programming of quantum computers, which is important since designing quantum algorithms is non-intuitive. Second, VQAs make up for small qubit counts by leveraging classical computational power. Third, pushing complexity onto classical computers, while only running short-depth quantum circuits, is an effective strategy for error mitigation on NISQ devices.

There are very few rigorous scaling results for VQAs (with exception of one-layer approximate optimization~\cite{qaoa2014, hadfield2019quantum, hastings2019classical}).  Ideally, in order to reduce gate overhead that arises when implementing on quantum hardware one would like to employ a hardware-efficient ansatz~\cite{kandala2017hardware} for $V(\thv)$. As recent large-scale implementations for chemistry~\cite{arute2020hartree} and optimization~\cite{arute2020quantum} applications have shown, this ansatz leads to smaller errors due to hardware noise. However, one of the few known scaling results is that deep versions of randomly initialized  hardware-efficient ansatzes lead to exponentially vanishing gradients~\cite{mcclean2018barren}. Very little is known about the  scaling of the gradient in such ansatzes for shallow depths, and it would be especially useful to have a converse bound that guarantees  non-exponentially vanishing gradients for certain depths.
This motivates our work, where we rigorously investigate the  gradient scaling of VQAs as a function of the circuit depth.

The other motivation for our work is the recent explosion in the number of proposed VQAs. The Variational Quantum Eigensolver (VQE) is the most famous VQA. It aims to prepare the ground state of a given Hamiltonian $H = \sum_{\alpha} c_{\alpha} \sigma_{\alpha}$, with $H$ expanded as a sum of local Pauli operators~\cite{VQE}. In VQE, the cost function is obviously the energy $C = \mte{\psi}{H}$ of the trial state $\ket{\psi}$. However, VQAs have been proposed for other applications, like quantum data compression~\cite{Romero}, quantum error correction~\cite{johnson2017qvector}, quantum metrology~\cite{koczor2019variational}, quantum compiling~\cite{QAQC, jones2018quantum, sharma2020noise, heya2018variational}, quantum state diagonalization~\cite{VQSD, bravo2020quantum}, quantum simulation~\cite{Li, heya2019subspace, cirstoiu2019variational, otten2019noise}, fidelity estimation~\cite{cerezo2020variational}, unsampling~\cite{carolan2019variational}, consistent histories~\cite{arrasmith2019variational}, and linear systems~\cite{bravo-prieto2019, Xiaosi, huang2019near}. For these applications, the choice of $C$ is less obvious. Put another way, if one reformulates these VQAs as ground-state problems (which can be done in many cases), the choice of Hamiltonian $H$ is less intuitive. This is because many of these applications are abstract, rather than associated with a physical Hamiltonian.

We remark that polynomially vanishing gradients imply that the number of shots needed to estimate the gradient should grow as $\OC(\poly(n))$. In contrast, exponentially vanishing gradients (i.e., barren plateaus) imply that derivative-based optimization will have exponential scaling~\cite{cerezo2020impact}, and this scaling can also apply to derivative-free optimization~\cite{arrasmith2020effect}. Assuming a polynomial number of shots per optimization step, one will be able to resolve against finite sampling noise and train the parameters if the gradients vanish polynomially. 
Hence, we employ the term ``trainable'' for polynomially vanishing gradients.

In this work we connect the trainability of VQAs to the choice of $C$. For the abstract applications in Refs.~\cite{Romero, johnson2017qvector,koczor2019variational, QAQC, jones2018quantum, sharma2020noise, heya2018variational, VQSD, bravo2020quantum, Li, heya2019subspace, cirstoiu2019variational, otten2019noise, cerezo2020variational, carolan2019variational, arrasmith2019variational, bravo-prieto2019, Xiaosi, huang2019near}, it is important for $C$ to be operational, so that small values of $C$ imply that the task is almost accomplished. Consider an example of state preparation, where the goal is to find a gate sequence that prepares a target state $\ket{\psi_0}$. A natural cost function is the square of the trace distance $D_{\text{T}}$ between $\ket{\psi_0}$ and $\ket{\psi} = V(\thv)\ad\ket{\vec{0}}$, given by $C_{\text{G}} = D_{\text{T}}(\ket{\psi_0}, \ket{\psi})^2 $,  which is equivalent to 
\begin{equation}
C_{\text{G}} = \Tr[O_{\text{G}} V(\thv)\dya{\psi_0}V(\thv)\ad]\,, \label{eq:Cost-TD} 
\end{equation}
with  $O_{\text{G}}=\id - \dya{\vec{0}}$. Note that $\sqrt{C_{\text{G}}} \geq |\mte{\psi}{M} - \mte{\psi_0}{M}|$ has a nice operational meaning as a bound on the expectation value difference for a POVM element~$M$. 

However, here we argue that this cost function and others like it  exhibit exponentially vanishing gradients. Namely, we consider global cost functions, where one directly compares states or operators living in exponentially large Hilbert spaces (e.g., $\ket{\psi}$ and $\ket{\psi_0}$). These are precisely the cost functions that have operational meanings for tasks of interest, including all tasks in Refs.~\cite{Romero, johnson2017qvector,koczor2019variational, QAQC, jones2018quantum, sharma2020noise, heya2018variational, VQSD, bravo2020quantum, Li, heya2019subspace, cirstoiu2019variational, otten2019noise, cerezo2020variational, carolan2019variational, arrasmith2019variational, bravo-prieto2019, Xiaosi, huang2019near}. Hence, our results imply that a non-trivial subset of these references will need to revise their choice of $C$. 

\begin{figure}[t]
    \centering
    \includegraphics[width=.94\columnwidth]{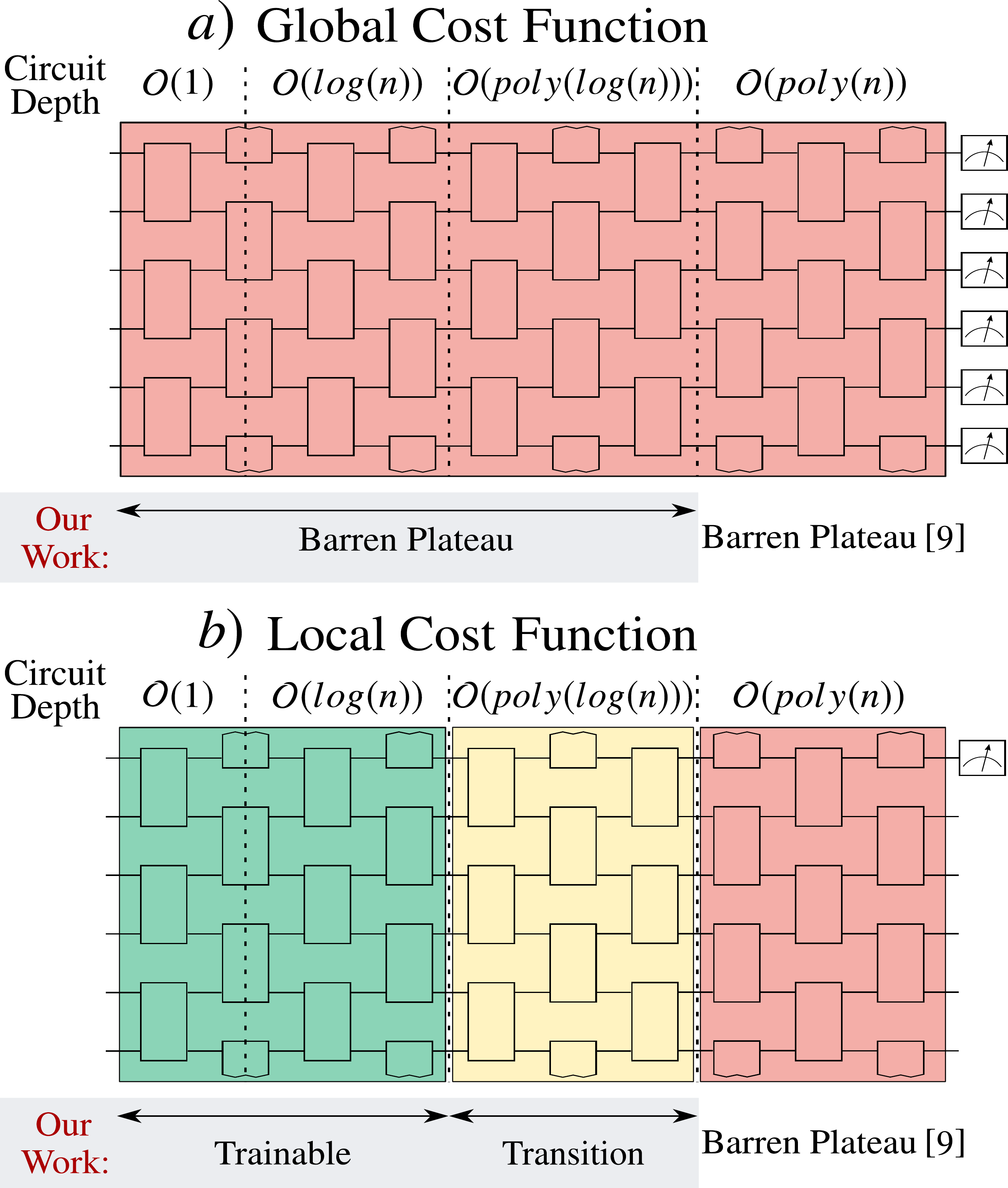}
    \caption{Summary of our main results. McClean {\it et al.}~\cite{mcclean2018barren} proved that a barren plateau can occur when the depth $D$ of a hardware-efficient ansatz is $D\in\OC(\poly(n))$. Here we extend these results by providing bounds for the variance of the gradient of global and local cost functions as a function of $D$. In particular, we find that the barren plateau phenomenon is cost-function dependent. a) For global cost functions (e.g., Eq.~\eqref{eq:Cost-TD}), the landscape will exhibit a barren plateau essentially for all depths $D$. b) For local cost functions (e.g., Eq.~\eqref{eq:localcostfunction}), the gradient vanishes at worst polynomially and hence is trainable when $D\in \OC(\log(n))$, while barren plateaus occur for $D\in \OC(\poly(n))$, and between these two regions the gradient transitions from polynomial to exponential decay.}
    \label{fig:Overview}
\end{figure}

Interestingly, we demonstrate vanishing gradients for shallow  PQCs. This is in contrast to McClean {\it et al}.~\cite{mcclean2018barren}, who showed vanishing gradients for deep  PQCs. They noted that randomly initializing $\thv$ for a $V(\thv)$ that forms a 2-design leads to a barren plateau, i.e., with the gradient vanishing exponentially in the number of qubits, $n$. Their work implied that researchers must develop either clever parameter initialization strategies~\cite{grant2019initialization, verdon2019learning} or clever PQCs ansatzes~\cite{lee2018generalized,hadfield2019quantum,verdon2019quantum}. Similarly, our work implies that researchers must carefully weigh the balance between trainability and operational relevance when choosing $C$. 

While our work is for general VQAs, barren plateaus for global cost functions were noted for specific VQAs and for a very specific tensor-product example by our research group~\cite{QAQC,VQSD}, and more recently in~\cite{huang2019near}. This motivated the proposal of local cost functions~\cite{QAQC, sharma2020noise, VQSD, cirstoiu2019variational, arrasmith2019variational, bravo-prieto2019, carolan2019variational}, where one compares objects (states or operators) with respect to each individual qubit, rather than in a global sense, and therein it was shown that these local cost functions have indirect operational meaning.

Our second result is that these local cost functions have gradients that vanish polynomially rather than exponentially in $n$, and hence have the potential to be trained. This holds for $V(\thv)$ with depth $\OC(\log n)$. Figure~\ref{fig:Overview} summarizes our two main results.

Finally, we illustrate our main results for an important example: quantum autoencoders~\cite{Romero}. Our large-scale numerics show that the global cost function proposed in \cite{Romero}  has a barren plateau. On the other hand, we propose a novel local cost function that is trainable, hence making quantum autoencoders a scalable application.

\section*{RESULTS}\label{sec:Results}

\subsection*{Warm-up example} \label{sec:warmup}

To illustrate cost-function-dependent barren plateaus, we first consider a toy problem corresponding to the state preparation problem in the Introduction with the target state being $\ket{\vec{0}}$. We assume a tensor-product ansatz of the form $V(\vec{\theta})=\bigotimes_{j=1}^{n}e^{-i \theta^j \sigma_{x}^{(j)} /2} $, with the goal of finding the angles $\theta^j$ such that $V(\vec{\theta})\ket{\vec{0}}=\ket{\vec{0}}$. Employing the global cost of \eqref{eq:Cost-TD} results in $C_{\text{G}} = 1-\prod_{j=1}^n\cos^2{\theta^j\over 2}$. The barren plateau can be detected via the variance of its gradient: $\Var[\frac{\partial C_{\text{G}}}{\partial \theta^j}]=\frac{1}{8}(\frac{3}{8})^{n-1}$, which is exponentially vanishing in $n$. Since the mean value is $\avg{\frac{\partial C_{\text{G}}}{\partial \theta^j}}=0$,  the gradient concentrates exponentially around zero.

On the other hand, consider a local cost function:
\begin{align}
C_{\text{L}}&=\Tr\left[O_{\text{L}}V(\thv)\dya{\vec{0}} V(\thv)\ad\right],
\label{eq:localcostfunction}\\
\text{with}\quad O_{\text{L}}&=\id - \frac{1}{n}\sum_{j=1}^n \dya{0}_j\otimes\id_{\overline{j}}\,,\label{eq:localcostfunction22}
\end{align}
where $\id_{\overline{j}}$ is the identity on all qubits except qubit $j$. Note that $C_{\text{L}}$ vanishes under the same conditions as $C_{\text{G}}$~\cite{QAQC,sharma2020noise},  $C_{\text{L}}=0 \Leftrightarrow C_{\text{G}}=0$. We find $C_{\text{L}} = 1-\frac{1}{n}\sum_{j=1}^n \cos^2{\theta^j \over 2}$, and the variance of its gradient is  $\Var[\frac{\partial C_{\text{L}}}{\partial \theta^j}]=\frac{1}{8n^2}$, which vanishes polynomially with $n$ and hence exhibits no barren plateau. Figure~\ref{fig:f2} depicts the cost landscapes of $C_{\text{G}}$ and $C_{\text{L}}$ for two values of $n$ and shows that the barren plateau can be avoided here via a local cost function.

Moreover, this example allows us to delve deeper into the cost landscape to see a phenomenon that we refer to as a narrow gorge. While a barren plateau is associated with a flat landscape, a narrow gorge refers to the steepness of the valley that contains the global minimum. This phenomenon is illustrated in Fig.~\ref{fig:f2}, where each dot corresponds to cost values obtained from randomly selected parameters $\vec{\theta}$. For $C_{\text{G}}$ we see that very few dots fall inside the narrow gorge, while for $C_{\text{L}}$ the narrow gorge is not present. Note that the narrow gorge makes it harder to train $C_{\text{G}}$ since the learning rate of descent-based optimization algorithms must be exponentially small in order not to overstep the narrow gorge. The following proposition (proved in the Supplementary Note 2) formalizes the narrow gorge for $C_{\text{G}}$ and its absence for $C_{\text{L}}$ by characterizing the dependence on $n$ of the probability $C \leq\delta$. This probability is associated with the parameter space volume that leads to $C\leq\delta$. 

\begin{proposition}\label{prop1}
Let $\theta^{j}$ be uniformly distributed on $[-\pi,\pi]$ $\forall j$. For any $\delta \in (0,1)$, the probability that $C_{\text{G}}\le \delta$ satisfies
\begin{equation}
\text{\normalfont Pr}\lbrace C_{\text{G}}\le \delta \rbrace \le (1-\delta)^{-1}\left({1\over 2} \right)^{n}.
\end{equation}
For any $\delta \in [{1\over 2},1]$, the probability that $C_{\text{L}}\le \delta$ satisfies
\begin{equation}
\text{\normalfont Pr}\lbrace C_{\text{L}}\le \delta \rbrace \ge { (2\delta -1)^{2}\over {1\over 2n} + (2\delta -1)^{2} } \underset{n\to\infty}{\longrightarrow}1\,.
\end{equation}
\label{prop:one}
\end{proposition}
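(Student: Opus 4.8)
The plan is to reduce both statements to elementary tail bounds for functions of the independent random variables $X_j := \cos^2(\theta^j/2)$. First I would record that, since $\theta^j$ is uniform on $[-\pi,\pi]$, the $X_j$ are independent and identically distributed with $\Ebb[X_j] = \tfrac{1}{2}$ and $\Var[X_j] = \tfrac{1}{8}$; both follow from $\cos^2(\theta/2) = (1+\cos\theta)/2$ together with $\Ebb[\cos\theta^j]=0$ and $\Ebb[\cos^2\theta^j] = \tfrac{1}{2}$. Crucially each $X_j \in [0,1]$ is nonnegative, which licenses Markov-type arguments. In this notation $C_{\text{G}} = 1 - \prod_{j=1}^n X_j$ and $C_{\text{L}} = 1 - \tfrac{1}{n}\sum_{j=1}^n X_j$.

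For the global bound, I would rewrite the event as $\lbrace C_{\text{G}}\le\delta\rbrace = \lbrace \prod_j X_j \ge 1-\delta\rbrace$. Since $\prod_j X_j \ge 0$ and, by independence, $\Ebb[\prod_j X_j] = \prod_j \Ebb[X_j] = (1/2)^n$, Markov's inequality gives $\text{Pr}\lbrace \prod_j X_j \ge 1-\delta\rbrace \le (1-\delta)^{-1}(1/2)^n$, which is exactly the claimed bound. This step is essentially immediate once the factorization of the expectation is observed.

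For the local bound, I would set $\bar X := \tfrac{1}{n}\sum_j X_j$, so that $\lbrace C_{\text{L}}\le\delta\rbrace = \lbrace \bar X \ge 1-\delta\rbrace$, with $\Ebb[\bar X] = \tfrac12$ and $\Var[\bar X] = \tfrac{1}{8n}$. The restriction $\delta \in [\tfrac12,1]$ guarantees $1-\delta \le \tfrac12 = \Ebb[\bar X]$, so the target event is a lower deviation of $\bar X$ below its mean by $a := \delta - \tfrac12 \ge 0$. I would bound the complementary event $\lbrace \bar X \le 1-\delta\rbrace$ using the one-sided Chebyshev (Cantelli) inequality, $\text{Pr}\lbrace \bar X \le \Ebb[\bar X] - a\rbrace \le \Var[\bar X]/(\Var[\bar X]+a^2)$, and then take complements. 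Substituting $\Var[\bar X] = 1/(8n)$ and $a^2 = (2\delta-1)^2/4$ and simplifying yields $\text{Pr}\lbrace \bar X \ge 1-\delta\rbrace \ge (2\delta-1)^2 / \big(\tfrac{1}{2n} + (2\delta-1)^2\big)$, and the $n\to\infty$ limit is read off directly.

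The main obstacle is obtaining the sharp constant in the local bound: the ordinary two-sided Chebyshev inequality would produce a weaker estimate, so the proof hinges on invoking the one-sided Cantelli form and matching $a = \delta - 1/2$ to the factor $(2\delta-1)^2$. A secondary point requiring care is the handling of strict versus non-strict inequalities when passing between $\lbrace \bar X \ge 1-\delta\rbrace$ and $\lbrace \bar X \le 1-\delta\rbrace$, which is harmless here because only one-sided upper bounds on the tail probabilities are needed, and the complement $\lbrace \bar X > 1-\delta\rbrace$ is contained in $\lbrace \bar X \ge 1-\delta\rbrace$.
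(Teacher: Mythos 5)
Your proof is correct and follows essentially the same route as the paper's: the global bound is the identical Markov-inequality argument applied to $\prod_j \cos^2(\theta^j/2)$, and your Cantelli (one-sided Chebyshev) step is mathematically the same inequality as the strong Paley--Zygmund form $P(X\ge rE(X))\ge (1-r)^2E(X)^2/\big(\Var X+(1-r)^2E(X)^2\big)$ that the paper invokes with $r=2(1-\delta)$. Both routes produce the identical constant $(2\delta-1)^2/\big(\tfrac{1}{2n}+(2\delta-1)^2\big)$.
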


\begin{figure}[t]
    \centering
    \includegraphics[width=1.0\columnwidth]{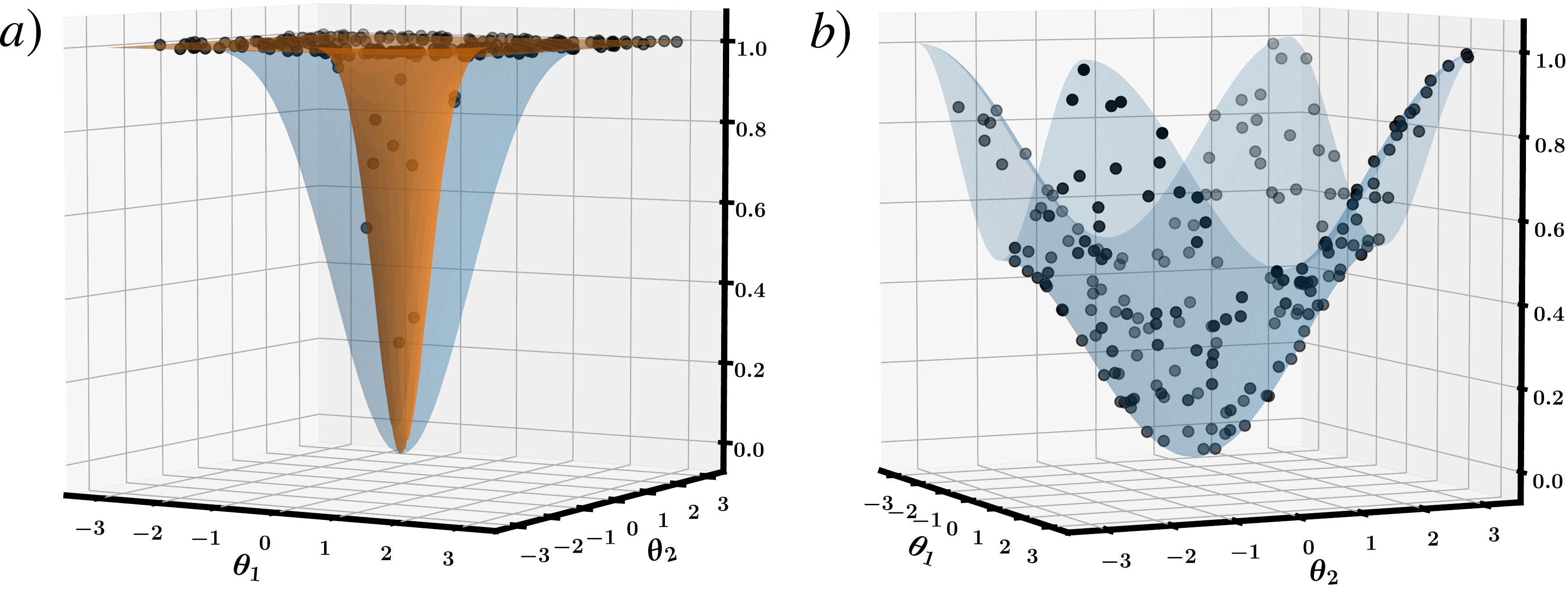}
    \caption{Cost function landscapes. a) Two-dimensional cross-section through the landscape of $C_{\text{G}} = 1-\prod_{j=1}^n\cos^2 (\theta^j/2)$ for $n=4$ (blue) and $n=24$ (orange). b) The same cross-section through the landscape of $C_{\text{L}} = 1-\frac{1}{n}\sum_{j=1}^n \cos^2 (\theta^j/2)$ is independent of $n$. In both cases, 200 Haar distributed points are shown, with very few (most) of these points lying in the valley containing the global minimum of $C_{\text{G}}$ ($C_{\text{L}}$).}
    \label{fig:f2}
\end{figure}

\subsection*{General framework}
\label{sec:Framework}

%\subsubsection{General cost function}

For our general results, we consider a family of cost functions that can be expressed as the expectation value of an operator $O$ as follows
\begin{equation}
    C=\Tr\left[ O V(\vec{\theta}) \rho V\ad(\vec{\theta}) \right]\,, \label{eq:cost_function}
\end{equation}
where $\rho$ is an arbitrary quantum state on $n$ qubits. Note that this framework includes the special case where $\rho$ could be a pure state, as well as the more special case where $\rho = \dya{\vec{0}}$, which is the input state for many VQAs such as VQE. Moreover, in VQE, one chooses $O = H$, where $H$ is the physical Hamiltonian. In general, the choice of $O$ and $\rho$ essentially defines the application of interest of the particular VQA.

It is typical to express $O$ as a linear combination of the form  $O=c_0\id+\sum_{i=1}^{N} c_i O_i$. Here $O_i\neq\id$,  $c_i\in\mathbb{R}$, and we assume that at least one $c_i\neq 0$. Note that $C_{\text{G}}$ and $C_{\text{L}}$ in \eqref{eq:Cost-TD} and \eqref{eq:localcostfunction} fall under this framework. In our main results below, we will consider two different choices of $O$ that respectively capture our general notions of global and local cost functions and also generalize the aforementioned $C_{\text{G}}$ and $C_{\text{L}}$.

%\subsubsection{Ansatz} \label{sebsec:Ansatz}

As shown in Fig.~\ref{fig:f3}(a), $V(\vec{\theta})$ consists of $L$ layers of $m$-qubit unitaries $W_{kl}(\vec{\theta}_{kl})$, or  blocks, acting on alternating groups of $m$ neighboring qubits.  We refer to this as an  Alternating Layered Ansatz. We remark that the Alternating Layered Ansatz will be a hardware-efficient ansatz so long as the gates that compose each block are taken from a set of gates native to a specific device. As depicted in Fig.~\ref{fig:f3}(c), the one dimensional Alternating Layered Ansatz can be readily implemented in devices with one-dimensional connectivity, as well as in devices with two-dimensional connectivity (such as that of IBM's~\cite{ibmqxdevices} and Google's~\cite{arute2019quantum} quantum devices). That is, with both one- and two-dimensional hardware connectivity one can group qubits to form an Alternating Layered Ansatz as in Fig.~\ref{fig:f3}(a). 

The index  $l=1,\ldots,L$ in $W_{kl}(\vec{\theta}_{kl})$ indicates the layer that contains the block, while $k=1,\ldots,\xi$ indicates the qubits it acts upon. We assume $n$ is a multiple of $m$, with $n=m \xi$, and that $m$ does not scale with $n$. As depicted in Fig.~\ref{fig:f3}(a), we define  $S_{k}$ as  the $m$-qubit subsystem on which $W_{kL}$ acts, and we define $\SC = \{S_{k}\}$ as the set of all such subsystems.  Let us now consider a block $W_{kl}(\vec{\theta}_{kl})$ in the $l$-th layer of the ansatz. For simplicity we henceforth use  $W$ to refer to a given $W_{kl}(\vec{\theta}_{kl})$. As shown in the Methods section, given a $\theta^\nu\in\vec{\theta}_{kl} $ that parametrizes a rotation $e^{-i\theta^\nu \sigma_\nu /2 }$ (with $\sigma_\nu$ a Pauli operator)   inside a given block $W$, one can always express
\begin{equation}\label{eq:newWaWb}
    \frac{\partial W}{\partial\theta^\nu}:=\partial_\nu W = \frac{-i}{2} W_{\text{A}}\sigma_\nu W_{\text{B}},
\end{equation}
where $W_{\text{A}}$ and $W_{\text{B}}$ contain all remaining gates in $W$, and are properly defined in the Methods section.

\begin{figure}
    \centering
    \includegraphics[width=.9\columnwidth]{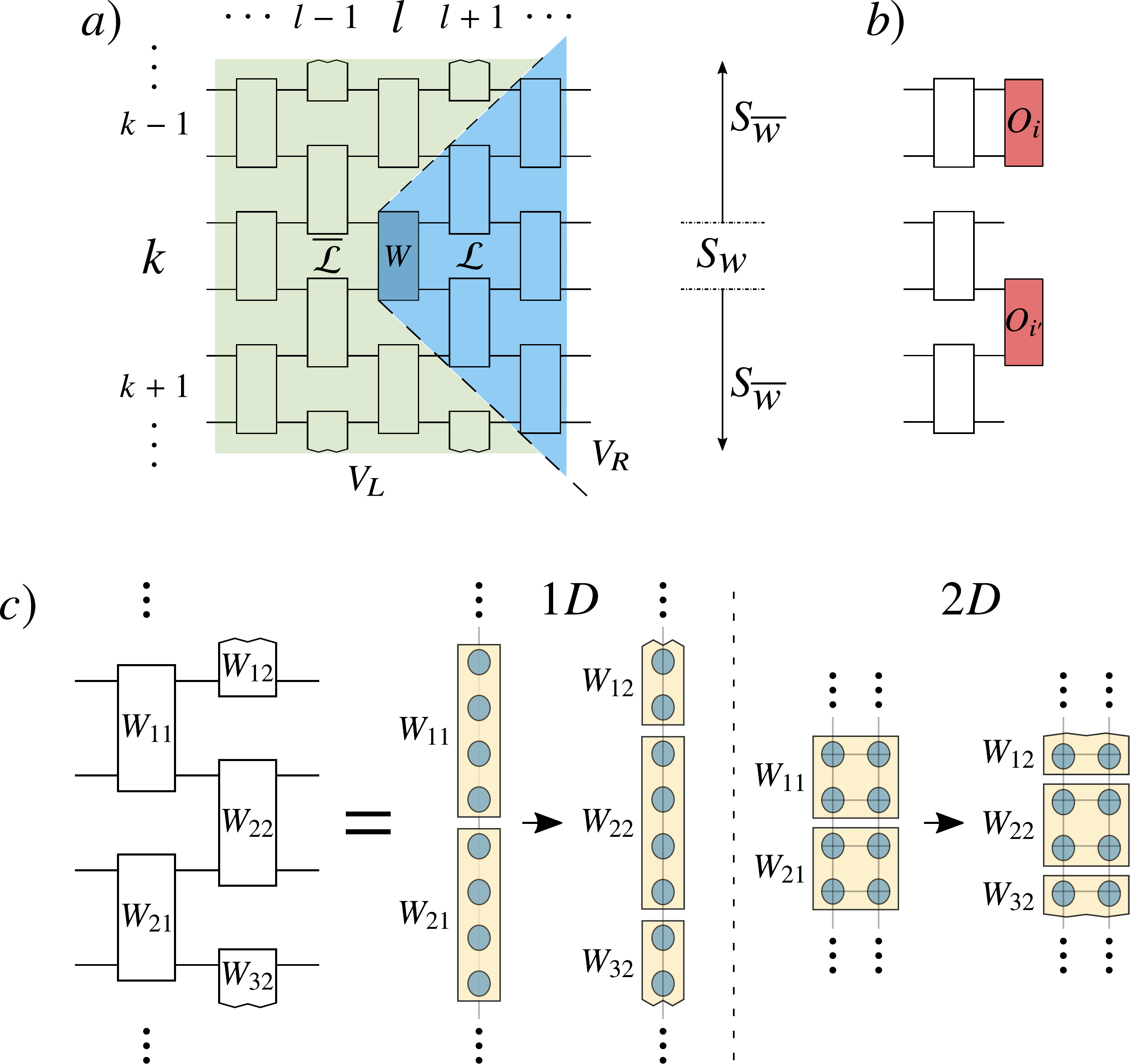}
    \caption{Alternating Layered Ansatz.   a) Each block $W_{kl}$ acts on $m$ qubits and is parametrized via \eqref{eq;W-gates}. As shown, we define  $S_{k}$ as  the $m$-qubit subsystem on which $W_{kL}$ acts, where $L$ is the last layer of $V(\thv)$. Given some block $W$, it is useful for our proofs (outlined in the Methods) to write $V(\vec{\theta})=V_{\text{R}} (\id_{\overline{w}}\otimes W)V_{\text{L}}$, where $V_{\text{R}}$ contains all gates in the forward light-cone $\LC$ of $W$. The forward light-cone $\LC$ is defined as all gates with at least one input qubit causally connected to the output qubits of $W$.   We define $\overline{\mathcal{L}}$ as the compliment of $\mathcal{L}$, $S_w$ as the $m$-qubit subsystem on which $W$ acts, and $S_{\overline{w}}$ as the $n-m$ qubit subsystem on which $W$ acts trivially. b) The operator $O_i$ acts non-trivially only in  subsystem $S_{k-1}\in\mathcal{S}$, while $O_{i'}$ acts non-trivially on the first $m/2$ qubits of $S_{k+1}$, and on the second $m/2$ qubits of $S_{k}$.  c) Depiction of the Alternating Layered Ansatz with one- and two-dimensional connectivity. Each circle represents a physical qubit.
    }
    \label{fig:f3}
\end{figure}

%\subsubsection{ Gradient of the cost function}\label{subsubsec:variance}

The contribution to the gradient $\nabla C$ from a parameter $ \theta^\nu$ in the block $W$ is given by the partial derivative $ \partial_\nu C$. While the value of $\partial_\nu C$ depends on the specific parameters $\vec{\theta}$, it is useful to compute $\langle\partial_\nu C\rangle_V$, i.e., the average gradient over all possible unitaries $V(\vec{\theta})$ within the ansatz. Such an average may not be representative near the minimum of $C$, although it does provide a good estimate of the expected gradient when randomly initializing the angles in $V(\vec{\theta})$. In the Methods Section we explicitly show how to compute averages of the form $\langle\dots\rangle_V$, and in the Supplementary Note 3 we provide a proof for the following Proposition.

\begin{proposition}
\label{prop2}
The average of the partial derivative of any cost function of the form~\eqref{eq:cost_function} with respect to a parameter $\theta^\nu$ in a block $W$ of the ansatz in Fig.~\ref{fig:f3} is
\begin{equation}
    \langle\partial_\nu C\rangle_V=0\,,
\end{equation}
provided that either $W_{\text{A}}$ or $W_{\text{B}}$ of ~\eqref{eq:newWaWb} form a $1$-design.
\end{proposition}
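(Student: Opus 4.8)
The plan is to differentiate $C$ directly, integrate over the randomness of the block $W$, and show that the two terms generated by the product rule cancel. First I would use the light-cone decomposition $V(\thv)=V_{\text{R}}(\id_{\overline{w}}\otimes W)V_{\text{L}}$ of Fig.~\ref{fig:f3}(a) and absorb the factors that do not depend on $\theta^\nu$ into an effective observable and state,
\begin{equation}
\widetilde{O}:=V_{\text{R}}\ad\, O\, V_{\text{R}},\qquad \widetilde{\rho}:=V_{\text{L}}\,\rho\, V_{\text{L}}\ad,
\end{equation}
so that $C=\Tr[\widetilde{O}(\id_{\overline{w}}\otimes W)\widetilde{\rho}(\id_{\overline{w}}\otimes W\ad)]$. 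Writing $\id$ for $\id_{\overline{w}}$, applying the product rule, and substituting $\partial_\nu W=\frac{-i}{2}W_{\text{A}}\sigma_\nu W_{\text{B}}$ together with $W=W_{\text{A}}W_{\text{B}}$ (as defined in the Methods), I get
\begin{equation}
\partial_\nu C=\frac{-i}{2}\,T_1+\frac{i}{2}\,T_2,
\end{equation}
where $T_1=\Tr[\widetilde{O}(\id\otimes W_{\text{A}}\sigma_\nu W_{\text{B}})\widetilde{\rho}(\id\otimes W_{\text{B}}\ad W_{\text{A}}\ad)]$ and $T_2=\Tr[\widetilde{O}(\id\otimes W_{\text{A}}W_{\text{B}})\widetilde{\rho}(\id\otimes W_{\text{B}}\ad\sigma_\nu W_{\text{A}}\ad)]$.

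Assume without loss of generality that $W_{\text{A}}$ forms a $1$-design; the case in which $W_{\text{B}}$ is the design is the mirror image of what follows. The key step is to use cyclicity of the trace to bring the lone factors $W_{\text{A}}$ and $W_{\text{A}}\ad$ into position flanking $\widetilde{O}$ on the block subsystem $S_w$, e.g.
\begin{equation}
T_1=\Tr\left[(\id\otimes W_{\text{B}}\ad)(\id\otimes W_{\text{A}}\ad)\widetilde{O}(\id\otimes W_{\text{A}})(\id\otimes\sigma_\nu W_{\text{B}})\widetilde{\rho}\right].
\end{equation}
I would then invoke the first-moment identity of a $1$-design, $\langle W_{\text{A}}\ad Q\,W_{\text{A}}\rangle_{W_{\text{A}}}=\Tr[Q]\,\id_{S_w}/d_w$ with $d_w=2^m$, which after decomposing $\widetilde{O}$ across $S_{\overline{w}}\otimes S_w$ collapses the $W_{\text{A}}$-sandwich to a term proportional to the identity on the block,
\begin{equation}
\big\langle(\id\otimes W_{\text{A}}\ad)\widetilde{O}(\id\otimes W_{\text{A}})\big\rangle_{W_{\text{A}}}=\frac{1}{d_w}\,\Tr_{S_w}[\widetilde{O}]\otimes\id_{S_w}.
\end{equation}

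Carrying this through, both $T_1$ and $T_2$ reduce to the identical expression $\frac{1}{d_w}\Tr[(\Tr_{S_w}[\widetilde{O}]\otimes W_{\text{B}}\ad\sigma_\nu W_{\text{B}})\widetilde{\rho}]$: once the $W_{\text{A}}$-sandwich becomes $\id_{S_w}$, the surviving $\sigma_\nu$ appears in the combination $W_{\text{B}}\ad\sigma_\nu W_{\text{B}}$ in both terms. Since they enter $\partial_\nu C$ with opposite prefactors $\mp i/2$, they cancel, giving $\langle\partial_\nu C\rangle_{W_{\text{A}}}=0$. Because this partial average already vanishes for every value of the remaining blocks, averaging over the rest of $V(\thv)$ preserves the zero and yields $\langle\partial_\nu C\rangle_V=0$. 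If instead $W_{\text{B}}$ is the $1$-design, the same argument applies after rearranging so that $W_{\text{B}},W_{\text{B}}\ad$ flank $\widetilde{\rho}$ and using $\langle W_{\text{B}}\,\widetilde{\rho}\,W_{\text{B}}\ad\rangle_{W_{\text{B}}}\propto\id_{S_w}$ on the block.

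I expect the main obstacle to be organizational rather than conceptual: the single pair $W_{\text{A}},W_{\text{A}}\ad$ is separated by both $\widetilde{O}$ and $\widetilde{\rho}$ in the naive ordering, so care is needed to cyclically rearrange each trace so that the pair genuinely flanks one operator supported on $S_w$, and to track the partial trace $\Tr_{S_w}$ when $\widetilde{O}$ acts nontrivially outside the block. The conceptual crux is simply the observation that the two product-rule terms become equal after the $1$-design average and hence cancel. One should also note that $\langle\cdot\rangle_{W_{\text{A}}}$ is a genuine marginal of the full average $\langle\cdot\rangle_V$, since the gates composing $W_{\text{A}}$ are parametrized independently of the other blocks, which is what justifies integrating over $W_{\text{A}}$ alone.
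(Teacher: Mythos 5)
Your proof is correct and follows essentially the same route as the paper's: both hinge on the first-moment identity $\langle (\id_{\overline{w}}\otimes W) A (\id_{\overline{w}}\otimes W\ad)\rangle_{W}=\tfrac{1}{2^m}\Tr_w[A]\otimes\id_w$ applied to whichever of $W_{\text{A}}$, $W_{\text{B}}$ forms the $1$-design, together with the independence of the blocks to justify taking that marginal average first. The only cosmetic difference is that the paper packages the two product-rule terms into a commutator $[\id_{\overline{w}}\otimes\sigma_\nu,\;\cdot\;]$ and notes it vanishes once its second argument becomes proportional to $\id_w$ on the block, whereas you keep the two terms separate and observe that they become identical after averaging and cancel against their opposite prefactors $\mp i/2$ --- logically the same cancellation.
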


Here we recall that a $t$-design is an ensemble of unitaries, such that sampling over their distribution yields the same properties as sampling random unitaries from the unitary group with respect to the Haar measure up to the first $t$ moments~\cite{dankert2009exact}. The Methods section provides a formal definition of a $t$-design.

Proposition~\ref{prop2} states that the gradient is not biased in any particular direction. To analyze the trainability of $C$, we  consider the second moment of its partial derivatives: 
\begin{equation}
    \Var[\partial_\nu C]= \left\langle\left( \partial_\nu C\right)^2\right\rangle_V\,,\label{eq:variance}
\end{equation}
where we used the fact that $\langle\partial_\nu C\rangle_V=0$. The magnitude of $\Var[\partial_\nu C]$ quantifies how much the partial derivative concentrates around zero, and hence small values in~\eqref{eq:variance} imply that the slope of the landscape will typically be insufficient to provide a cost-minimizing direction. Specifically, from Chebyshev's inequality, $\Var[\partial_\nu C]$ bounds the probability that the cost-function partial derivative deviates from its mean value (of zero) as $\Pr \left(|\partial_\nu C|\geq c \right) \leq \Var[\partial_\nu C]/c^2$ for all $c>0$.

\subsection*{Main results}

Here we present our main theorems and corollaries, with the proofs sketched in the Methods and detailed in the Supplementary Information. In addition, in the Methods section we provide some intuition behind our main results by  analyzing a generalization of the warm-up example where $V(\thv)$ is composed of a single layer  of  the ansatz in Fig.~\ref{fig:f3}. This case bridges the gap between the warm-up example and our main theorems and also showcases the tools used to derive our main result.

The following theorem provides an upper bound on the variance of the partial derivative of a global cost function which can be expressed as the expectation value of an operator of the form
\begin{equation}
O=c_0\id+\sum_{i=1}^{N} c_i \widehat{O}_{i1}\otimes \widehat{O}_{i2}\otimes \dots\otimes \widehat{O}_{i\xi}\,.\label{eq-global-O}
\end{equation}
Specifically, we consider two cases of interest:  (i) When $N=1$ and each $\widehat{O}_{1k}$ is a non-trivial projector ($\widehat{O}_{1k}^2=\widehat{O}_{1k}\neq\id$) of rank $r_k$ acting on subsystem $S_k$, or (ii) When $N$ is arbitrary and $\widehat{O}_{ik}$ is traceless with  $\Tr[\widehat{O}_{ik}^2]\leq 2^m$ (for example, when $\widehat{O}_{ik}=\bigotimes_{j=1}^m \sigma^\mu_j$ is a tensor product of Pauli operators $\sigma^\mu_j\in\{\id_j,\sigma^x_j,\sigma^y_j,\sigma^z_j\}$, with at least one $\sigma^\mu_j\neq\id$). Note that case (i) includes $C_{\text{G}}$ of~\eqref{eq:Cost-TD} as a special case.

\begin{theorem}\label{thm1}
 Consider a trainable parameter  $\theta^\nu$ in a block $W$ of the ansatz in Fig.~\ref{fig:f3}.  Let $\Var[\partial_\nu C]$ be the variance of the partial derivative of a global cost function $C$ (with $O$ given by~\eqref{eq-global-O}) with respect to $\theta^\nu$.  If $W_{\text{A}}$, $W_{\text{B}}$ of~\eqref{eq:newWaWb}, and each block in $V(\vec{\theta})$ form a local $2$-design, then $\Var[\partial_\nu C]$ is upper bounded by 
\begin{equation}
    \Var[\partial_\nu C]\leq F_{n}(L,l)\,.\label{eq:varMain1} 
\end{equation}
\begin{itemize}
\item[(i)] For $N=1$ and when each $\widehat{O}_{1k}$ is a non-trivial projector, then defining  $R=\prod_{k=1}^{\xi}  r^2_{k}$, we have
\begin{align}
\label{eq:varMain2}
F_{n}(L,l) = \frac{2^{2m+(2m-1)(L-l)}}{(2^{2m}-1)\cdot3^{\frac{n}{m}}\cdot2^{(2-\frac{3}{m})n}}c_1^2 R\,.
\end{align}
\item[(ii)] For arbitrary $N$ and when each $\widehat{O}_{ik}$ satisfies $\Tr[\widehat{O}_{ik}]=0$ and $\Tr[\widehat{O}_{ik}^2]\leq 2^m$, then
\begin{equation}
    F_{n}(L,l)=\frac{2^{2m(L-l+1)+1}}{3^{\frac{2n}{m}}\cdot 2^{\left(3-\frac{4}{m}\right)n}}\sum_{i,j=1}^N c_i c_j \,.
    \label{eq:globalUpperPauliMTsm}
\end{equation}
\end{itemize}
\end{theorem}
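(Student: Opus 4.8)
The plan is to compute $\Var[\partial_\nu C]=\langle(\partial_\nu C)^2\rangle_V$ directly, using Proposition~\ref{prop2} to fix the mean to zero. First I would isolate the block $W$ containing $\theta^\nu$ and invoke the light-cone decomposition $V(\thv)=V_{\text{R}}(\id_{\overline{w}}\ot W)V_{\text{L}}$ of Fig.~\ref{fig:f3}, together with \eqref{eq:newWaWb}, to write the partial derivative as a single commutator,
\[
\partial_\nu C=\frac{i}{2}\Tr\!\left[\rho_{\text{L}}\,\big[\,\sigma_\nu\ot\id_{\overline w}\,,\,O_{\text{R}}\,\big]\right],
\]
where $\rho_{\text{L}}=(\id_{\overline w}\ot W_{\text{B}})V_{\text{L}}\,\rho\,V_{\text{L}}\ad(\id_{\overline w}\ot W_{\text{B}})\ad$ is the state propagated forward to the rotation generated by $\sigma_\nu$, and $O_{\text{R}}=(\id_{\overline w}\ot W_{\text{A}})\ad V_{\text{R}}\ad\,O\,V_{\text{R}}(\id_{\overline w}\ot W_{\text{A}})$ is the observable back-propagated to the same point. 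Squaring this expression places two copies of each unitary inside the average, so the natural tool is the second-moment (2-design) integration formula applied to $W_{\text{A}}$, $W_{\text{B}}$, and every block of $V_{\text{R}}$.

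Next I would carry out the averages in a fixed order. Averaging over $W_{\text{B}}$ (a $2$-design on $S_w$) collapses the dependence on $\rho_{\text{L}}$ to its reduced state on $S_w$ and introduces the $1/(2^{2m}-1)$ prefactor and SWAP structure characteristic of the Weingarten formula; averaging over $W_{\text{A}}$ performs the analogous reduction on $O_{\text{R}}$. The essential simplification is the product structure of the global operator in \eqref{eq-global-O}: since $O$ factorizes across the subsystems $S_k$, the back-propagated $O_{\text{R}}$ stays a tensor product over the blocks inside the light-cone, so the remaining block integrals decouple into a product of single-block averages, each evaluated with the same first- and second-moment identities.

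The core of the argument is then an iterative, layer-by-layer evaluation of these single-block integrals through the forward light-cone, running from layer $L$ back toward layer $l$. In case (i), each $\widehat{O}_{1k}$ is a rank-$r_k$ projector, so $\Tr[\widehat O_{1k}]=\Tr[\widehat O_{1k}^2]=r_k$; the surviving first-moment contributions (each supplying a factor $2^{-m}$ per integrated block) accumulate the product $R=\prod_k r_k^2$ and one power of $2^{2m}-1$, while the number of blocks inside the light-cone produces the $2^{(2m-1)(L-l)}$ growth and the per-block normalizations over all $\xi=n/m$ subsystems assemble the denominators $3^{n/m}2^{(2-3/m)n}$. In case (ii), tracelessness of every $\widehat O_{ik}$ annihilates each first-moment term, leaving only second-moment contributions bounded by $\Tr[\widehat O_{ik}^2]\leq 2^m$; the cross terms of $O^{\ot 2}=\sum_{i,j}c_ic_j(\cdots)$ then generate the $\sum_{i,j}c_ic_j$ prefactor and the modified scaling $3^{2n/m}2^{(3-4/m)n}$.

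I expect the main obstacle to be the combinatorial bookkeeping of this iterated light-cone average: as one integrates block by block, one must track exactly which subsystems the partially back-propagated operator still touches nontrivially and match each block integration to the correct power of $2^m$ and $2^{2m}-1$. Pinning down the count of light-cone blocks (which fixes the $(L-l)$ exponent) and checking that every discarded contribution is non-negative, so that the exact recursion cleanly relaxes to the upper bounds in \eqref{eq:varMain2} and \eqref{eq:globalUpperPauliMTsm}, is where the real effort sits; the individual Haar integrals themselves are routine once the recursion is in place.
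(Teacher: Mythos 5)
Your proposal follows essentially the same route as the paper's proof: square the commutator form of $\partial_\nu C$, integrate $W_{\text{A}}$ and $W_{\text{B}}$ with the second-moment Weingarten formula to get the $(2^{2m}-1)^{-2}$-normalized variance expression, exploit the tensor-product structure of $O$ to decouple the out-of-light-cone block averages, and evaluate the light-cone contribution by iterated block-by-block $2$-design integration, with the projector ranks accumulating $R$ in case (i) and tracelessness killing the first-moment terms in case (ii). You also correctly locate the real work in the combinatorial bookkeeping of the recursion and in verifying that the discarded terms relax the exact expression to an upper bound (in the paper this appears as $\sum_\tau t_\tau\leq 2$, $\Delta O_\tau\leq\prod_k r_k^2$, and the Hilbert--Schmidt distance bound $D_{\HS}\leq 2$), so the plan matches the paper's argument in both structure and substance.
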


From Theorem~\ref{thm1} we  derive the following corollary.
\begin{corollary}\label{cor1}
Consider the function $F_n(L,l)$. 

\begin{itemize}
    \item[(i)] Let $N=1$ and let each $\widehat{O}_{1k}$ be a non-trivial projector, as in case (i) of  Theorem~\ref{thm1}. If $c_1^2 R \in\OC(2^n)$ and if the number of layers $L\in\OC(\poly(\log(n)))$, then
    \begin{equation}
F_{n}\left(L,l\right)\in\OC\left(2^{-\left(1-\frac{1}{m}\log_23\right) n}\right)\,,
\end{equation}
which implies that $\Var[\partial_\nu C]$ is exponentially vanishing in $n$ if $m\geq2$.

\item[(ii)] Let $N$ be arbitrary, and let each $\widehat{O}_{ik}$ satisfy $\Tr[\widehat{O}_{ik}]=0$ and $\Tr[\widehat{O}_{ik}^2]\leq 2^m$, as in case (ii) of  Theorem~\ref{thm1}. If $N\in\OC(2^n)$,  $c_i\in\OC(1)$, and if the number of layers $L\in\OC(\poly(\log(n)))$, then
\begin{equation}
F_{n}\left(L,l\right)\in\OC\left(\frac{1}{2^{\left(1-\frac{1}{m}\right)n}}\right)\,,
\end{equation}
which implies that $\Var[\partial_\nu C]$ is exponentially vanishing in $n$ if $m\geq2$.
\end{itemize}

\end{corollary}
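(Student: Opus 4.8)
The plan is to treat both parts as a direct asymptotic evaluation of the closed-form bound $F_n(L,l)$ furnished by Theorem~\ref{thm1}, so that the entire argument reduces to careful bookkeeping of exponents under the stated hypotheses. Throughout I would exploit that $m$ is a fixed constant that does not scale with $n$, so any factor of the form $2^{am}$ with $a$ fixed is an $n$-independent constant and can be absorbed into $\OC(\cdot)$. Since Theorem~\ref{thm1} gives $\Var[\partial_\nu C]\leq F_n(L,l)$, it suffices to show $F_n(L,l)$ is exponentially small.

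First I would isolate the layer-dependent factor, namely $2^{(2m-1)(L-l)}$ in case~(i) and $2^{2m(L-l+1)}$ in case~(ii). Since $l\geq 1$ gives $L-l\leq L-1$, and $L\in\OC(\poly(\log n))$ with $m$ constant, both exponents are $\OC(\poly(\log n))$; hence these factors are quasi-polynomial, $2^{\OC(\poly(\log n))}$, and are dominated by $2^{\epsilon n}$ for every $\epsilon>0$. They therefore do not affect the exponential rate, and this is the only place the assumption $L\in\OC(\poly(\log n))$ enters. I would then feed in the coefficient bounds: in case~(i) the hypothesis $c_1^2 R\in\OC(2^n)$ supplies a numerator of order $2^n$ directly, while in case~(ii) I would write $\sum_{i,j=1}^N c_i c_j=\bigl(\sum_i c_i\bigr)^2\leq\bigl(\sum_i |c_i|\bigr)^2\leq N^2\max_i c_i^2$, which is $\OC(2^{2n})$ because $N\in\OC(2^n)$ and $c_i\in\OC(1)$.

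With the numerator exponents fixed, the remaining step is to collect all powers of $2$ and $3$ in the denominator and read off the net rate, converting via $3^{x}=2^{x\log_2 3}$. In case~(i) the denominator $3^{n/m}2^{(2-3/m)n}$ contributes exponent $\tfrac{n}{m}\log_2 3+2n-\tfrac{3n}{m}$, which against the $2^n$ numerator gives the true rate $-\bigl(1-\tfrac{1}{m}(3-\log_2 3)\bigr)n$. To obtain the clean exponent quoted in the corollary I would relax this with the elementary inequality $2\log_2 3=\log_2 9>3$ (i.e.\ $9>8$), which gives $3-\log_2 3<\log_2 3$ and hence $F_n(L,l)\in\OC\bigl(2^{-(1-\frac{1}{m}\log_2 3)n}\bigr)$. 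Case~(ii) is analogous: the denominator $3^{2n/m}2^{(3-4/m)n}$ against the $2^{2n}$ numerator, relaxed through $3^{2n/m}\geq 2^{3n/m}$ (again $9\geq 8$), yields $-(1-\tfrac{1}{m})n$. Finally I would read off the sign of each exponent: it is strictly negative when $m>\log_2 3$ in case~(i) and when $m>1$ in case~(ii), so in both cases $m\geq 2$ forces $\Var[\partial_\nu C]$ to vanish exponentially in $n$.

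I do not expect a genuine mathematical obstacle, since the statement is an asymptotic reading of an explicit formula. The two points that demand care are (a) confirming that the layer factor is genuinely sub-exponential rather than merely polynomially bounded, which rests on $m$ being constant together with $L\in\OC(\poly(\log n))$, and (b) selecting the relaxing inequality $9\geq 8$ at the right place so that the slightly messier exact rate collapses to the clean exponents stated in the corollary.
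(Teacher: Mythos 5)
Your proposal is correct and follows essentially the same route as the paper: both arguments reduce to exponent bookkeeping on the explicit formula for $F_n(L,l)$, absorb the quasi-polynomial layer factor $2^{\OC(\poly(\log n))}$ into the linear-in-$n$ slack of size $\tfrac{n}{m}\log_2(9/8)$ created by the relaxation $\log_2 9>3$, and read off the sign of the resulting exponent for $m\geq 2$. The paper merely packages the same $9>8$ inequality slightly differently, by bounding the layer factor upfront as $\OC\bigl((9/8)^{n/m}\bigr)$ rather than relaxing the final rate.
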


Let us now make several important remarks. First, note that part (i) of Corollary~\ref{cor1} includes as a particular example the cost function $C_{\text{G}}$ of~\eqref{eq:Cost-TD}. Second, part (ii) of this corollary also includes as particular examples operators with $N\in\OC(1)$, as well as $N\in\OC(\poly(n))$. Finally, we remark that $F_{n}(L,l)$ becomes trivial when  the number of layers $L$ is $\Omega(\poly(n))$, however, as we discuss below, we can still find that $\Var[\partial_\nu C_{\text{G}}]$ vanishes exponentially in this case.

Our second main theorem shows that barren plateaus can be avoided for shallow circuits by employing local cost functions. Here we consider $m$-local cost functions where each $\widehat{O}_i$ acts non-trivially on at most $m$ qubits and (on these qubits) can be expressed as $\widehat{O}_i=\widehat{O}_i^{\mu_i}\otimes \widehat{O}_{i}^{\mu_i'}$:
\begin{equation}\label{eq:Oml}
    O=c_0\id+\sum_{i=1}^N c_i \widehat{O}_i^{\mu_i}\otimes \widehat{O}_{i}^{\mu_i'} \,,
\end{equation}
where  $\widehat{O}_i^{\mu_i}$ are operators acting on $m/2$ qubits which can be written as a tensor product of Pauli operators. Here, we assume the summation  in Eq.~\eqref{eq:Oml} includes two possible cases as schematically shown in Fig.~\ref{fig:f3}(b): First, when  $\widehat{O}_i^{\mu_i}$ ($\widehat{O}_i^{\mu_i'}$) acts on the first (last) $m/2$ qubits of a given $S_{k}$, and second, when $\widehat{O}_i^{\mu_i}$ ($\widehat{O}_i^{\mu_i'}$) acts on the last (first) $m/2$ qubits of a given $S_{k}$ ($ S_{k+1}$). This type of cost function includes any ultralocal  cost function (i.e., where the $\widehat{O}_i$ are one-body) as in~\eqref{eq:localcostfunction}, and also VQE Hamiltonians with up to $m/2$ neighbor interactions. Then, the following theorem holds.
\begin{theorem}\label{thm2}
Consider a trainable parameter  $\theta^\nu$ in a block $W$ of the ansatz in Fig.~\ref{fig:f3}.  Let $\Var[\partial_\nu C]$ be the variance of the partial derivative of an $m$-local cost function $C$ (with $O$ given by~\eqref{eq:Oml}) with respect to $\theta^\nu$. $W_{\text{A}}$, $W_{\text{B}}$ of~\eqref{eq:newWaWb}, and each block in $V(\vec{\theta})$ form a local $2$-design, then $\Var[\partial_\nu C]$ is lower bounded by 
\begin{equation}
    G_{n}(L,l)\leq\Var[\partial_\nu C]\,,\label{eq:varMaint2}
\end{equation}
with 
\begin{align}\label{eq:varMaint22}
    G_{n}(L,l)&=
    \frac{2^{m(l+1)-1}}{(2^{2m}-1)^2(2^{m}+1)^{L+l}}\notag\\
    &\hspace{8pt}\times\sum_{ i\in i_\LC} \sum_{\substack{(k,k')\in k_{\LC_{\text{B}}} \\ k' \geq k}} c_i^2  \epsilon(\rho_{k,k'}) \epsilon(\widehat{O}_i)\,,
\end{align}
where $i_\LC$ is the set of $i$ indices whose associated operators $\widehat{O}_i$ act on qubits in the forward light-cone $\LC$ of $W$, and $k_{\LC_{\text{B}}}$ is the set of $k$ indices whose associated subsystems $S_k$ are in the backward light-cone $\LC_{\text{B}}$ of $W$. Here we defined the function $\epsilon(M) = D_{\HS}\left(M,\Tr(M)\id / d_M\right)$ where $D_{\HS}$ is the Hilbert-Schmidt distance and $d_M$ is the dimension of the matrix $M$. In addition, $\rho_{k,k'}$ is the partial trace of the input state $\rho$ down to the subsystems $S_k S_{k+1}... S_{k'}$.
\end{theorem}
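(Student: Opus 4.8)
The plan is to evaluate the variance directly as the second moment $\Var[\partial_\nu C]=\langle(\partial_\nu C)^2\rangle_V$, which is legitimate because Proposition~\ref{prop2} gives $\langle\partial_\nu C\rangle_V=0$. First I would insert the light-cone decomposition $V(\thv)=V_{\text{R}}(\id_{\overline{w}}\ot W)V_{\text{L}}$ from Fig.~\ref{fig:f3} together with the identity $\partial_\nu W=\frac{-i}{2}W_{\text{A}}\sigma_\nu W_{\text{B}}$, absorbing the outer circuits into an effective operator $M=V_{\text{R}}\ad O V_{\text{R}}$ and an effective state $\Omega=V_{\text{L}}\rho V_{\text{L}}\ad$. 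Using cyclicity of the trace and Hermiticity of $\sigma_\nu$, the derivative collapses to a single commutator,
\begin{equation}
\partial_\nu C=\frac{-i}{2}\Tr\!\left[\,\widetilde M\,\big[\id_{\overline{w}}\ot\sigma_\nu\,,\,\widetilde\Omega\,\big]\right],\notag
\end{equation}
where $\widetilde M=(\id_{\overline{w}}\ot W_{\text{A}})\ad M(\id_{\overline{w}}\ot W_{\text{A}})$ and $\widetilde\Omega=(\id_{\overline{w}}\ot W_{\text{B}})\Omega(\id_{\overline{w}}\ot W_{\text{B}})\ad$. This form cleanly separates the operator side ($M$, $W_{\text{A}}$, forward cone) from the state side ($\Omega$, $W_{\text{B}}$, backward cone).

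The core of the argument is then the moment computation. I would square the commutator, producing a product of two traces, and average over every block using the local $2$-design assumption, i.e.\ the Haar second-moment (Weingarten) formulas already developed in the Methods. The integrations are carried out from the circuit boundary inward toward $W$. Blocks lying outside both light cones average to identity/trace factors and decouple. Averaging the forward-cone blocks twirls $M$: by the $2$-design property each $\widehat O_i$ is effectively replaced by its traceless part, whose weight is quantified by $\epsilon(\widehat O_i)$, and only operators supported in $\LC$ (the set $i_\LC$) survive. Symmetrically, averaging the backward-cone blocks localizes $\Omega$ onto the reduced input states $\rho_{k,k'}$ on contiguous subsystems $S_k\dots S_{k'}$, producing the factors $\epsilon(\rho_{k,k'})$ and restricting $(k,k')$ to $k_{\LC_{\text{B}}}$ with $k'\geq k$.

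Finally I would track the scalar prefactors generated at each step. Every block integrated inside the two light cones contributes a fixed rational factor built from $2^m$, $2^{2m}-1$, and $2^m+1$; multiplying these over the $\sim L+l$ blocks spanned by the combined cones telescopes into $2^{m(l+1)-1}/[(2^{2m}-1)^2(2^m+1)^{L+l}]$, and summing the surviving non-negative contributions over $i\in i_\LC$ and over the ranges $(k,k')\in k_{\LC_{\text{B}}}$ reproduces $G_n(L,l)$ in Eq.~\eqref{eq:varMaint22}. The recursion that evaluates the average of one block, feeding its output as the input for the next layer, is the workhorse that makes the telescoping explicit.

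The step I expect to be the main obstacle is controlling \emph{signs} so as to obtain a genuine \emph{lower} bound rather than a mere estimate. The Weingarten second-moment formula carries negative cross terms (the $-1/[d(d^2-1)]$ contributions), so, unlike the upper bound of Theorem~\ref{thm1} where one may bound each term by its absolute value, here one must show that the layer-by-layer recursion preserves a manifestly non-negative structure, so that discarding terms still leaves a valid lower bound. Establishing that the retained contributions are exactly the products $\epsilon(\rho_{k,k'})\epsilon(\widehat O_i)$ with the correct $(2^m+1)^{L+l}$ scaling — and handling both the asymmetry that only one of $W_{\text{A}}$, $W_{\text{B}}$ need form a $2$-design and the precise light-cone combinatorics — is the delicate bookkeeping at the heart of the proof.
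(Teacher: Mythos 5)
Your overall architecture is the same as the paper's: since $\langle\partial_\nu C\rangle_V=0$ the variance is the second moment, one writes $\partial_\nu C$ as a commutator sandwiched between $V_{\text{L}}$ and $V_{\text{R}}$, and one integrates block by block with the Weingarten second-moment formula so that the forward-cone average produces $\epsilon(\widehat{O}_i)$ and the backward-cone average produces $\epsilon(\rho_{k,k'})$. However, the two steps that carry most of the weight of the proof are left unresolved. The first is the decoupling of the operator side from the state side, which is not automatic. After integrating $W_{\text{A}}$, $W_{\text{B}}$ and the forward-cone blocks, what remains is a sum over bitstrings $\vec{p},\vec{q},\vec{p}',\vec{q}'$ of Kronecker-delta patterns (indexed by subsystems $S_\tau$ inside the light cone) multiplying $\langle\Delta\Psi_{\vec{p}\vec{q}}^{\vec{p}'\vec{q}'}\rangle_{V_{\text{L}}}$; each pattern resums to a Hilbert--Schmidt distance $D_{\HS}\bigl(\tilde\rho_{w\tau},\tilde\rho_{\tau}\otimes\id/2^m\bigr)$ on a \emph{different} subsystem. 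To factorize the bound into $\epsilon(\widehat{O}_i)\,\epsilon(\rho_{k,k'})$ the paper must lower-bound every one of these by $D_{\HS}\bigl(\tilde\rho_w,\id/2^m\bigr)/2^{m(L-l+2)/2}$, using the norm equivalence $D_{\HS}\geq 4D_{T}^2/(2^m d_\tau)$ together with monotonicity of the trace distance under partial trace. This inequality chain supplies part of the prefactor and is absent from your plan; without it the sum over $\tau$ does not ``telescope'' into the stated form.

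The second unresolved step is the sign control you yourself flag. The mechanism is twofold. (i) The cross terms $i\neq j$ must be shown to vanish identically: for operators with non-overlapping support this is an exact cancellation of the Weingarten contractions, and for the allowed $m/2$-qubit overlaps it follows because the $\widehat{O}_i^{\mu_i}$ are traceless Pauli products; only then does the double sum collapse to a single sum of manifestly non-negative terms $c_i^2\,\epsilon(\widehat{O}_i)(\cdots)$ from which terms may be discarded. (ii) The surviving coefficients must be shown non-negative and bounded below, $\sum_\tau \hat t_\tau^{ii}\geq 2^{m(L-l)/2}/(2^m+1)^{L-l}$, which the paper establishes by comparing the extremal placements of $\widehat{O}_i$ within the forward light cone (the coefficient is not a uniform per-block factor, contrary to your telescoping picture), and similarly $t_{k,k'}\geq 2^{ml}/(2^m+1)^{2l}$ on the backward side. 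Without (i) a lower bound cannot be obtained at all, and without (ii) the claimed $(2^m+1)^{L+l}$ scaling is not justified.
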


Let us make a few remarks. First, note that the $\epsilon(\widehat{O}_i)$ in the lower bound indicates that training $V(\vec{\theta})$ is easier when $\widehat{O}_i$ is far from the identity. Second, the presence of $\epsilon(\rho_{k,k'})$ in $G_{n}(L,l)$ implies that we have no guarantee on the trainability of a parameter $\theta^\nu$ in $W$ if $\rho$ is maximally mixed on the qubits in the backwards light-cone.

From Theorem~\ref{thm2} we derive the following corollary for $m$-local cost functions, which guarantees the trainability of the ansatz for shallow circuits.
\begin{corollary}\label{cor2}
Consider the function $F_n(L,l)$. Let $O$ be an operator of the form~\eqref{eq:Oml}, as in Theorem~\ref{thm2}. If at least one term $c_i^2  \epsilon(\rho_{k,k'}) \epsilon(\widehat{O}_i)$  in the sum in \eqref{eq:varMaint22} vanishes no faster than $\Omega(1/\poly(n))$, and if the number of layers $L$ is $\OC(\log(n))$, then 
\small
\begin{equation}
    G_n(L,l) \in \Omega\left(\frac{1}{\poly(n)}\right)\,.\label{eq:varGlobalCor2}
\end{equation}
\normalsize
On the other hand, if at  least one term $c_i^2  \epsilon(\rho_{k,k'}) \epsilon(\widehat{O}_i)$ in the sum in~\eqref{eq:varMaint22} vanishes no faster than  $\Omega\left(1/2^{\poly(\log(n))}\right)$, and if the number of layers is $\OC(\poly(\log(n)))$, then 
\begin{equation}
    G_n(L,l) \in \Omega\left(\frac{1}{2^{\poly(\log(n))}}\right)\,.\label{eq:varGlobalCor3}
\end{equation}
\end{corollary}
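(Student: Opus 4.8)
The plan is to treat Corollary~\ref{cor2} as a purely asymptotic analysis of the explicit lower bound $G_n(L,l)$ supplied by Theorem~\ref{thm2}, separating the contribution of the scalar prefactor from that of the summand. The first step is to observe that every term in the double sum of~\eqref{eq:varMaint22} is nonnegative: each $c_i^2\geq 0$, and since $\epsilon(M)=D_{\HS}(M,\Tr(M)\id/d_M)$ is a Hilbert--Schmidt distance we have $\epsilon(\rho_{k,k'})\geq 0$ and $\epsilon(\widehat{O}_i)\geq 0$; the prefactor $P:=2^{m(l+1)-1}/[(2^{2m}-1)^2(2^m+1)^{L+l}]$ is strictly positive. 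I would therefore discard all but one term, retaining only the single index pair $(i,(k,k'))$ that the hypothesis guarantees vanishes no faster than $\Omega(1/\poly(n))$ (respectively $\Omega(1/2^{\poly(\log n)})$), leaving $G_n(L,l)\geq P\, c_i^2\,\epsilon(\rho_{k,k'})\,\epsilon(\widehat{O}_i)$.

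The second step is to lower bound $P$ in terms of $L$ alone, using that $m$ is an $\OC(1)$ constant that does not scale with $n$ and that $l\leq L$. Bounding $(2^{2m}-1)^2\leq 2^{4m}$ and $(2^m+1)^{L+l}\leq 2^{(m+1)(L+l)}$ in the denominator and cancelling the $2^{ml}$ factor against the numerator yields $P\geq 2^{-l-3m-1-(m+1)L}$; substituting $l\leq L$ then gives $P\geq 2^{-3m-1}\,2^{-(m+2)L}$. Because $m$ is fixed, the leading factor $2^{-3m-1}$ is an $n$-independent constant, so $P$ decays at worst exponentially in the number of layers $L$ with a base independent of $n$.

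The third step is to insert the two depth regimes and combine. When $L\in\OC(\log n)$, writing $L\leq a\log_2 n$ gives $2^{-(m+2)L}\geq n^{-(m+2)a}\in\Omega(1/\poly(n))$, hence $P\in\Omega(1/\poly(n))$; multiplying by the retained summand, which is $\Omega(1/\poly(n))$ by hypothesis, and using that a product of two $1/\poly(n)$ quantities is again $1/\poly(n)$, establishes~\eqref{eq:varGlobalCor2}. When instead $L\in\OC(\poly(\log n))$, writing $L\leq b(\log_2 n)^p$ gives $2^{-(m+2)L}\geq 2^{-(m+2)b(\log_2 n)^p}\in\Omega(1/2^{\poly(\log n)})$; multiplying by the retained summand (now $\Omega(1/2^{\poly(\log n)})$ by hypothesis) and noting that a sum of two $\poly(\log n)$ exponents remains $\poly(\log n)$ establishes~\eqref{eq:varGlobalCor3}.

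I expect the substance to lie entirely in the bookkeeping of the second step rather than in any conceptual difficulty. The two points requiring care are that the $l$-dependence of $P$ must be absorbed purely through $l\leq L$, so that the final estimate depends on the single parameter $L$, and that holding $m$ fixed is precisely what converts all $m$-dependent factors into $n$-independent constants — were $m$ permitted to grow with $n$, the base $2^{m+2}$ of the exponential would destroy the $\poly(n)$ scaling. No estimate of the magnitude of the surviving summand is needed beyond its assumed scaling, since discarding the remaining nonnegative terms can only weaken the bound in the safe direction.
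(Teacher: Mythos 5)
Your proposal is correct and follows essentially the same route as the paper's own proof: drop all but the one summand guaranteed by hypothesis (valid since every term is nonnegative), observe that the prefactor containing $(2^m+1)^{-(L+l)}$ is itself $\Omega(1/\poly(n))$ (resp. $\Omega(1/2^{\poly(\log n)})$) because $m$ is constant and $l\leq L$, and close with the fact that the product of two such quantities retains the scaling. Your version merely makes explicit the bookkeeping ($P\geq 2^{-3m-1}2^{-(m+2)L}$) that the paper leaves implicit.
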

Hence, when $L$ is $\OC(\poly(\log(n)))$  there is a transition region where the lower bound vanishes faster than polynomially, but slower than exponentially.

We finally justify the assumption of each block being a local $2$-design from the fact that shallow circuit depths lead to such local $2$-designs. Namely, it has been shown that one-dimensional $2$-designs have efficient quantum circuit descriptions, requiring  $\OC(m^2)$ gates to be exactly implemented~\cite{dankert2009exact}, or $\OC(m)$ to be approximately implemented~\cite{brandao2016local,harrow2018approximate}. Hence, an $L$-layered ansatz in which each block forms a $2$-design can be exactly implemented with a depth $D\in\OC(m^2 L)$, and approximately implemented with $D\in\OC(mL)$.  For the case of  two-dimensional connectivity, it has been shown that approximate $2$-designs require a  circuit depth of $\OC(\sqrt{m})$ to be implemented~\cite{harrow2018approximate}. Therefore, in this case the depth of the layered ansatz is  $D\in\OC(\sqrt{m}L)$. The latter shows that increasing the dimensionality of the circuit reduces the circuit depth needed to make each block a $2$-design.   

Moreover, it has been shown that the Alternating Layered Ansatz of Fig.~\ref{fig:f3} will form an approximate one-dimensional $2$-design on $n$ qubits if the number of layers is $\OC(n)$~\cite{harrow2018approximate}. Hence, for deep circuits, our ansatz behaves like a random circuit and we recover the barren plateau result of~\cite{mcclean2018barren} for both local and global cost functions.

\subsection*{Numerical Simulations}
\label{sec:Numerics}

As an important example to illustrate the cost-function-dependent barren plateau phenomenon, we consider quantum autoencoders~\cite{Romero,wan2017quantum,lamata2018quantum,pepper2019experimental, verdon2018universal}. In particular, the pioneering VQA proposed in Ref.~\cite{Romero} has received significant literature attention, due to its importance to quantum machine learning and quantum data compression. Let us briefly explain the algorithm in Ref.~\cite{Romero}. 

%\subsection*{Quantum autoencoder}
 
Consider a bipartite quantum system $AB$ composed of $n_{\text{A}}$ and $n_{\text{B}}$ qubits, respectively, and let $\{p_\mu, \ket{\psi_\mu}\}$ be an ensemble of pure states on $AB$. The goal of the quantum autoencoder is to train a gate sequence $V(\thv)$ to compress this ensemble into the $A$ subsystem, such that one can recover each state $\ket{\psi_\mu}$ with high fidelity from the information in subsystem $A$. One can think of $B$ as the ``trash'' since it is discarded after the action of $V(\thv)$.

To quantify the degree of  data compression, Ref.~\cite{Romero} proposed a cost function of the form:
\begin{align}\label{eq:CG-AE1}
    C_{\text{G}}' &=1- \Tr[\dya{\vec{0}}\rho_{\text{B}}^{\text{out}}] \\
\label{eq:CG-AE2}   & = \Tr[O_{\text{G}}' V(\thv)\rho_{\text{AB}}^{\text{in}}V(\thv)\ad]\,,
\end{align}
where $\rho_{\text{AB}}^{\text{in}}=\sum_\mu p_\mu \dya{\psi_\mu}$ is the ensemble-average input state, $\rho_{\text{B}}^{\text{out}}=\sum_{\mu}p_\mu\Tr_{\text{A}}[\dya{\psi_\mu'}]$ is the ensemble-average trash state, and $\ket{\psi_\mu '} =V(\thv ) \ket{\psi_\mu}$. Equation~\eqref{eq:CG-AE2} makes it clear that $C_{\text{G}}'$ has the form in \eqref{eq:cost_function}, and $O_{\text{G}}' = \id_{\text{AB}} - \id_{\text{A}} \ot \dya{\vec{0}}$ is a global observable of the form in \eqref{eq-global-O}. Hence, according to Corollary~\ref{cor1}, $C_{\text{G}}'$ exhibits a barren plateau for large $n_{\text{B}}$. (Specifically, Corollary~\ref{cor1} applies in this context when $n_{\text{A}} < n_{\text{B}}$). As a result, large-scale data compression, where one is interested in discarding large numbers of qubits, will not be possible with $C_{\text{G}}'$.

\begin{figure}[t]
	\centering
	\includegraphics[width=.6\columnwidth]{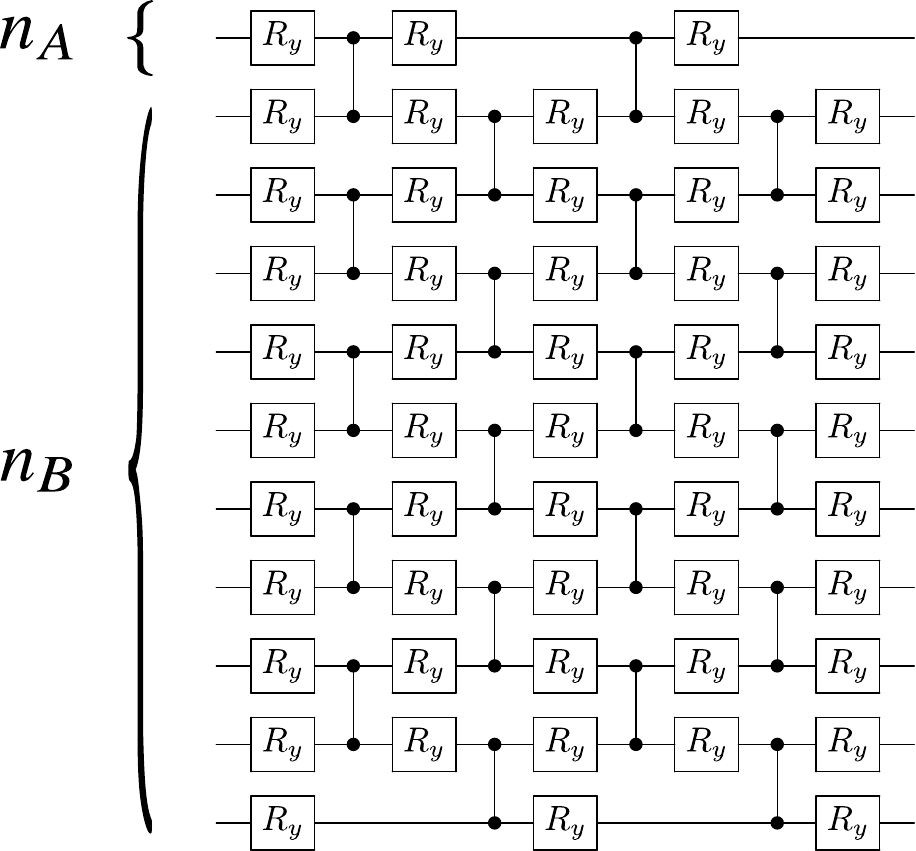}
	\caption{Alternating Layered Ansatz for $V(\vec{\theta})$ employed in our numerical simulations. Each layer is composed of control-$Z$ gates acting on alternating pairs of neighboring qubits which are preceded and followed by single qubit rotations around the $y$-axis, $R_y (\theta_i)= e^{-i\theta_i \sigma_y/2}$.  Shown is the case of two layers, $n_{\text{A}}=1$, and $n_{\text{B}}=10$ qubits. The number of variational parameters and gates scales linearly with $n_{\text{B}}$: for the case shown there are $71$ gates and $51$ parameters. While each block in this ansatz will not form an exact local $2$-design, and hence does not fall under our theorems,  one can still obtain a cost-function-dependent barren plateau. 
	\label{fig:f4}}
\end{figure}

To address this issue, we propose the following local cost function
\begin{align}\label{eq:CL-AE}
    C_{\text{L}}' &= 1- \frac{1}{n_{\text{B}}}\sum_{j=1}^{n_{\text{B}}}\Tr\left[\left(\dya{0}_j\otimes\id_{\overline{j}}\right)\rho_{\text{B}}^{\text{out}}\right] \\
    \label{eq:CL-AE2}   & = \Tr[O_{\text{L}}' V(\thv)\rho_{\text{AB}}^{\text{in}}V(\thv)\ad]\,,
\end{align}
where $O_{\text{L}}' = \id_{\text{AB}} - \frac{1}{n_{\text{B}}}\sum_{j=1}^{n_{\text{B}}} \id_{\text{A}}\ot \dya{0}_j\ot \id_{\overline{j}}$, and $\id_{\overline{j}}$ is the identity on all qubits in $B$ except the $j$-th qubit. As shown in the Supplementary Note 9, $C_{\text{L}}'$ satisfies $C_{\text{L}}'\leq C_{\text{G}}' \leq n_{\text{B}} C_{\text{L}}'$, which implies that $C_{\text{L}}'$ is faithful (vanishing under the same conditions as $C_{\text{G}}'$). Furthermore, note that $O_{\text{L}}'$ has the form in \eqref{eq:Oml}. Hence Corollary~\ref{cor2} implies that $C_{\text{L}}'$ does not exhibit a barren plateau for shallow ansatzes.

Here we simulate the autoencoder algorithm to solve a simple problem where $n_{\text{A}}=1$, and where the input state ensemble $\{p_\mu,\ket{\psi_\mu}\}$ is given by
\begin{align} 
\ket{\psi_1}&= \ket{0}_{\text{A}}\otimes\ket{0,0,0,\ldots,0}_{\text{B}}\,, \quad \text{with} \quad p_1=2/3\,, \label{eq:autoenc1}\\ \ket{\psi_2}&=\ket{1}_{\text{A}}\otimes \ket{1,1,0,\ldots,0}_{\text{B}}\,, \quad \text{with} \quad p_2=1/3\,. \label{eq:autoenc2}
\end{align}
In order to analyze the cost-function-dependent barren plateau phenomenon, the dimension of subsystem $B$ is gradually increased as $n_{\text{B}}=10,15,\ldots,100$.

\begin{figure*}
    \centering
    \includegraphics[width=\textwidth]{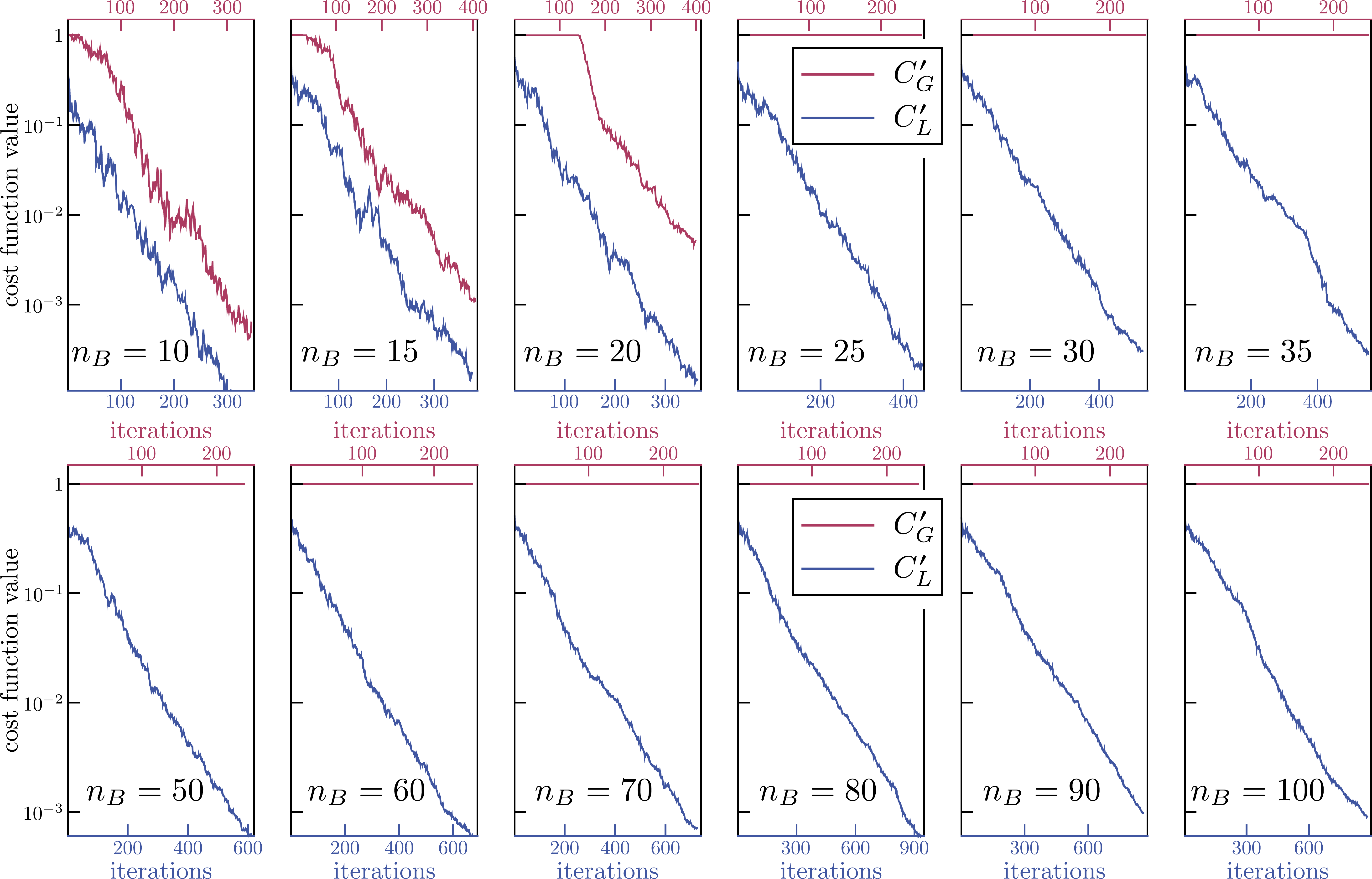}
    \caption{Cost versus number of iterations for the quantum autoencoder problem defined by Eqs.~\eqref{eq:autoenc1}--\eqref{eq:autoenc2}. In all cases we employed two layers of the ansatz shown in Fig.~\ref{fig:f4}, and we set $n_{\text{A}}=1$, while increasing $n_{\text{B}}=10,15,\ldots,100$. The top (bottom) axis corresponds to the global cost function $C_{\text{G}}'$ of Eq.~\eqref{eq:CG-AE2} (local cost function $C_{\text{L}}'$ of~\eqref{eq:CL-AE}). As can be seen, $C_{\text{G}}'$ can be trained up to $n_{\text{B}}=20$ qubits, while $C_{\text{L}}'$ trained in all cases. These results indicate that global cost function presents a barren plateau even for a shallow depth ansatz, and this can be avoided by employing a local cost function.}
    \label{fig:AutoencoderData}
\end{figure*}

\subsection*{Numerical results}

In our heuristics, the gate sequence $V(\thv)$ is given by two layers of the ansatz in Fig.~\ref{fig:f4}, so that  the number of gates and parameters in $V(\thv)$ increases linearly with $n_{\text{B}}$. Note that this ansatz is a simplified version of the ansatz in Fig.~\ref{fig:f3}, as we can only generate unitaries with real coefficients. All parameters in $V(\vec{\theta})$  were randomly initialized and as detailed in the Methods section, we employ a gradient-free training algorithm that gradually increases the number of shots per cost-function evaluation.

Analysis of the $n$-dependence. Figure~\ref{fig:AutoencoderData} shows representative results of our numerical implementations of the quantum autoencoder in Ref.~\cite{Romero} obtained by training $V(\thv)$ with the global and local cost functions respectively given by~\eqref{eq:CG-AE2} and~\eqref{eq:CL-AE}.  Specifically, while we train with finite sampling, in the figures we show the exact cost-function values versus the number of iterations. Here, the top (bottom) axis corresponds to the number of iterations performed while training with $C_{\text{G}}'$ ($C_{\text{L}}'$).
For $n_{\text{B}}=10$ and $15$, Fig.~\ref{fig:AutoencoderData} shows that we are able to train $V(\thv)$ for both cost functions. For $n_{\text{B}}=20$, the global cost function initially presents a plateau in which the optimizing algorithm is not able to determine a minimizing direction. However, as the number of shots per function evaluation increases, one can eventually minimize $C_{\text{G}}'$.  Such result indicates the presence of a barren plateau where the gradient takes small values which can only be detected when the number of shots becomes sufficiently large. In this particular example, one is able to start training at around 140 iterations.

When $n_{\text{B}}> 20$ we are unable to train the global cost function, while always being able to train our proposed local cost function. Note that the number of iterations is different for $C_{\text{G}}'$ and  $C_{\text{L}}'$, as for the global cost function case we reach the maximum number of shots in fewer iterations. These results indicate that the  global cost function of~\eqref{eq:CG-AE2} exhibits a barren plateau where the gradient of the cost function vanishes exponentially with the number of qubits, and which arises even for constant depth ansatzes. We remark that in principle one can always find a minimizing direction when training $C_{\text{G}}'$, although this would require a number of shots that increases exponentially with $n_{\text{B}}$.  Moreover, one can see in Fig.~\ref{fig:AutoencoderData} that randomly initializing the parameters always leads to $C_{\text{G}}'\approx 1$ due to the narrow gorge phenomenon (see Prop.~\ref{prop1}), i.e., where the probability of being near the global minimum vanishes exponentially with $n_{\text{B}}$. 

On the other hand, Fig.~\ref{fig:AutoencoderData} shows that the barren plateau is avoided when employing a local cost function since we can train $C_{\text{L}}'$ for all considered values of $n_{\text{B}}$. Moreover, as seen in Fig.~\ref{fig:AutoencoderData},  $C_{\text{L}}'$  can be trained with a small number of shots per cost-function evaluation  (as small as $10$ shots per evaluation).

Analysis of the $L$-dependence.  The power of Theorem~\ref{thm2} is that it gives the scaling in terms of $L$. While one can substitute a function of $n$ for $L$ as we did in Corollary~\ref{cor2}, one can also directly study the scaling with $L$ (for fixed $n$).  Figure~\ref{fig:L_dependence} shows the dependence on $L$ when training $C_{\text{L}}'$ for the autoencoder example with $n_{\text{A}} =1$ and $n_{\text{B}}=10$. As one can see, the training becomes more difficult as $L$ increases. Specifically, as shown in the inset it appears to become exponentially more difficult, as the number of shots needed to achieve a fixed cost value grows exponentially with $L$. This is consistent with (and hence verifies) our bound on the variance in Theorem~\ref{thm2}, which vanishes exponentially in $L$, although we remark that this behavior can saturate for very large $L$~\cite{mcclean2018barren}.

\begin{figure}[t]
    \centering
    \includegraphics[width=.95\columnwidth]{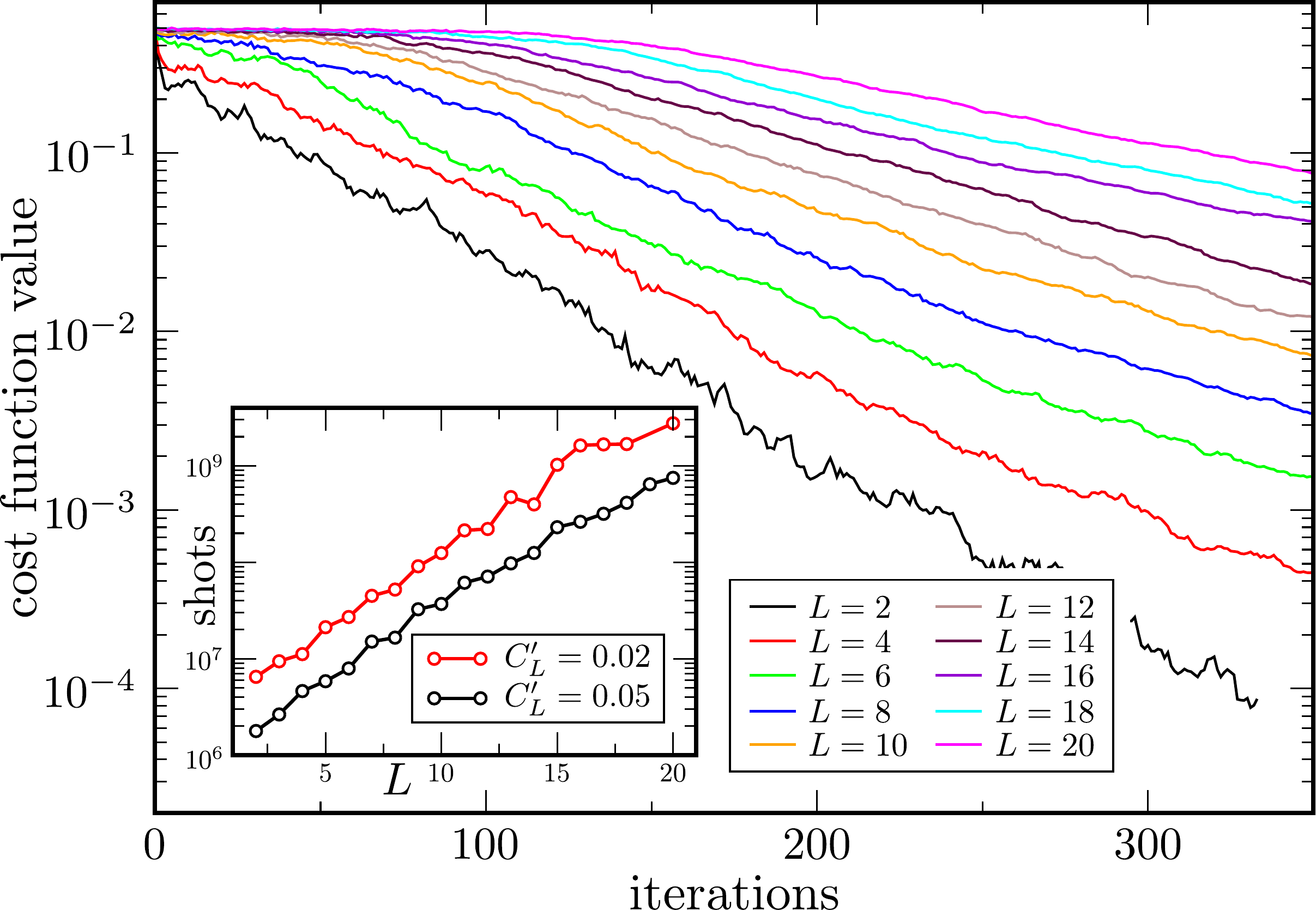}
    \caption{Local cost $C_{\text{L}}'$ versus number of iterations for the quantum autoencoder problem in Eqs.~\eqref{eq:autoenc1}--\eqref{eq:autoenc2} with $n_{\text{B}}=10$. Each  curve  corresponds to a  different number of layers $L$ in the ansatz of Fig.~\ref{fig:f4} with $L=2,\ldots,20$. Curves were averaged over $9$ instances of the autoencoder. As the number of layers increases, the optimization becomes harder. Inset: Number of shots needed to reach cost values of $C_{\text{L}}'=0.02$ and $C_{\text{L}}'=0.05$ versus number of layers $L$.  As $L$ increases the number of shots needed to reach the indicated cost values appears to increase exponentially.} 
    \label{fig:L_dependence}
\end{figure}

 In summary, even though the ansatz employed in our heuristics is beyond the scope of our theorems, we still find  cost-function-dependent barren plateaus, indicating that the cost-function dependent barren plateau phenomenon might be more general and go beyond our analytical results.

\section*{DISCUSSION}
\label{sec:Discussion}

 While scaling results have been obtained for classical neural networks~\cite{pennington2017geometry}, very few such results exist for the trainability of parametrized quantum circuits, and more generally for quantum neural networks. Hence, rigorous scaling results are urgently needed for VQAs, which many researchers believe will provide the path to quantum advantage with near-term quantum computers. One of the few such results is the barren plateau theorem of Ref.~\cite{mcclean2018barren}, which holds for VQAs with deep, hardware-efficient ansatzes.

 In this work, we proved that the barren plateau phenomenon extends to VQAs  with randomly initialized shallow Alternating Layered Ansatzes. The key to extending this phenomenon to shallow circuits was to consider the locality of the operator~$O$ that defines the cost function~$C$. Theorem~\ref{thm1} presented a universal upper bound on the variance of the gradient for global cost functions, i.e., when $O$ is a global operator. Corollary~\ref{cor1} stated the asymptotic scaling of this upper bound for shallow ansatzes as being exponentially decaying in $n$, indicating a barren plateau.  Conversely, Theorem~\ref{thm2} presented a universal lower bound on the variance of the gradient for local cost functions, i.e., when $O$ is a sum of local operators. Corollary~\ref{cor2} notes that for shallow ansatzes this lower bound decays polynomially in $n$. Taken together, these two results show that barren plateaus are cost-function-dependent, and they establish a connection between locality and trainability.

In the context of chemistry or materials science, our present work can inform researchers about which transformation to use when mapping a fermionic Hamiltonian to a spin Hamiltonian~\cite{tranter2018comparison}, i.e., Jordan-Wigner versus Bravyi-Kitaev~\cite{bravyi2002fermionic}. Namely, the Bravyi-Kitaev transformation often leads to more local Pauli terms, and hence (from Corollary~\ref{cor2}) to a more trainable cost function. This fact was recently numerically confirmed~\cite{uvarov2020variational}.

Moreover, the fact that Corollary~\ref{cor2} is valid for arbitrary input quantum states may be useful when constructing variational ansatzes. For example, one could propose a growing ansatz method where one appends $\log(n)$ layers of the hardware-efficient ansatz to a previously trained (hence fixed) circuit. This could then lead to a layer-by-layer training strategy where the previously trained circuit can correspond to multiple layers of the same hardware-efficient ansatz.

We remark that our definition of a global operator (local operator) is one that is both non-local (local) and many body (few body). Therefore, the barren plateau phenomenon could be due to the many-bodiness of the operator rather than the non-locality of the operator; we leave the resolution of this question to future work. On the other hand, our Theorem~\ref{thm1} rules out the possibility that barren plateaus could be due to cardinality, i.e., the number of terms in $O$ when decomposed as a sum of Pauli products~\cite{biamonte2019universal}. Namely, case (ii) of this theorem implies barren plateaus for $O$ of essentially arbitrary cardinality, and hence cardinality is not the key variable at work here.

We illustrated these ideas for two example VQAs. In Fig.~\ref{fig:f2}, we considered a simple state-preparation example, which allowed us to delve deeper into the cost landscape and uncover another phenomenon that we called a narrow gorge, stated precisely in Prop.~\ref{prop1}. In Fig.~\ref{fig:AutoencoderData}, we studied the more important example of quantum autoencoders, which have generated significant interest in the quantum machine learning community. Our numerics showed the effects of barren plateaus: for more than $20$  qubits we were unable to minimize the global cost function introduced in~\cite{Romero}. To address this, we introduced a local cost function for quantum autoencoders, which we were able to minimize for system sizes of up to $100$ qubits.

There are several directions in which our results could be generalized in future work. Naturally, we hope to extend the narrow gorge phenomenon in Prop.~\ref{prop1} to more general VQAs. In addition, we hope in the future to unify our theorems~\ref{thm1} and~\ref{thm2} into a single result that bounds the variance as a function of a parameter that quantifies the locality of $O$. This would further solidify the connection between locality and trainability. Moreover, our numerics suggest that our theorems (which are stated for exact $2$-designs) might be extendable in some form to ansatzes composed of simpler blocks, like approximate $2$-designs~\cite{brandao2016local}.

We emphasize that while our theorems are stated for a hardware-efficient ansatz and for costs that are of the form~\eqref{eq:cost_function}, it remains an interesting open question as to whether other ansatzes, cost function,  and architectures exhibit similar scaling behavior as that stated in our theorems. For instance, we have recently shown~\cite{sharma2020trainability} that our results can be extended to a more general type of Quantum Neural Network called dissipative quantum neural networks~\cite{beer2020training}. Another potential example of interest could be the  unitary-coupled cluster (UCC) ansatz in chemistry~\cite{bartlett2007coupled}, which is intended for use in the $\OC(\poly(n))$ depth regime~\cite{lee2018generalized}. Therefore it is important to study the key mathematical features of an ansatz that might allow one to go from trainability for $\OC(\log n)$ depth (which we guarantee here for local cost functions) to trainability for $\OC(\poly n)$ depth.

Finally, we remark that  some strategies have been developed to mitigate the effects of barren plateaus~\cite{volkoff2020large,verdon2019learning,grant2019initialization,skolik2020layerwise}. While these methods are promising and have been shown to work in certain cases, they  are still heuristic methods with no provable guarantees that they can work in generic scenarios. Hence, we believe that more work needs to be done to better understand how to prevent, avoid, or mitigate the effects of barren plateaus.

\section*{METHODS}
\label{sec:Methods}

In this section, we provide additional details for the results in the main text, as well as a sketch of the proofs for our main theorems. We note that the proof of Theorem~\ref{thm2} comes before that of Theorem~\ref{thm1} since the latter builds on the former. More detailed proofs of our theorems are given in the Supplementary Information.

\subsection*{Variance of the cost function partial derivative} \label{sebsec:Ansatz2}

Let us first discuss the formulas we employed to compute $\Var[\partial_\nu C]$. Let us first note that without loss of generality, any block $W_{kl}(\vec{\theta}_{kl})$ in the Alternating Layered Ansatz  can be written as a product of $\zeta_{kl}$ independent gates from a gate alphabet $\mathcal{A}=\{G_{\mu}(\theta)\}$ as 
\begin{equation}
W_{kl}(\vec{\theta}_{kl})=G_{\zeta_{kl}}(\theta_{kl}^{\zeta_{kl}})\ldots G_{\nu}(\theta^\nu_{kl})\ldots G_{1}(\theta^{1}_{kl})\,,     \label{eq;W-gates}
\end{equation}
where each $\theta_{kl}^\nu$ is a continuous parameter.
Here, $G_\nu(\theta^\nu_{kl})= R_{\nu}(\theta^\nu_{kl}) Q_{\nu}$ where $Q_{\nu}$ is an unparametrized gate and $R_{\nu}(\theta^\nu_{kl}) = e^{-i\theta^\nu_{kl} \sigma_\nu /2 }$ with $\sigma_\nu$ a Pauli operator.  Note that $W_{kL}$ denotes a block in the last layer of $V(\thv)$. 

For the proofs of our results, it is helpful to conceptually break up the ansatz as follows. Consider a block $W_{kl}(\vec{\theta}_{kl})$ in the $l$-th layer of the ansatz. For simplicity we henceforth use  $W$ to refer to a given $W_{kl}(\vec{\theta}_{kl})$. Let $S_w$  denote the $m$-qubit subsystem that contains the qubits $W$ acts on, and let $S_{\overline{w}}$ be the $(n-m)$ subsystem on which $W$ acts trivially. Similarly, let $\HC_w$ and $\HC_{\overline{w}}$ denote the Hilbert spaces associated with $S_w$ and $S_{\overline{w}}$, respectively.  Then, as shown in Fig.~\ref{fig:f3}(a), $V(\vec{\theta})$ can be expressed as
\begin{equation}
    V(\vec{\theta})=V_{\text{R}} (\id_{\overline{w}}\otimes W)V_{\text{L}}\,. \label{eq:Ansatz-W}
\end{equation}
Here, $\id_{\overline{w}}$ is the identity on $\HC_{\overline{w}}$, and $V_{\text{R}}$ contains the gates in the (forward) light-cone $\LC$ of $W$, i.e., all gates with at least one input qubit causally connected to the output qubits of $W$. The latter allows us to define $S_\LC$ as the subsystem of all qubits in $\LC$. 

Let us here recall that the Alternating Layered Ansatz can be implemented with either a 1D or 2D square connectivity as schematically depicted in  Fig.~\ref{fig:f3}(c). We remark that the following results are valid for both cases as the light-cone structure will be the same. Moreover, the notation employed in our proofs applies to both the  1D and 2D cases. Hence, there is no need to refer to the connectivity dimension in what follows. 

Let us now  assume that $\theta^\nu$ is a parameter inside a given block $W$, we obtain from~\eqref{eq:cost_function}, \eqref{eq;W-gates}, and~\eqref{eq:Ansatz-W}
\begin{align}
\partial_\nu C=&\frac{i}{2} \Tr\bigg[(\id_{\overline{w}}\otimes  W_{\text{B}})V_{\text{L}}\rho  V_{\text{L}}\ad (\id_{\overline{w}}\otimes  W_{\text{B}}\ad)\label{eq:grad-C}
\\&\times[\id_{\overline{w}}\otimes\sigma_\nu,(\id_{\overline{w}}\otimes   W_{\text{A}}\ad)V_{\text{R}}\ad O V_{\text{R}}(\id_{\overline{w}}\otimes  W_{\text{A}})]\bigg]\nonumber\,,
\end{align}
with 
\begin{equation}
    W_{\text{B}}=\prod_{\mu =1}^{\nu-1} G_{\mu}(\theta^{\mu})\,, \quad \text{and} \quad  W_{\text{A}}=\prod_{\mu =\nu}^{\zeta} G_{\mu}(\theta^{\mu})\,. \label{eq:WAWB}
\end{equation}

Finally, from~\eqref{eq:grad-C} we can derive a general formula for the variance:
\small
\begin{equation}
 \Var[\partial_\nu C]=\frac{2^{m-1}\Tr[\sigma_\nu^2]}{(2^{2m}-1)^2}\sum_{\substack{\vec{p}\vec{q}\\\vec{p}'\vec{q}'}}\langle\Delta\Omega_{\vec{q}\vec{p}}^{\vec{q}'\vec{p}'}\rangle_{V_{\text{R}}}\langle\Delta\Psi_{\vec{p}\vec{q}}^{\vec{p}'\vec{q}'}\rangle_{V_{\text{L}}}
\label{eq:Var-C}
\end{equation}
\normalsize
which holds if $W_{\text{A}}$ and $W_{\text{B}}$ form independent $2$-designs. Here, the summation runs over all bitstrings $\vec{p}$, $\vec{q}$, $\vec{p}'$, $\vec{q}'$ of length $2^{n-m}$. In addition, we defined 
\begin{align}
        \Delta\Omega_{\vec{q}\vec{p}}^{\vec{q}'\vec{p}'} &= \Tr[ \Omega_{\vec{q}\vec{p}}\Omega_{\vec{q}'\vec{p}'}] -\frac{\Tr[ \Omega_{\vec{q}\vec{p}}]\Tr[\Omega_{\vec{q}'\vec{p}'}]}{2^{m}}\,, \label{eq:Delta-Psi} \\
    \Delta\Psi_{\vec{p}\vec{q}}^{\vec{p}'\vec{q}'} &=  \Tr[\Psi_{\vec{p}\vec{q}}\Psi_{\vec{p}'\vec{q}'}]-\frac{\Tr[\Psi_{\vec{p}\vec{q}}]\Tr[\Psi_{\vec{p}'\vec{q}'}]}{2^{m}}\,,
\end{align}
 where $\Tr_{\overline{w}}$ indicates the trace over subsystem $S_{\overline{w}}$, and  $\Omega_{\vec{q}\vec{p}}$ and $\Psi_{\vec{q}\vec{p}}$ are operators on $\HC_w$ defined as 
\begin{align}
    \Omega_{\vec{q}\vec{p}}&=\Tr_{\overline{w}}\left[(\ketbra{\vec{p}}{\vec{q}}\otimes\id_{w})V_{\text{R}}\ad OV_{\text{R}}\right]\,,\label{eq:omegapq}\\
    \Psi_{\vec{p}\vec{q}}&=\Tr_{\overline{w}}\left[(\ketbra{\vec{q}}{\vec{p}}\otimes\id_{w})V_{\text{L}} \rho V_{\text{L}}\ad\right]\,.
\end{align}
We derive Eq.~\eqref{eq:Var-C} in the Supplementary Note 4.

\subsection*{Computing averages over $V$}
\label{sec:sec-haar}

Here we introduce the main tools employed to compute quantities of the form  $\langle\dots\rangle_V$. These tools are used throughout the proofs of our main results. 

Let us first remark that if the blocks in $V(\vec{\theta})$ are independent, then any  average over $V$ can be computed by averaging over the individual blocks, i.e.,   $\langle\dots\rangle_V=\langle\dots\rangle_{W_{11},\ldots, W_{kl},\ldots}=\langle\dots\rangle_{V_{\text{L}},W,V_{\text{R}}}$. For simplicity let us first consider the expectation value over a single block $W$ in the ansatz. In principle $\langle\dots\rangle_{W}$ can be approximated by varying the parameters in $W$ and sampling over the resulting $2^m\times 2^m$ unitaries. However, if $W$ forms a $t$-design, this procedure can be simplified as it is known that sampling over its distribution yields the same properties as sampling random unitaries from the unitary group with respect to the unique normalized Haar measure. 

Explicitly, the Haar measure is a uniquely defined left and right-invariant measure over the unitary group $d\mu(W)$, such that for any unitary matrix  $A\in U(2^m)$ and for any function $f(W)$ we have
\begin{equation}
\begin{aligned} 
    \int_{U(2^m)}d\mu(W) f(W)&=\int d\mu(W) f(A W)\\
    &=\int d\mu(W) f(W A)\,,
\end{aligned}
\end{equation}
where the integration domain is assumed to be $U(2^m)$ throughout this work. Consider a finite set $\{W_y\}_{y\in Y}$  (of size $|Y|$) of unitaries $W_y$, and let  $P_{(t,t)}(W)$ be an arbitrary polynomial of degree at most $t$ in the matrix elements of $W$ and at most $t$ in those of  $W\ad$. Then, this finite set is a $t$-design if~\cite{dankert2009exact} 
\begin{align}
    \langle P_{(t,t)}(W) \rangle_w&=\frac{1}{|Y|}\cdot \sum_{y\in Y}P_{(t,t)}(W_y)\nonumber\\
    &=\int d\mu(W) P_{(t,t)}(W)\,. \label{eq:t-design}
\end{align}

From the general form of $C$ in Eq.~\eqref{eq:cost_function} we can see the cost function is a polynomial of degree at most $2$ in the matrix elements of each block  $W_{kl}$ in $V(\vec{\theta})$, and at most $2$ in those of  $(W_{{kl}})\ad$. Then, if a given block $W$ forms a $2$-design, one can employ the following elementwise formula of the Weingarten calculus~\cite{collins2006integration,puchala2017symbolic} to explicitly evaluate averages over $W$ up to the second moment:
{\small
\begin{equation}
\begin{aligned}
    \int d\mu(W)w_{\vec{i}\vec{j}}w_{\vec{i}'\vec{j}'}^*&=\frac{\delta_{\vec{i}\vec{i}'}\delta_{\vec{j}\vec{j}'}}{2^m}        \label{eq:Haar2moment}\\
\!\!\int d\mu(W)w_{\vec{i}_1\vec{j}_1}w_{\vec{i}_2\vec{j}_2}w_{\vec{i}_1'\vec{j}_1'}^{*}w_{\vec{i}_2'\vec{j}_2'}^{*}&=\frac{1}{2^{2m}-1}\left(\Delta_1-\frac{\Delta_2}{2^m}\right)
\end{aligned}
\end{equation}}
where $w_{\vec{i}\vec{j}}$ are the matrix elements of $W$, and
\begin{equation}
\begin{aligned}
\Delta_1&=\delta_{\vec{i}_1\vec{i}_1'}\delta_{\vec{i}_2\vec{i}_2'}\delta_{\vec{j}_1\vec{j}_1'}\delta_{\vec{j}_2\vec{j}_2'}+\delta_{\vec{i}_1\vec{i}_2'}\delta_{\vec{i}_2\vec{i}_1'}\delta_{\vec{j}_1\vec{j}_2'}\delta_{\vec{j}_2\vec{j}_1'}\,,\\
\Delta_2&=\delta_{\vec{i}_1\vec{i}_1'}\delta_{\vec{i}_2\vec{i}_2'}\delta_{\vec{j}_1\vec{j}_2'}\delta_{\vec{j}_2\vec{j}_1'}+\delta_{\vec{i}_1\vec{i}_2'}\delta_{\vec{i}_2\vec{i}_1'}\delta_{\vec{j}_1\vec{j}_1'}\delta_{\vec{j}_2\vec{j}_2'}\,.    \label{eq:Haar2moment2}
\end{aligned}
\end{equation}

\begin{figure}[t]
    \centering
    \includegraphics[width=.6\columnwidth]{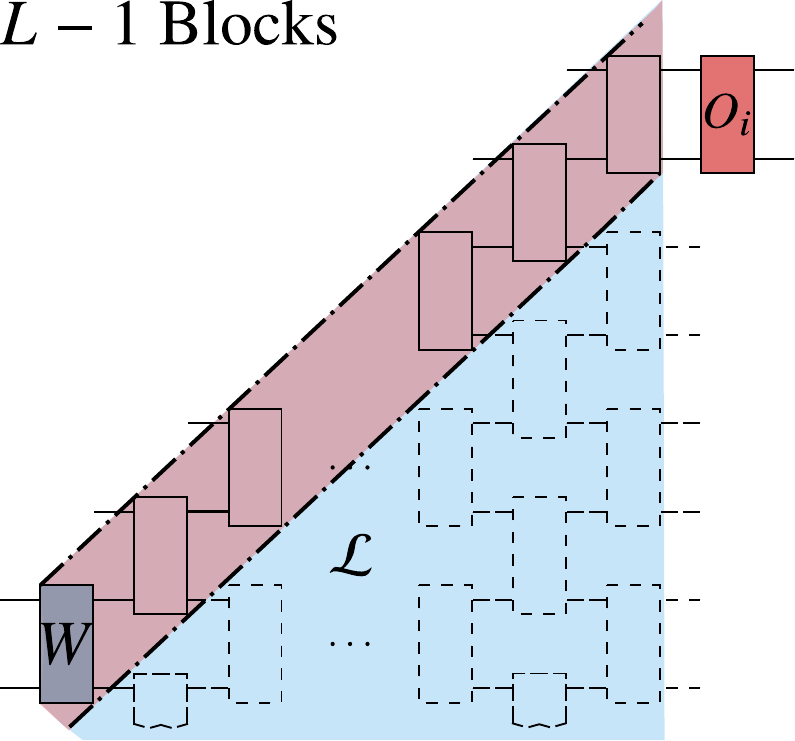}
    \caption{The block $W$ is in the first layer of $V(\thv)$, and the  operator $O_i$ acts on the topmost $m$ qubits in the forward light-cone $\mathcal{L}$ of $W$. Dashed thick lines indicate the backward light-cone of $O_i$.   All but $L-1$ blocks simplify to identity in $\Omega_{\vec{q}\vec{p}}$ of Eq.~\eqref{eq:omegapq}.}
    \label{fig:new}
\end{figure}

\subsection*{Intuition behind the main results}

The goal of this section is to provide some intuition for our main results. Specifically, we show here how the scaling of the cost function variance can be related to the number of blocks we have to integrate to compute $\langle\cdots\rangle_{V_{\text{R}},V_{\text{L}}}$, the locality of the cost functions, and with the number of layers in the ansatz. 

First, we recall from Eq.~\eqref{eq:Haar2moment} that integrating over a block leads to a coefficient of the order $1/2^{2m}$. Hence, we see that the more blocks one integrates over, the worse the scaling can be.

We now generalize the warm-up example. Let $V(\thv)$ be a single  layer of the alternating ansatz of Fig.~\ref{fig:f3}, i.e., $V(\thv)$ is a tensor product of $m$-qubit blocks $W_{k}:=W_{k1}$, with $k=1,\ldots,\xi$ (and with $\xi=n/m$), so that $\theta^\nu$ is in the  block $W_{k'}$. 
In the Supplementary Note 5 we generalize this scenario to the when the input state is not $\ket{\vec{0}}$, but instead is an arbitrary state $\rho$.  

From~\eqref{eq:Var-C}, the partial derivative of the global cost function in~\eqref{eq:Cost-TD} can be expressed as
\small
\begin{align}
 \Var[\partial_\nu C_{\text{G}}]=\upsilon\!\prod_{k\neq k'}\!\left\langle\!\Tr\left[\dya{0}^{\otimes m}W_{k}\dya{0}^{\otimes m}W_{k}\ad\right]^2\right\rangle_{\!W_{k}}\label{eq:globalint}
\end{align}
\normalsize
where $\upsilon=\frac{(2^m-1)^2\Tr[\sigma_\nu^2]}{2^{2m}(2^{m+1}-1)^2}$. From~\eqref{eq:globalint} we have that in order to compute~\eqref{eq:globalint} one needs to integrate over $\xi-1$ blocks. Then, since each integration leads to a coefficient $1/2^{2m}$ the variance will  scale as $\OC(1/(2^{2m})^{\xi-1}=\OC(1/2^{2n})$. Hence, the scaling of the variance gets worse for each block we integrate (such that  the block acts on qubits we are measuring).

On the other hand, for a local cost let us consider a single term in~\eqref{eq:localcostfunction22} where $j\in S_{\tilde{k}}$, so that  
\small
\begin{align}
 \Var[\partial_\nu C_{\text{L}}]\!\propto\!\frac{\upsilon}{n^2}\!\left\langle\!\Tr\left[(\dya{0}_j\otimes\id_{\overline{j}})W_{\tilde{k}}\dya{0}^{\otimes m}W_{\tilde{k}}\ad\right]^2\right\rangle_{\!W_{\tilde{k}}}\,.\label{eq:localint}
\end{align}
\normalsize
Here, in contrast to the global case, we only have to integrate over a single block irrespective of the total number of qubits. Hence, we now find that the variance scales as $\OC(1/n^2)$, where we remark that the scaling is essentially given by the prefactor $1/n^2$ in~\eqref{eq:localcostfunction22}.

Let us now briefly provide some intuition as to why the scaling of local cost gradients becomes exponentially vanishing with the number of layers as in Theorem~\ref{thm2}. Consider the case when  $V(\thv)$ contains $L$ layers of the ansatz in Fig.~\ref{fig:f3}. Moreover, as shown in Fig.~\ref{fig:new}, let  $W$ be in the first layer,  and let  $O_i$ act on the $m$ topmost qubits of $\mathcal{L}$.  As schematically depicted in Fig.~\ref{fig:new}, we now have to integrating over $L-1$ blocks. Then, as proved in the Supplementary Note 5, integrating over a block leads to a coefficient  $2^{m/2}/(2^m+1)$. Hence,  after integrating $L-1$ times,  we obtain a coefficient $2^{m(L-1)/2}/(2^m+1)^{L-1}$ which vanishes no faster than $\Omega\left(1/\poly(n)\right)$ if $mL\in \OC(\log(n))$.

As we discuss below, for more general scenarios the computation of $\Var[\partial_\nu C]$ becomes more complex.

\begin{figure*}[htp!]
    \centering
    \includegraphics[width=.9\linewidth]{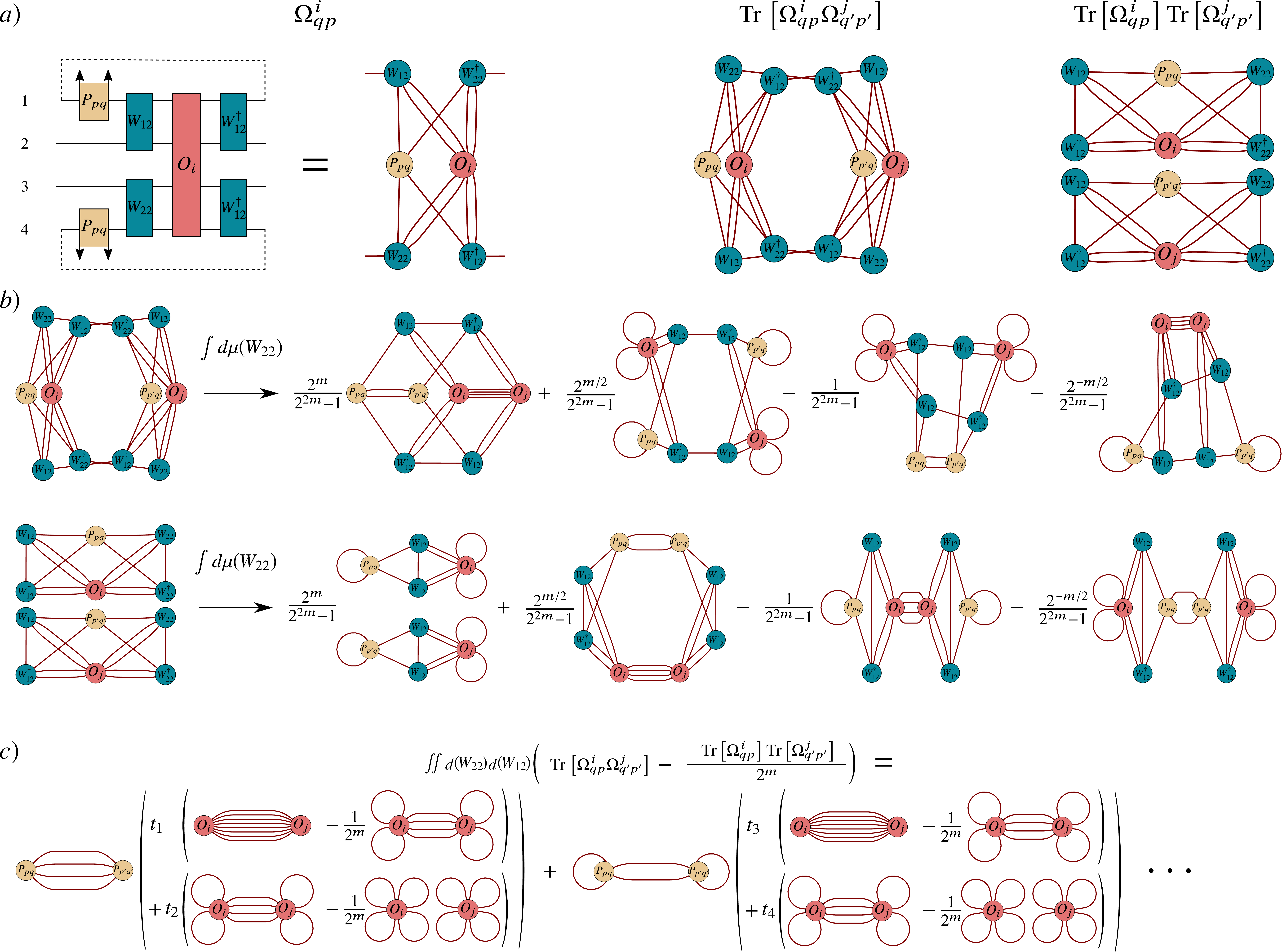}
    \caption{Tensor-network representations of the terms relevant to $\Var[\partial_\nu C]$. a)  Representation of $\Omega_{\vec{q}\vec{p}}^i$ of Eq.~\eqref{eq:omegapq} (left), where the superscript indicates that $O$ is  replaced by $O_i$. In this illustration, we show the case of $n=2m$ qubits, and  we denote $P_{\vec{p}\vec{q}}=\ketbra{\vec{q}}{\vec{p}}$. We also show the representation of $\Tr[ \Omega_{\vec{q}\vec{p}}^i\Omega_{\vec{q}'\vec{p}'}^j]$ (middle) and of  $\Tr[ \Omega_{\vec{q}\vec{p}}^i]\Tr[\Omega_{\vec{q}'\vec{p}'}^j]$ (right). (b) By means of the Weingarten calculus we can algorithmically integrate over each block in the ansatz. After each integration one obtains four new tensors according to Eq.~\eqref{eq:Haar2moment}. Here we show the tensor obtained after the integrations $\int d\mu(W) \Tr[ \Omega_{\vec{q}\vec{p}}^i\Omega_{\vec{q}'\vec{p}'}^j]$ and  $\int d\mu(W) \Tr[ \Omega_{\vec{q}\vec{p}}^i]\Tr[\Omega_{\vec{q}'\vec{p}'}^j]$, which are needed to compute   $\langle\Delta\Omega_{\vec{q}\vec{p}}^{\vec{q}'\vec{p}'}\rangle_{V_{\text{R}}}$ as in Eq.~\eqref{eq:Var-C}. c) As shown in the Supplementary Note 1, the result of the integration is a sum of the form~\eqref{eq:varVLC}, where the deltas over $\vec{p}$,  $\vec{q}$, $\vec{p}'$, and $\vec{q}'$ arise from the contractions between $P_{\vec{p}\vec{q}}$ and $P_{\vec{p}'\vec{q}'}$.}
    \label{fig:f5}
\end{figure*}

\subsection*{Sketch of the proof of the main theorems}

Here we present a sketch of the proof of Theorem~\ref{thm1} and Theorem~\ref{thm2}. We refer the reader to the Supplementary Information for a detailed version of the proofs.

As mentioned in the previous subsection, if each block in $V(\thv)$ forms a local $2$-design, then we can explicitly calculate expectation values $\langle\dots\rangle_{W}$ via~\eqref{eq:Haar2moment}. Hence, to compute  $\langle\Delta\Omega_{\vec{q}\vec{p}}^{\vec{p}'\vec{q}'}\rangle_{V_{\text{R}}}$, and $\langle\Delta\Psi_{\vec{p}\vec{q}}^{\vec{p}'\vec{q}'}\rangle_{V_{\text{L}}}$ in~\eqref{eq:Var-C}, one needs to algorithmically integrate over each block using the Weingarten calculus. 
In order to make such computation tractable, we employ the tensor network representation of quantum circuits.

For the sake of clarity, we recall that any two qubit gate can be expressed as $   U=\sum_{ijkl}U_{ijkl}\ketbra{ij}{kl}$, where $U_{ijkl}$ is a  $2\!\times\!2\!\times\!2\!\times\!2$ tensor. Similarly, any block in the ansatz can be considered as a $2^{\frac{m}{2}}\!\times2^{\frac{m}{2}}\!\times2^{\frac{m}{2}}\!\times2^{\frac{m}{2}}$ tensor. 
As schematically shown in Fig.~\ref{fig:f5}(a), one can use the circuit description of  $\Omega_{\vec{q}\vec{p}}^i$ and $\Psi_{\vec{p}\vec{q}}$ to derive the tensor network representation of terms such as $\Tr[ \Omega_{\vec{q}\vec{p}}^i\Omega_{\vec{q}'\vec{p}'}^j]$. Here, $\Omega_{\vec{q}\vec{p}}^i$ is obtained from~\eqref{eq:omegapq} by simply replacing $O$ with $O_i$. 

In Fig.~\ref{fig:f5}(b)  we depict an example where we employ the tensor network representation of $\Omega_{\vec{q}\vec{p}}^i$  to compute the average of $\Tr[ \Omega_{\vec{q}\vec{p}}^i\Omega_{\vec{q}'\vec{p}'}^j]$, and $\Tr[ \Omega_{\vec{q}\vec{p}}^i]\Tr[\Omega_{\vec{q}'\vec{p}'}^j]$. As expected, after each integration one obtains a sum of four new tensor networks according to Eq.~\eqref{eq:Haar2moment}.

\subsection*{Proof of Theorem~\ref{thm2} }
Let us first consider an $m$-local cost function $C$ where $O$ is given by~\eqref{eq:Oml}, and where $\widehat{O}_{i}$ acts non trivially in a given subsystem $S_k$ of $\mathcal{S}$. In particular, when $\widehat{O}_i$ is of this form the proof is simplified, although the more general proof is presented in the Supplementary Note 6. If $S_{k}\not\subset S_\LC$ we find $\Omega_{\vec{q}\vec{p}}^i\propto \id_w$, and hence 
\begin{equation}
    \Tr[ \Omega_{\vec{q}\vec{p}}^i\Omega_{\vec{q}'\vec{p}'}^j] -\frac{\Tr[ \Omega_{\vec{q}\vec{p}}^i]\Tr[\Omega_{\vec{q}'\vec{p}'}^j]}{2^{m}}=0\,.
\end{equation}
The latter implies that we only have to consider the operators $\widehat{O}_i$ which act on qubits inside of the forward light-cone $\LC$ of $W$.

Then, as shown  in the Supplementary Information 
\begin{equation}
    \left\langle\Tr[ \Omega_{\vec{q}\vec{p}}^i\Omega_{\vec{q}'\vec{p}'}^i] -\frac{\Tr[ \Omega_{\vec{q}\vec{p}}^i]\Tr[\Omega_{\vec{q}'\vec{p}'}^i]}{2^{m}}\right\rangle_{V_{\text{R}}}\propto 
    \epsilon(\widehat{O}_i)  \,.
    \label{eq:Tn_Sum-local}
\end{equation}  
Here we remark that the proportionality factor contains terms of the form $\delta_{(\vec{p},\vec{q})_{S_{\overline{w}}^+}}\delta_{(\vec{p}',\vec{q}')_{ S_{\overline{w}}^+}}\delta_{(\vec{p},\vec{q}')_{S_{\overline{w}}^-}}\delta_{(\vec{p}',\vec{q})_{S_{\overline{w}}^-}}$ (where $S_{\overline{w}}^+\cup S_{\overline{w}}^-=S_{\overline{w}}$), which arises from the different tensor contractions of $P_{\vec{p}\vec{q}}=\ketbra{\vec{q}}{\vec{p}}$ in Fig.~\ref{fig:f5}(c). It is then straightforward to show that 
\begin{align}
    \sum_{\substack{\vec{p}\vec{q}\\\vec{p}'\vec{q}'}}& \delta_{(\vec{p},\vec{q})_{S_{\overline{w}}^+}}\delta_{(\vec{p}',\vec{q}')_{ S_{\overline{w}}^+}}\delta_{(\vec{p},\vec{q}')_{S_{\overline{w}}^-}}\delta_{(\vec{p}',\vec{q})_{S_{\overline{w}}^-}}
    \left\langle\Delta\Psi_{\vec{p}\vec{q}}^{\vec{p}'\vec{q}'}\right\rangle_{V_{\text{L}}}\nonumber\\
    &= \left\langle D_{\HS}\left(\tilde{\rho}^{-}, \Tr_w[\tilde{\rho}^{-}]\otimes\frac{\id}{2^m}\right)\right\rangle_{V_{\text{L}}}\,, \label{eq:DHS}
\end{align}
where we define $\tilde{\rho}^{-}$ as the reduced states of $\tilde{\rho}=V_{\text{L}}\rho V_{\text{L}}\ad$ in the Hilbert spaces associated with subsystems $S_w\cup S_{\overline{w}}^-$. Here we recall that $D_{HS}$ is the Hilbert-Schmidt distance $D_{HS}\left(\rho,\sigma\right)=\Tr[(\rho-\sigma)^2]$.

By employing properties of $D_{HS}$ one can show (see Supplementary Note 6)
\begin{equation}
    D_{\HS}\left(\tilde{\rho}^{-}, \Tr_w[\tilde{\rho}^{-}]\otimes\frac{\id}{2^m}\right)\geq \frac{D_{\HS}\left(\widetilde{\rho}_w,\frac{\id}{2^m}\right)}{2^{m(L-l+2)/2}}\,,
\end{equation}
where $\tilde{\rho}_{w}=\Tr_{S_{\overline{w}}^-}[\tilde{\rho}^{-}]$. We can then leverage the tensor network representation of quantum circuits to algorithmically integrate over each block in $V_{\text{L}} $ and compute $\langle D_{\HS}\left(\widetilde{\rho}_w,\frac{\id}{2^m}\right)\rangle_{V_{\text{L}}}$. One finds
\begin{align}
    \left\langle D_{\HS}\left(\tilde{\rho}_w,\frac{\id}{2^m}\right)\right\rangle_{V_{\text{L}}}&=\sum_{\substack{(k,k')\in k_{\LC_{\text{B}}} \\ k' \geq k}}   t_{k,k'} \epsilon(\rho_{k,k'})\,,
\end{align}
with $t_{k,k'}\geq \frac{2^{ml}}{(2^m+1)^{2l}}$ $\forall k,k'$, and $\epsilon(\rho_{k,k'})$ defined in Theorem~\ref{thm2}. Combining these results leads to Theorem~\ref{thm2}. Moreover, as detailed in the Supplementary information, Theorem~\ref{thm2} is also valid when $O$ is of the form~\eqref{eq:Oml}.

\subsection*{Proof of Theorem~\ref{thm1} }

Let us now provide a sketch of the proof of Theorem~\ref{thm1}, case (i). Here we denote for simplicity $\widehat{O}_{k}:=\widehat{O}_{1k}$. We leave the proof of case (ii) for the Supplementary Note 7.  In this case there are now operators $O_i$ which act outside of the forward light-cone $\LC$ of $W$. Hence, it is convenient to include in $V_{\text{R}}$ not only all the gates in $\LC$ but also all the blocks in the final layer of $V(\thv)$ (i.e., all blocks $W_{kL}$, with $k=1,\ldots\xi$). We can  define $S_{\LCb}$ as the compliment of $S_\LC$, i.e., as the subsystem of all qubits which are not in $\LC$ (with associated Hilbert-space $\HC_{\LCb}$). Then,  we have 
$V_{\text{R}}=V_\LC\otimes V_{\LCb}$ and $\ketbra{\vec{q}}{\vec{p}}=\ketbra{\vec{q}}{\vec{p}}_\LC\otimes \ketbra{\vec{q}}{\vec{p}}_{\LCb}$,
where we define $V_{\LCb}:=\bigotimes_{k\in k_{\LCb}} W_{kL}$,  $\ketbra{\vec{q}}{\vec{p}}_\LC:=\bigotimes_{k\in k_{\LC}}\ketbra{\vec{q}}{\vec{p}}_{k}$, and $ \ketbra{\vec{q}}{\vec{p}}_{\LCb}:=\bigotimes_{k'\in k_{\LCb}}\ketbra{\vec{q}}{\vec{p}}_{k'}$.
Here, we define  $k_{\LC}:= \{k : S_{k}\subseteq S_{\LC}\}$ and $k_{\LCb}:= \{k : S_{k}\subseteq S_{\LCb}\}$, which are the set of indices whose associated qubits are inside and outside $\LC$, respectively. We also write $O=c_0\id+c_1\hat{O}_\LC\otimes\hat{O}_{\LCb}$,
where we define $\hat{O}_\LC := \bigotimes_{k\in k_\LC} \widehat{O}_k$ and $\hat{O}_{\LCb}:=\bigotimes_{k'\in k_{\LCb}} \widehat{O}_{k'}$.

Using the fact that the blocks in $V(\vec{\theta})$ are independent we can now compute $\langle\Delta\Omega_{\vec{q}\vec{p}}^{\vec{q}'\vec{p}'}\rangle_{V_{\text{R}}}=\langle\Delta\Omega_{\vec{q}\vec{p}}^{\vec{q}'\vec{p}'}\rangle_{V_{\LCb},V_\LC}$. Then, from the definition of $\Omega_{\vec{p}\vec{q}}$ in Eq.~\eqref{eq:omegapq} and the fact that one can always express 
\begin{align}
\left\langle\Delta\Omega_{\vec{q}\vec{p}}^{\vec{q}'\vec{p}'}\right\rangle_{V_{\text{R}}}&= \left\langle\Tr[ \Omega_{\vec{q}\vec{p}}^\LC\Omega_{\vec{q}'\vec{p}'}^\LC] -\frac{\Tr[ \Omega_{\vec{q}\vec{p}}^\LC]\Tr[\Omega_{\vec{q}'\vec{p}'}^\LC]}{2^{m}}\right\rangle_{V_\LC}\nonumber\\
&\times\left(\prod_{k\in k_{\LCb}}\left\langle\Omega_k\right\rangle_{W_{kL}}\right)\,,
\label{eq:OmegaLC}
\end{align}
with
\begin{align}
   \Omega_{\vec{q}\vec{p}}^\LC&= \Tr_{\LC\cap\overline{w}}\left[(\ketbra{\vec{p}}{\vec{q}}\otimes\id_w)V_\LC O^\LC V_\LC\right]\nonumber\\
   \Omega_k&=\Tr\left[\ketbra{\vec{p}}{\vec{q}}_{k}W_{kL}\ad \widehat{O}_{k} W_{kL} \right]\Tr\left[\ketbra{\vec{p}'}{\vec{q}'}_{k}W_{kL}\ad \widehat{O}_{k} W_{kL} \right]\nonumber
\end{align}
and where  $\Tr_{\LC\cap\overline{w}}$ indicates the  partial trace over the Hilbert-space associated with the qubits in $S_\LC\cap S_{\overline{w}}$.
As detailed in the Supplementary Information we can use Eq.~\eqref{eq:Haar2moment} to show that 
{\small
\begin{align}
\left\langle\Omega_k\right\rangle_{W_{kL}}\leq \frac{r_k^2\left(\delta_{(\vec{p},\vec{q})_{S_{k}}}\delta_{(\vec{p'},\vec{q'})_{S_{k}}}+\delta_{(\vec{p},\vec{q'})_{S_{k}}}\delta_{(\vec{p'},\vec{q})_{S_{k}}}\right)}{2^{2m}-1}\,.\label{eq:varWKL}
\end{align}}

On the other hand, as shown in the Supplementary Note 7 (and as schematically depicted in Fig.~\ref{fig:f5}(c)),  when computing the expectation value $\langle\dots\rangle_{V_\LC}$ in~\eqref{eq:OmegaLC}, one obtains
\begin{align}
   \left\langle\Tr[ \Omega_{\vec{q}\vec{p}}^\LC\Omega_{\vec{q}'\vec{p}'}^\LC] -\frac{\Tr[ \Omega_{\vec{q}\vec{p}}^\LC]\Tr[\Omega_{\vec{q}'\vec{p}'}^\LC]}{2^{m}}\right\rangle_{V_\LC}&=\sum_\tau t_\tau^{ij} \Delta O^{\LC}_\tau\delta_\tau\,,
   \label{eq:varVLC}
\end{align}
where we defined $\delta_\tau=\delta_{(\vec{p},\vec{q})_{S_{\overline{\tau}}}}\delta_{(\vec{p}',\vec{q}')_{S_{\overline{\tau}}}}\delta_{(\vec{p},\vec{q}')_{S_\tau}}\delta_{(\vec{p}',\vec{q})_{S_\tau}}$,  $t_\tau\in \mathbb{R}$, $S_\tau\cup S_{\overline{\tau}}=S_\LC\cap S_{\overline{w}}$ (with $S_\tau\neq\emptyset)$, and 
\begin{equation}
\begin{aligned}
\Delta O^{\LC}_\tau=&\Tr_{x_\tau y_\tau}\left[\Tr_{z_\tau}\left[O_i\right]\Tr_{z_\tau}\left[O_j\right]\right]\\
&-\frac{\Tr_{x_\tau}\left[\Tr_{y_\tau z_\tau}\left[O_i\right]\Tr_{y_\tau z_\tau}\left[O_j\right]\right]}{2^m}\,.
\end{aligned}
\end{equation}
Here we use the notation $\Tr_{x_\tau}$ to indicate the trace over the Hilbert space associated with subsystem $S_{x_\tau}$, such that  $S_{x_\tau}\cup S_{y_\tau}\cup S_{z_\tau}=S_\LC$.  As shown in the Supplementary Note 7, combining the deltas in Eqs.~\eqref{eq:varWKL}, and~\eqref{eq:varVLC} with $ \left\langle\Delta\Psi_{\vec{p}\vec{q}}^{\vec{p}'\vec{q}'}\right\rangle_{V_{\text{L}}}$ leads to  Hilbert-Schmidt distances between two quantum states as in~\eqref{eq:DHS}. One can then use the following bounds  $D_{\HS}\left(\rho_1,\rho_2\right)\leq 2$, $\Delta O_\tau^\LC\leq \prod_{k\in k_\LC} r_k^2$, and $\sum_\tau t_\tau \leq 2$, along with some additional simple algebra explained in the Supplementary Information to obtain the upper bound in Theorem~\ref{thm1}.

\subsection*{Ansatz and optimization method}

Here we describe the gradient-free optimization method used in our heuristics. First, we note that all the parameters in the ansatz are randomly initialized. Then, at each iteration, one solves the following sub-space search problem: $\min_{\vec{s}\in \mathbb{R}^d} C(\vec{\theta}+ \vec{A}\cdot\vec{s})$, where $\vec{A}$ is a randomly generated isometry, and $\vec{s}=(s_1,\ldots,s_d)$ is a vector of coefficients to be optimized over. We used $d=10$ in our simulations. Moreover, the training algorithm gradually increases the number of shots per cost-function evaluation. Initially, $C$ is evaluated with $10$ shots, and once the optimization reaches a plateau, the number of shots is increased by a factor of $3/2$. This process is repeated until a termination condition on the value of $C$ is achieved, or until we reach the maximum value of $10^5$ shots per function evaluation.  While this is a simple variable-shot approach, we remark that a more advanced variable-shot optimizer can be found in Ref.~\cite{kubler2020adaptive}. 

Finally, let us remark that while we employ a sub-space search algorithm, in the presence of barren plateaus all optimization methods will (on average) fail unless the algorithm has a precision (i.e., number of shots) that grows exponentially with $n$. The latter is due to the fact that an exponentially vanishing gradient implies that on average the cost function landscape will essentially be flat, with the slope of the order of $\OC(1/2^n)$. Hence, unless one has a precision that can detect such small changes in the cost value, one will not be able to determine a cost minimization direction with gradient-based, or even with black-box optimizers such as the Nelder-Mead method~\cite{nelder1965simplex}.

\section*{ACKNOWLEDGMENTS}
We thank Jacob Biamonte, Elizabeth Crosson, Burak Sahinoglu, Rolando Somma, Guillaume Verdon and Kunal Sharma for helpful conversations. All authors were supported by the Laboratory Directed Research and Development (LDRD) program of Los Alamos National Laboratory (LANL) under project numbers 20180628ECR (for MC), 20190065DR (for AS, LC, and PJC), and 20200677PRD1 (for TV). MC and AS were also supported by the Center for Nonlinear Studies at LANL. PJC acknowledges initial support from the LANL ASC Beyond Moore's Law project. This work was also supported by the U.S. Department of Energy (DOE), Office of Science, Office of Advanced Scientific Computing Research, under the Quantum Computing Application Teams program.

\section*{AUTHOR CONTRIBUTIONS}
The project was conceived by MC, LC, and PJC. The manuscript was written by MC, AS, TV, LC, and PJC. TV proved Proposition 1.  MC and AS proved Proposition 2 and Theorems 1--2. MC, AS, TV, and PJC proved Corollaries 1--2. MC, AS, TV, LC, and PJC analyzed the quantum autoencoder.  For the numerical results, TV performed the simulation in Fig.~2, and LC performed the simulation in Fig.~5. 

\section*{DATA AVAILABILITY}
Data generated and analyzed during current study are available from the corresponding author upon reasonable request.

\section*{COMPETING INTERESTS}
The authors declare no competing interests.

%\footnote{See Supplementary Information which includes Refs.~\cite{paley_zygmund_1932,Fukuda_2019,nielsen_chuang}.}
%\nocite{SM}

%\bibliography{ref.bib}

%merlin.mbs apsrev4-1.bst 2010-07-25 4.21a (PWD, AO, DPC) hacked
%Control: key (0)
%Control: author (0) dotless jnrlst
%Control: editor formatted (1) identically to author
%Control: production of article title (0) allowed
%Control: page (1) range
%Control: year (0) verbatim
%Control: production of eprint (0) enabled
%

\newpage

\onecolumngrid

\setcounter{section}{0}
\setcounter{proposition}{0}
\setcounter{figure}{0}
\setcounter{corollary}{0}
\renewcommand{\figurename}{SUP FIG.}

\section*{\Large{Supplementary Information for ``Cost Function Dependent Barren Plateaus in Shallow Parametrized Quantum Circuits'' }}

In this Supplementary Information, we present detailed proofs of the propositions, theorems, and corollaries presented in the manuscript ``Cost Function Dependent Barren Plateaus in Shallow Parametrized Quantum Circuits''. In Supplementary Note~\ref{secSM:Haar} we first introduce  several lemmas which will be useful in the derivation of our main results.  Then, in Supplementary Note~\ref{secSM:Prop1}, and Supplementary Note~\ref{secSM:Prop2} we respectively provide proofs for Propositions~1 and~2 of the main text. We then derive the general equations for the variance of the cost function partial derivative in Supplementary Note~\ref{secSM:Vari}. In Supplementary Note~\ref{secSM:tensorproduct} we explicitly evaluate the variance of the cost function derivative for  the special case when $V(\thv)$ is given by a single layer of the Alternating Layered Ansatz.

In Supplementary Note~\ref{sec:proofTheorem2}, and Supplementary Note~\ref{sec:proofTheorem1}, we provide our proofs to Theorem 2 and Theorem 1, respectively. Where we remark that the proof of Theorem~2 comes before that of Theorem~1 since the latter builds on the former.  Then, in Supplementary Note~\ref{sec:proofcor}, we prove Corollaries 1, and 2. In Supplementary Note~\ref{sec:faithfulness}, we demonstrate that the local cost function for the quantum autoencoder is faithful.

\renewcommand{\thesection}{\arabic{section}}
\renewcommand{\figurename}{Supplementary Figure }
\titleformat{\section}{\large\bfseries}{}{0pt}{Supplementary Note \thesection:\quad}
\titleformat{\figure}{\large\bfseries}{}{0pt}{Supplementary Figure \thesection:\quad}

\section{Preliminaries}
\label{secSM:Haar}
In this section, we present properties that allow for analytic calculation of integrals of polynomial functions over the unitary group with respect to the unique normalized Haar measure. For more details on this topic, we refer the reader to Ref.~\cite{collins2006integration,puchala2017symbolic}. In addition, to make the Supplementary Information more self-contained, we reiterate here the definition of a $t$-design.
Consider a finite set $\{W_y\}_{y\in Y}$  (of size $|Y|$) of unitaries $W_y$ on a $d$-dimensional Hilbert space, and let  $P_{(t,t)}(W)$ be an arbitrary polynomial of degree at most $t$ in the matrix elements of $W$ and at most $t$ in those of  $W\ad$. Then, we say that this finite set is a $t$-deisgn if~\cite{dankert2009exact}
\begin{equation}
    \frac{1}{|Y|}\cdot \sum_{y\in Y}P_{(t,t)}(W_y)=\int_{U(d)} d\mu(W) P_{(t,t)}(W)\,, \label{eq:SMt-design}
\end{equation}
where in the right-hand side $U(d)$ denotes the unitary group of degree $d$. Equation~\eqref{eq:SMt-design} implies  that averaging $P_{(t,t)}(W)$ over the $t$-design is indistinguishable from integrating over $U(d)$ with respect to the Haar distribution.

Given $W\in U(d)$ the following expressions are valid for the first two moments~\cite{collins2006integration,puchala2017symbolic}
\begin{align}
\label{eq:delta0}
    \int_{U(d)} d\mu(W)w_{\vec{i},\vec{j}}w_{\vec{p},\vec{k}}^*&=\frac{\delta_{\vec{i},\vec{p}}\delta_{\vec{j},\vec{k}}}{d}\,,   \\
\int_{U(d)} d\mu(W)w_{\vec{i}_1,\vec{j}_1}w_{\vec{i}_2,\vec{j}_2}w_{\vec{i}_1',\vec{j}_1'}^{*}w_{\vec{i}_2',\vec{j}_2'}^{*}&=\frac{1}{d^2-1}\left(\delta_{\vec{i}_1,\vec{i}_1'}\delta_{\vec{i}_2,\vec{i}_2'}\delta_{\vec{j}_1,\vec{j}_1'}\delta_{\vec{j}_2,\vec{j}_2'}+\delta_{\vec{i}_1,\vec{i}_2'}\delta_{\vec{i}_2,\vec{i}_1'}\delta_{\vec{j}_1,\vec{j}_2'}\delta_{\vec{j}_2,\vec{j}_1'}\right)\nonumber \\
&-\frac{1}{d(d^2-1)}\left(\delta_{\vec{i}_1,\vec{i}_1'}\delta_{\vec{i}_2,\vec{i}_2'}\delta_{\vec{j}_1,\vec{j}_2'}\delta_{\vec{j}_2,\vec{j}_1'}+\delta_{\vec{i}_1,\vec{i}_2'}\delta_{\vec{i}_2,\vec{i}_1'}\delta_{\vec{j}_1,\vec{j}_1'}\delta_{\vec{j}_2,\vec{j}_2'}\right)\,.
    \label{eq:delta}
\end{align}
All throughout this section the integration domain will be implied to be $U(d)$, and  unless otherwise specified we consider $W$ to be an operator acting on a Hilbert space $\HC_w$ of dimension $d$.  When $d=2^{m}$, as occurs for a Hilbert space of $m$ qubit, we adopt the symbol $\vec{i} = (i_1, \dots i_m)$ to denote a bitstring of length $m$ such that $i_1,i_2,\dotsc,i_{m}\in\{0,1\}$. Moreover, given two bitstrings $\vec{i}$ and $\vec{j}$ we define their concatenation as $\vec{i}\cdot \vec{j}=(i_1, \dots i_n,j_1, \dots j_n)$.

Operators in the computational basis of $m$ qubits can be written as
\begin{align*}
    \begin{split}
     W=\sum_{\vec{i},\vec{j}}w_{\vec{i},\vec{j}}\ketbra{\vec{i}}{\vec{j}}\,,\quad W\ad=\sum_{\vec{i}',\vec{j}'}w_{\vec{i}',\vec{j}'}^{*}\ketbra{\vec{j}'}{\vec{i}'}\,,\quad
     A=\sum_{\vec{k},\vec{l}}a_{\vec{k},\vec{l}}\ketbra{\vec{k}}{\vec{l}}\,,\quad B=\sum_{\vec{q},\vec{p}}b_{\vec{q},\vec{p}}\ketbra{\vec{q}}{\vec{p}}\, .
    \end{split}
\end{align*}

From the previous expressions we can derive the  following lemmas.

\begin{lemma}
\label{lemma1}
Let $\lbrace W_{y} \rbrace_{y\in Y}\subset U(d)$ form a unitary $t$-design with $t\ge 1$, and let $A ,B: \HC_w\to \HC_w$ be arbitrary linear operators. Then
\begin{equation}
    {1\over \vert Y\vert}\sum_{y\in Y}\Tr\left[W_{y}AW_{y}\ad B\right]=\int d\mu(W)\Tr\left[WAW\ad B\right]=\frac{\Tr\left[A\right]\Tr\left[B\right]}{d}\,.
    \label{eq:lemma1}
\end{equation}
\end{lemma}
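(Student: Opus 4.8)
The plan is to establish the two equalities in~\eqref{eq:lemma1} separately, since they have completely different origins. The first equality is purely a consequence of the defining property of a $t$-design. Upon expanding $W$ and $W\ad$ in the computational basis as written just above, the integrand $\Tr[WAW\ad B]$ becomes a polynomial of degree exactly $1$ in the matrix elements $w_{\vec{i},\vec{j}}$ of $W$ and of degree $1$ in those $w_{\vec{i}',\vec{j}'}^{*}$ of $W\ad$; that is, it is precisely an object of the form $P_{(1,1)}(W)$. Since $\{W_y\}_{y\in Y}$ forms a $t$-design with $t\ge 1$, Eq.~\eqref{eq:SMt-design} immediately identifies the average over the design with the Haar integral. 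Hence all the substantive work lies in evaluating the Haar integral on the right.

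For that second equality I would substitute the four basis expansions of $W$, $A$, $W\ad$, and $B$ into $\Tr[WAW\ad B]$ and collapse the product of four rank-one operators using $\braket{\vec{a}}{\vec{b}}=\delta_{\vec{a},\vec{b}}$ together with the cyclic trace. A short computation yields
\begin{equation}
\Tr[WAW\ad B]=\sum w_{\vec{i},\vec{j}}\,w_{\vec{i}',\vec{j}'}^{*}\,a_{\vec{k},\vec{l}}\,b_{\vec{q},\vec{p}}\;\delta_{\vec{j},\vec{k}}\,\delta_{\vec{l},\vec{j}'}\,\delta_{\vec{i}',\vec{q}}\,\delta_{\vec{i},\vec{p}}\,,
\end{equation}
where the sum runs over all the bitstring indices. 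I would then apply the first-moment formula~\eqref{eq:delta0}, $\int d\mu(W)\,w_{\vec{i},\vec{j}}\,w_{\vec{i}',\vec{j}'}^{*}=\delta_{\vec{i},\vec{i}'}\delta_{\vec{j},\vec{j}'}/d$, term by term inside the sum. The combined product of Kronecker deltas then forces $\vec{k}=\vec{l}=\vec{j}$ and $\vec{p}=\vec{q}=\vec{i}$, leaving only the two free indices $\vec{i}$ and $\vec{j}$, so that the double sum factorizes as $\tfrac{1}{d}\big(\sum_{\vec{j}}a_{\vec{j},\vec{j}}\big)\big(\sum_{\vec{i}}b_{\vec{i},\vec{i}}\big)=\Tr[A]\Tr[B]/d$, which is the claimed value.

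There is no genuine conceptual obstacle here; the statement is a routine first-moment (Weingarten) calculation, and the only place that demands care is the bookkeeping of the Kronecker deltas. Specifically, one must track which deltas come from collapsing the operator product and the trace versus which come from~\eqref{eq:delta0}, and verify that imposing all constraints simultaneously leaves exactly the diagonal elements $a_{\vec{j},\vec{j}}$ and $b_{\vec{i},\vec{i}}$ with $\vec{i}$ and $\vec{j}$ decoupling into two independent traces. As a consistency check I would set $B=\id$, in which case the right-hand side reduces to $\Tr[A]\,\Tr[\id]/d=\Tr[A]$, correctly reproducing the unitary invariance $\Tr[WAW\ad]=\Tr[A]$; this both validates the index algebra and confirms the $1/d$ normalization.
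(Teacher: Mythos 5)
Your proof is correct and follows essentially the same route as the paper's: invoke the $t$-design definition for the first equality, expand all four operators in the computational basis, and apply the first-moment formula~\eqref{eq:delta0} so that the Kronecker deltas collapse the sum to $\Tr[A]\Tr[B]/d$. The index bookkeeping you describe (forcing $\vec{k}=\vec{l}=\vec{j}$ and $\vec{p}=\vec{q}=\vec{i}$) matches the paper's computation, and your $B=\id$ sanity check is a nice, valid confirmation of the normalization.
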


\begin{proof}
The first equality follows from the definition of a $t$-design. Note that $\Tr\left[WAW\ad B\right]$ can be written as 
\begin{align*}
   \Tr\left[WAW\ad B\right]
   =\sum_{\vec{i}_1,\vec{j}_1,\vec{i}_1',\vec{j}_1'}a_{\vec{j}_1,\vec{j}_1'}b_{\vec{i}_1',\vec{i}_1}w_{\vec{i}_1,\vec{j}_1}w_{\vec{i}_1',\vec{j}_1'}^*\,.
\end{align*}
Then, from  (\ref{eq:delta0}), we have 
\begin{align*}
    \int d\mu(W)\Tr\left[WAW\ad B\right]=\frac{1}{d}\sum_{\vec{i}_1,\vec{j}_1}a_{\vec{j}_1,\vec{j}_1}b_{\vec{i}_1,\vec{i}_1}=\frac{\Tr\left[A\right]\Tr\left[B\right]}{d}\,. 
\end{align*}
\end{proof}

\begin{lemma}
\label{lemma2}
Let $\lbrace W_{y} \rbrace_{y\in Y}\subset U(d)$ form a unitary $t$-design with $t\ge 2$ and let $A,B,C, D:\HC_w \to \HC_w$ be arbitrary linear operators. Then 
\begin{equation}
\begin{split}
    {1\over \vert Y\vert}\sum_{y\in Y}\Tr[W_{y}AW_{y}\ad BW_{y}CW_{y}\ad D]=& \int d\mu(W)\Tr[WAW\ad BWCW\ad D]\\=&\frac{1}{d^2-1}\left(\Tr[A]\Tr[C]\Tr[BD]+\Tr[AC]\Tr[B]\Tr[D]\right)\\
    &-\frac{1}{d(d^2-1)}\left(\Tr[AC]\Tr[BD]+\Tr[A]\Tr[B]\Tr[C]\Tr[D]\right)\,.
    \end{split}
    \label{eq:lemma2}
\end{equation}
\end{lemma}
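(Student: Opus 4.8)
The plan is to reduce the average to the elementwise second-moment identity~\eqref{eq:delta} by expanding everything in the computational basis. First I would write each operator in components, $W=\sum_{\vec{i},\vec{j}}w_{\vec{i},\vec{j}}\ketbra{\vec{i}}{\vec{j}}$ and similarly for $A,B,C,D$, and multiply out the product inside the trace. Tracking the index flow around the closed loop $WAW\ad BWCW\ad D$, the integrand becomes a sum over the matrix elements of $A,B,C,D$ weighted by the degree-$(2,2)$ monomial $w_{\vec{i}_1\vec{j}_1}w_{\vec{i}_2\vec{j}_2}w^*_{\vec{i}_1'\vec{j}_1'}w^*_{\vec{i}_2'\vec{j}_2'}$, where the two unconjugated factors arise from the two copies of $W$ and the two conjugated factors from the two copies of $W\ad$. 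Since $\{W_y\}$ forms a $2$-design, the average of this monomial equals its Haar value, which is furnished by~\eqref{eq:delta}.

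Next I would substitute~\eqref{eq:delta} and evaluate the four resulting Kronecker-delta patterns one at a time. Each pattern identifies certain pairs of summation indices, after which the remaining free sums over the components of $A,B,C,D$ collapse into traces. Concretely, the two terms carrying the prefactor $1/(d^2-1)$ produce $\Tr[A]\Tr[C]\Tr[BD]$ and $\Tr[AC]\Tr[B]\Tr[D]$, while the two terms carrying the prefactor $-1/(d(d^2-1))$ produce $\Tr[AC]\Tr[BD]$ and $\Tr[A]\Tr[B]\Tr[C]\Tr[D]$. Collecting these four contributions yields exactly~\eqref{eq:lemma2}.

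The only real work is the index bookkeeping: one must correctly read off which of $\vec{i}_1,\vec{j}_1,\vec{i}_2,\vec{j}_2,\vec{i}_1',\vec{j}_1',\vec{i}_2',\vec{j}_2'$ corresponds to each leg of the four $W$-factors, and then match each delta pattern to the correct cyclic contraction of $A,B,C,D$. This matching is where a mispairing could most easily creep in, so I would sanity-check each term against Lemma~\ref{lemma1} in a degenerate limit, e.g.\ collapsing one of the inner operators to the identity should reduce the expression to a product of first-moment averages of the form~\eqref{eq:lemma1}. Beyond this careful matching there is no conceptual obstacle, as the statement is a direct application of the Weingarten formula~\eqref{eq:delta} for the second moment.
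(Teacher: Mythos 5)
Your proposal is correct and follows essentially the same route as the paper's proof: expand the trace in the computational basis to identify the degree-$(2,2)$ monomial $w_{\vec{i}_1\vec{j}_1}w_{\vec{i}_2\vec{j}_2}w^*_{\vec{i}_1'\vec{j}_1'}w^*_{\vec{i}_2'\vec{j}_2'}$, invoke the $2$-design property to replace the average by the Haar value of~\eqref{eq:delta}, and collapse the four Kronecker-delta patterns into the four trace terms with the prefactors you state. Your assignment of terms to prefactors matches the paper's computation, so no further comment is needed.
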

\begin{proof}
The first equality follows from that fact that $\Tr[W_{y}AW_{y}\ad BW_{y}CW_{y}\ad D]\in P_{(2,2)}(W_{y})$. By writting
\begin{align*}
    \Tr[WAW\ad BWCW\ad D]=\sum_{\substack{\vec{i}_1,\vec{j}_1,\vec{i}_1',\vec{j}_1'\\\vec{i}_2,\vec{j}_2,\vec{i}_2',\vec{j}_2'}}a_{\vec{j}_1,\vec{j}_1'}b_{\vec{i}_1',\vec{i}_2}c_{\vec{j}_2,\vec{j}_2'}d_{\vec{i}_2',\vec{i}_1}w_{\vec{i}_1,\vec{j}_1}w_{\vec{i}_2,\vec{j}_2}w_{\vec{i}_1',\vec{j}_1'}^{*}w_{\vec{i}_2',\vec{j}_2'}^{*}\,,
\end{align*}
we can use \eqref{eq:delta} to obtain
\begin{align}
\begin{split}
    \int d\mu(W)\Tr[WAW\ad BWCW\ad D]=&\frac{1}{d^2-1}\sum_{\vec{i}_1,\vec{j}_1,\vec{i}_2,\vec{j}_2}\left(a_{\vec{j}_1,\vec{j}_1}b_{\vec{i}_1,\vec{i}_2}c_{\vec{j}_2,\vec{j}_2}d_{\vec{i}_2,\vec{i}_1}+a_{\vec{j}_1,\vec{j}_2}b_{\vec{i}_2,\vec{i}_2}c_{\vec{j}_2,\vec{j}_1}d_{\vec{i}_1,\vec{i}_1}\right)\\
    &-\frac{1}{d(d^2-1)}\sum_{\vec{i}_1,\vec{j}_1,\vec{i}_2,\vec{j}_2}\left(a_{\vec{j}_1,\vec{j}_2}b_{\vec{i}_1,\vec{i}_2}c_{\vec{j}_2,\vec{j}_1}d_{\vec{i}_2,\vec{i}_1}+a_{\vec{j}_1,\vec{j}_1}b_{\vec{i}_2,\vec{i}_2}c_{\vec{j}_2,\vec{j}_2}d_{\vec{i}_1,\vec{i}_1}\right)\\
    =&\frac{1}{d^2-1}\left(\Tr[A]\Tr[C]\Tr[BD]+\Tr[AC]\Tr[B]\Tr[D]\right)\\
    &-\frac{1}{d(d^2-1)}\left(\Tr[AC]\Tr[BD]+\Tr[A]\Tr[B]\Tr[C]\Tr[D]\right)\,.
    \end{split}
\end{align}
\end{proof}

\begin{lemma}
\label{lemma3}
Let $\lbrace W_{y} \rbrace_{y\in Y}\subset U(d)$ form a unitary $t$-design with $t\ge 2$ and let $A,B,C, D: \HC_w\to \HC_w$ be arbitrary linear operators. Then   
\begin{equation}
\begin{split}
{1\over \vert Y\vert}\sum_{y\in Y}\Tr[W_{y}AW_{y}\ad B]\Tr[W_{y}CW_{y}\ad D]
&=\int d\mu(W)\Tr[WAW\ad B]\Tr[WCW\ad D]\nonumber \\
    =&\frac{1}{d^2-1}\left(\Tr[A]\Tr[B]\Tr[C]\Tr[D]+\Tr[AC]\Tr[BD]\right)\\
    &-\frac{1}{d(d^2-1)}\left(\Tr[AC]\Tr[B]\Tr[D]+\Tr[A]\Tr[C]\Tr[BD]\right)\,.
    \end{split}
    \label{eq:lemma3}
\end{equation}
\end{lemma}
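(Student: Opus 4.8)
The plan is to follow the same route as in the proofs of Lemmas~\ref{lemma1} and~\ref{lemma2}: expand the two traces into matrix elements of $W$, $W\ad$, $A$, $B$, $C$, $D$, apply the second-moment formula~\eqref{eq:delta}, and then recollect the surviving index sums into traces. The first equality is immediate, since $\Tr[WAW\ad B]\Tr[WCW\ad D]$ is a polynomial of degree at most $2$ in the entries of $W$ and at most $2$ in those of $W\ad$, so a unitary $2$-design reproduces the Haar average by the defining property~\eqref{eq:SMt-design}.

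First I would introduce two independent families of indices, one for each factor, and write
\begin{equation}
\Tr[WAW\ad B]\Tr[WCW\ad D]=\sum_{\substack{\vec{i}_1,\vec{j}_1,\vec{i}_1',\vec{j}_1'\\\vec{i}_2,\vec{j}_2,\vec{i}_2',\vec{j}_2'}} a_{\vec{j}_1,\vec{j}_1'} b_{\vec{i}_1',\vec{i}_1} c_{\vec{j}_2,\vec{j}_2'} d_{\vec{i}_2',\vec{i}_2}\, w_{\vec{i}_1,\vec{j}_1} w_{\vec{i}_2,\vec{j}_2} w_{\vec{i}_1',\vec{j}_1'}^{*} w_{\vec{i}_2',\vec{j}_2'}^{*}\,,
\end{equation}
which is structurally identical to the opening display in the proof of Lemma~\ref{lemma2} except that the contraction pattern among $A,B,C,D$ now reflects two separate traces rather than a single cyclic one. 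I would then substitute~\eqref{eq:delta} for the Haar integral of the four $w$-factors and evaluate the four Kronecker-delta patterns one at a time. The pattern $\delta_{\vec{i}_1,\vec{i}_1'}\delta_{\vec{i}_2,\vec{i}_2'}\delta_{\vec{j}_1,\vec{j}_1'}\delta_{\vec{j}_2,\vec{j}_2'}$ equates each primed with its unprimed partner and factorizes into $\Tr[A]\Tr[B]\Tr[C]\Tr[D]$, while $\delta_{\vec{i}_1,\vec{i}_2'}\delta_{\vec{i}_2,\vec{i}_1'}\delta_{\vec{j}_1,\vec{j}_2'}\delta_{\vec{j}_2,\vec{j}_1'}$ cross-links the two families and gives $\Tr[AC]\Tr[BD]$; these combine with the prefactor $1/(d^2-1)$. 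The two remaining patterns, carrying $-1/(d(d^2-1))$, yield $\Tr[AC]\Tr[B]\Tr[D]$ and $\Tr[A]\Tr[C]\Tr[BD]$ respectively, and collecting all four contributions reproduces~\eqref{eq:lemma3}.

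The main obstacle is purely one of bookkeeping: one must resist simply transcribing the trace pairings from Lemma~\ref{lemma2}, because the product-of-traces structure routes the $\vec{j}$-indices through $A$ and $C$ and the $\vec{i}$-indices through $B$ and $D$ differently than a single cyclic trace does. Keeping the two index families distinct and matching each delta pattern to its correct trace pairing is precisely where care is needed; once that is done, each of the four sums factorizes elementarily and the result follows.
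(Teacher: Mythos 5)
Your proof is correct, and every delta-pattern-to-trace assignment checks out against the coefficient sum $\sum a_{\vec{j}_1,\vec{j}_1'} b_{\vec{i}_1',\vec{i}_1} c_{\vec{j}_2,\vec{j}_2'} d_{\vec{i}_2',\vec{i}_2}$: the uncrossed pattern gives $\Tr[A]\Tr[B]\Tr[C]\Tr[D]$, the fully crossed pattern gives $\Tr[AC]\Tr[BD]$, and the two mixed patterns give $\Tr[AC]\Tr[B]\Tr[D]$ and $\Tr[A]\Tr[C]\Tr[BD]$, exactly as \eqref{eq:lemma3} requires. The only difference from the paper is one of routing. Rather than re-expanding the product of traces in matrix elements and applying the second-moment formula \eqref{eq:delta} from scratch, the paper inserts a double resolution of the identity, writing $\Tr[WAW\ad B]\Tr[WCW\ad D]=\sum_{\vec{\alpha},\vec{\beta}}\Tr\bigl[WAW\ad B\dyad{\vec{\alpha}}{\vec{\beta}}WCW\ad D\dyad{\vec{\beta}}{\vec{\alpha}}\bigr]$, then applies Lemma~\ref{lemma2} with $B\to B\dyad{\vec{\alpha}}{\vec{\beta}}$ and $D\to D\dyad{\vec{\beta}}{\vec{\alpha}}$, and finally resums over $\vec{\alpha},\vec{\beta}$. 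That reduction delegates all the index bookkeeping to the already-proved Lemma~\ref{lemma2}; your direct Weingarten expansion redoes that bookkeeping explicitly but is self-contained and equally rigorous, since both arguments ultimately rest on the same second-moment formula and on the observation that the integrand is a degree-$(2,2)$ polynomial in the entries of $W$, which justifies the first equality for any $2$-design.
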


\begin{proof}
The first equality follows from a reasoning similar to the one used in Lemma \ref{lemma2}. By expressing 
\begin{align*}
    \Tr\left[WAW\ad B\right]\Tr\left[WCW\ad D\right]=\sum_{\vec{\alpha},\vec{\beta}}\Tr\left[WAW\ad B|\vec{\alpha}\rangle\langle\vec{\beta}|WCW\ad D|\vec{\beta}\rangle\langle\vec{\alpha}|\right]\,,
\end{align*}
we can employ  \eqref{eq:lemma2} to obtain
\begin{align*}
    \int d\mu(W)\Tr[WAW\ad B]\Tr[WCW\ad D]=&\sum_{\vec{\alpha},\vec{\beta}}\int d\mu(W)\Tr\left[WAW\ad B|\vec{\alpha}\rangle\langle\vec{\beta}|WCW\ad D|\vec{\beta}\rangle\langle\vec{\alpha}|\right]\\
    =&\frac{1}{d^2-1}\sum_{\vec{\alpha},\vec{\beta}}\left(\Tr[A]\Tr[C]\langle\vec{\alpha}|B|\vec{\alpha}\rangle\langle\vec{\beta}|D|\vec{\beta}\rangle+\Tr[AC]\langle\vec{\beta}|B|\vec{\alpha}\rangle\langle\vec{\alpha}|D|\vec{\beta}\rangle\right)\\
    &-\frac{1}{d(d^2-1)}\sum_{\vec{\alpha},\vec{\beta}}\left(\Tr[AC]\langle\vec{\alpha}|B|\vec{\alpha}\rangle\langle\vec{\beta}|D|\vec{\beta}\rangle+\Tr[A]\Tr[C]\langle\vec{\beta}|B|\vec{\alpha}\rangle\langle\vec{\alpha}|D|\vec{\beta}\rangle\right)\\
    =&\frac{1}{d^2-1}\left(\Tr[A]\Tr[B]\Tr[C]\Tr[D]+\Tr[AC]\Tr[BD]\right)\\
    &-\frac{1}{d(d^2-1)}\left(\Tr[AC]\Tr[B]\Tr[D]+\Tr[A]\Tr[C]\Tr[BD]\right)\,.
\end{align*}
\end{proof}

\begin{lemma}
\label{lemma5}
Let $\HC=\HC_{\overline{w}}\otimes \HC_w$ be a bipartite Hilbert space of dimension $d=d_{\overline{w}}d_w$, and let $\lbrace W_{y}\rbrace_{y\in Y}$ be a unitary $t$-design with $t\ge 1$ such that $W_{y}\in U(d_w)$ for all $y\in Y$. Then for arbitrary linear operators $A,B:\HC\to \HC$, we have  
\begin{equation}
\int d\mu(W)(\id_{\overline{w}}\otimes W)A( \id_{\overline{w}}\otimes W\ad)B= \frac{\Tr_w\left[A\right]\otimes\id_w}{d_w}B\,,
    \label{eq:lemma5(1)}
\end{equation}
and 
\begin{equation}
    \int d\mu(W)\Tr\left[(\id_{\overline{w}}\otimes W)A( \id_{\overline{w}}\otimes W\ad)B\right] =\frac{1}{d_w}\Tr\left[\Tr_w\left[A\right]\Tr_w\left[B\right]\right]\,.
    \label{eq:lemma5(2)}
\end{equation}
\end{lemma}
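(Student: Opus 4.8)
The plan is to reduce both identities to the single-moment Weingarten formula \eqref{eq:delta0} on $U(d_w)$, using the fact that conjugation by $\id_{\overline{w}}\otimes W$ amounts to applying the Haar twirl on the $\HC_w$ factor alone while leaving the $\HC_{\overline{w}}$ factor untouched. First I would record the elementary twirl identity $\int d\mu(W)\,W Q W\ad = \Tr[Q]\,\id_w/d_w$ for any operator $Q$ on $\HC_w$, which is immediate from \eqref{eq:delta0}: writing $(WQW\ad)_{\vec{i}\vec{k}}=\sum_{\vec{j},\vec{l}} w_{\vec{i},\vec{j}}Q_{\vec{j}\vec{l}}w^*_{\vec{k},\vec{l}}$ and integrating collapses the indices to give $\delta_{\vec{i}\vec{k}}\Tr[Q]/d_w$.

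For \eqref{eq:lemma5(1)} I would expand an arbitrary operator $A$ on $\HC$ in product form $A=\sum_k P_k\otimes Q_k$, with $P_k$ acting on $\HC_{\overline{w}}$ and $Q_k$ on $\HC_w$. Since $\id_{\overline{w}}\otimes W$ acts trivially on the $\overline{w}$ slot, one has $(\id_{\overline{w}}\otimes W)A(\id_{\overline{w}}\otimes W\ad)=\sum_k P_k\otimes(W Q_k W\ad)$, and integrating term-by-term with the twirl identity yields $\frac{1}{d_w}\sum_k \Tr[Q_k]\,P_k\otimes\id_w$. Recognizing $\sum_k \Tr[Q_k]P_k=\Tr_w[A]$ gives $\frac{1}{d_w}\Tr_w[A]\otimes\id_w$, and right-multiplying by the (integration-independent) operator $B$ produces the claimed right-hand side. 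Equivalently, one could grind through the matrix-element bookkeeping exactly as in the proofs of Lemmas~\ref{lemma1}--\ref{lemma3}.

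For \eqref{eq:lemma5(2)} I would avoid repeating the computation and instead take the full trace of \eqref{eq:lemma5(1)}, using that $\Tr$ commutes with the Haar integral, to obtain $\frac{1}{d_w}\Tr[(\Tr_w[A]\otimes\id_w)B]$. It then remains to check the purely algebraic identity $\Tr_w[(\Tr_w[A]\otimes\id_w)B]=\Tr_w[A]\,\Tr_w[B]$ as operators on $\HC_{\overline{w}}$, which follows by summing the $w$-index of $B$ (the factor $\Tr_w[A]\otimes\id_w$ passes through the partial trace over $\HC_w$ because it is the identity there). Applying the remaining trace $\Tr_{\overline{w}}$ then gives $\frac{1}{d_w}\Tr[\Tr_w[A]\Tr_w[B]]$, as desired.

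Since each step is a short consequence of \eqref{eq:delta0} and elementary partial-trace manipulations, there is no genuine conceptual obstacle here; the only place demanding care is the index bookkeeping that cleanly separates the $\overline{w}$-indices, which are spectators to the integration, from the $w$-indices, which the Weingarten deltas contract into partial traces over $\HC_w$. Keeping those two index sets distinct is where a factor- or normalization-level slip is most likely.
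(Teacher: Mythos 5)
Your proposal is correct and follows essentially the same route as the paper: both reduce \eqref{eq:lemma5(1)} to the first-moment formula \eqref{eq:delta0} applied only to the $\HC_w$ factor (your product decomposition $A=\sum_k P_k\otimes Q_k$ is just a repackaging of the paper's matrix-element expansion in $\id_{\overline{w}}\otimes\ketbra{\vec{i}}{\vec{j}}$), and both obtain \eqref{eq:lemma5(2)} by taking the trace of the first identity and using $\Tr_w[(\Tr_w[A]\otimes\id_w)B]=\Tr_w[A]\Tr_w[B]$. No gaps.
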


Here we use the notation $\id_w$ to indicate the identity operator on subsystem $\HC_w$, and we employ $\Tr_w$ to indicate the partial trace over $\HC_w$.

\begin{proof}
First, note that
\begin{align*}
    (\id_{\overline{w}}\otimes W)A( \id_{\overline{w}}\otimes W\ad)B = \sum_{\vec{i},\vec{j},\vec{i}',\vec{j}'} w_{\vec{i},\vec{j}}w^*_{\vec{i}',\vec{j}'}(\id_{\overline{w}}\otimes \ketbra{\vec{i}}{\vec{j}})A(\id_{\overline{w}}\otimes \ketbra{\vec{j}'}{\vec{i}'})B\,,
\end{align*}
by using  (\ref{eq:delta0}) we have
\begin{align*}
    \int d\mu(W)(\id_{\overline{w}}\otimes W)A( \id_{\overline{w}}\otimes W\ad)B =& \sum_{\vec{i},\vec{j},\vec{i'},\vec{j'}} \int d\mu(W) w_{\vec{i},\vec{j}}w^*_{\vec{i'},\vec{j'}}(\id_{\overline{w}}\otimes \ketbra{\vec{i}}{\vec{j}})A(\id_{\overline{w}}\otimes \ketbra{\vec{j'}}{\vec{i'}})B\\
    =& \frac{1}{d_w} \sum_{\vec{i},\vec{j}} (\id_{\overline{w}}\otimes \ketbra{\vec{i}}{\vec{j}})A(\id_{\overline{w}}\otimes\ketbra{\vec{j}}{\vec{i}})B\\
    =&\frac{1}{d_w}\left(\Tr_w\left[A\right]\otimes\id_w\right)B\,.
\end{align*}
Finally we can also obtain 
\begin{align*}
    \int d\mu(W)\Tr\left[(\id_{\overline{w}}\otimes W)A( \id_{\overline{w}}\otimes W\ad)B\right] =\frac{1}{d_w}\Tr\left[\Tr_w\left[A\right]\Tr_w\left[B\right]\right]\,.
\end{align*}
\end{proof}

\begin{lemma}
\label{lemma4}
If $\HC=\HC_{\overline{w}}\otimes \HC_w$ is a bipartite Hilbert space of dimension $d=d_{\overline{w}}d_w$ ($d=2^n$, and $\overline{d}=2^{n'}$), and if $A,B:\HC \rightarrow \HC$ are arbitrary linear operators, then for any linear operator $W:\HC_w\rightarrow \HC_w$ we have 
\begin{align}
\Tr\left[(\id_{\overline{w}}\otimes W)A(\id_{\overline{w}}\otimes W\ad) B\right]=\sum_{\vec{p},\vec{q}}\Tr\left[W A_{\vec{q}\vec{p}}W\ad B_{\vec{p}\vec{q}}\right]\,,
    \label{eq:lemma4}
\end{align}
where the summation runs over all bitstrings of length $n'$, and where 
\begin{equation}
    A_{\vec{q}\vec{p}}=\Tr_{\overline{w}}\left[(\ketbra{\vec{p}}{\vec{q}}\otimes\id_w)A\right]\,, \quad B_{\vec{p}\vec{q}}=\Tr_{\overline{w}}\left[(\ketbra{\vec{q}}{\vec{p}}\otimes\id_w)B\right]\, .
\end{equation}
\end{lemma}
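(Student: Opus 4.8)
The plan is to reduce the full trace over $\HC=\HC_{\overline{w}}\otimes\HC_w$ to a sum of traces over $\HC_w$ alone by expanding $A$ and $B$ into their partial matrix elements (blocks) with respect to the computational basis $\{\ket{\vec{p}}\}$ of $\HC_{\overline{w}}$. For any linear operator $X$ on $\HC$, define the block $X_{\vec{r}\vec{s}}$ (an operator on $\HC_w$) by sandwiching $X$ between $\ket{\vec{r}}$ and $\ket{\vec{s}}$ on the $\HC_{\overline{w}}$ factor only, so that
\[
X=\sum_{\vec{r},\vec{s}}\ketbra{\vec{r}}{\vec{s}}\otimes X_{\vec{r}\vec{s}}\,.
\]
The first step is to verify that the objects in the statement are exactly these blocks. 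Inserting this expansion for $A$, using $\langle\vec{q}|\vec{r}\rangle=\delta_{\vec{q}\vec{r}}$ to collapse the product with $\ketbra{\vec{p}}{\vec{q}}$, and then using $\Tr_{\overline{w}}[\ketbra{\vec{p}}{\vec{s}}]=\langle\vec{s}|\vec{p}\rangle=\delta_{\vec{s}\vec{p}}$, one finds $A_{\vec{q}\vec{p}}=\Tr_{\overline{w}}[(\ketbra{\vec{p}}{\vec{q}}\otimes\id_w)A]=\langle\vec{q}|A|\vec{p}\rangle$, and by the same manipulation $B_{\vec{p}\vec{q}}=\Tr_{\overline{w}}[(\ketbra{\vec{q}}{\vec{p}}\otimes\id_w)B]=\langle\vec{p}|B|\vec{q}\rangle$. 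Note the deliberate opposite ordering of the labels in the two definitions.

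The second step exploits the fact that $W$ acts trivially on $\HC_{\overline{w}}$: conjugation by $\id_{\overline{w}}\otimes W$ leaves the $\HC_{\overline{w}}$ structure intact and simply conjugates each block,
\[
(\id_{\overline{w}}\otimes W)\,A\,(\id_{\overline{w}}\otimes W\ad)=\sum_{\vec{r},\vec{s}}\ketbra{\vec{r}}{\vec{s}}\otimes W A_{\vec{r}\vec{s}}W\ad\,.
\]
Writing $B=\sum_{\vec{u},\vec{v}}\ketbra{\vec{u}}{\vec{v}}\otimes B_{\vec{u}\vec{v}}$, multiplying the two operators, and collapsing the inner $\HC_{\overline{w}}$ sum via $\langle\vec{s}|\vec{u}\rangle=\delta_{\vec{s}\vec{u}}$ gives $\sum_{\vec{r},\vec{s},\vec{v}}\ketbra{\vec{r}}{\vec{v}}\otimes W A_{\vec{r}\vec{s}}W\ad B_{\vec{s}\vec{v}}$.

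The final step factorizes the global trace as $\Tr=\Tr_w\circ\Tr_{\overline{w}}$: the partial trace over $\HC_{\overline{w}}$ sends $\ketbra{\vec{r}}{\vec{v}}\mapsto\delta_{\vec{r}\vec{v}}$, leaving $\sum_{\vec{r},\vec{s}}\Tr_w[W A_{\vec{r}\vec{s}}W\ad B_{\vec{s}\vec{r}}]$, which upon relabeling $(\vec{r},\vec{s})\to(\vec{q},\vec{p})$ is exactly the right-hand side. There is no genuine obstacle here, since the lemma is a purely algebraic identity involving no averaging; the only thing to be careful about is the index bookkeeping. In particular, one must respect the reversed ordering of the labels in the definitions of $A_{\vec{q}\vec{p}}$ and $B_{\vec{p}\vec{q}}$, because it is precisely this mismatch that produces the matching contraction $B_{\vec{s}\vec{r}}$ against $A_{\vec{r}\vec{s}}$ after the partial trace, so that the blocks pair up correctly inside $\Tr_w$.
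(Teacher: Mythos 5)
Your proof is correct and follows essentially the same route as the paper's: both expand $A$ and $B$ with respect to the computational basis of $\HC_{\overline{w}}$ and use the factorization $\Tr=\Tr_w\circ\Tr_{\overline{w}}$, the only difference being that you keep the $\HC_w$ parts as operator-valued blocks while the paper writes everything out in scalar matrix elements. Your identification $A_{\vec{q}\vec{p}}=\langle\vec{q}|A|\vec{p}\rangle$, $B_{\vec{p}\vec{q}}=\langle\vec{p}|B|\vec{q}\rangle$ and the final index pairing are all handled correctly.
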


\begin{proof}
By expanding the operators in the computational basis
\begin{align}
     \id_{\overline{w}}\otimes W&=\sum_{\vec{p},\vec{i},\vec{j}}w_{\vec{i},\vec{j}}\ketbra{\vec{p}}{\vec{p}}\otimes\ketbra{\vec{i}}{\vec{j}}\,,\quad\quad  \id_{\overline{w}}\otimes W\ad=\sum_{\vec{q},\vec{i}',\vec{j}'}w_{\vec{i}',\vec{j}'}^{*}\ketbra{\vec{q}}{\vec{q}}\otimes\ketbra{\vec{j}'}{\vec{i}'}\,,\quad\\
     A&=\sum_{\substack{\vec{k}_1,\vec{k}_2\\\vec{l}_1,\vec{l}_2}}a_{\vec{k}_1\cdot \vec{k}_2,\vec{l}_1\cdot\vec{l}_2}\ketbra{\vec{k}_1}{\vec{l}_1}\otimes\ketbra{\vec{k}_2}{\vec{l}_2}\,,\quad B=\sum_{\substack{\vec{p}_1,\vec{p}_2\\\vec{q}_1,\vec{q}_2}}b_{\vec{p}_1\cdot \vec{p}_2,\vec{q}_1\cdot\vec{q}_2}\ketbra{\vec{p}_1}{\vec{q}_1}\otimes\ketbra{\vec{p}_2}{\vec{q}_2}\,,
\end{align}
we have
\begin{equation}
    \Tr\left[(\id_{\overline{w}}\otimes W)A(\id_{\overline{w}}\otimes W\ad) B\right]=\sum_{\substack{\vec{i},\vec{j},\vec{i}',\vec{j}'\\\vec{p},\vec{q}}} w_{\vec{i},\vec{j}} a_{\vec{p}\cdot \vec{j},\vec{q}\cdot\vec{i'}} w_{\vec{i}',\vec{j}'} b_{\vec{q}\cdot \vec{j'},\vec{p}\cdot\vec{i}}\\
    =\sum_{\vec{p},\vec{q}}\Tr\left[W A_{\vec{q}\vec{p}}W\ad B_{\vec{p}\vec{q}}\right]\,.
\end{equation}
\end{proof}

\begin{lemma}
\label{lemma3.5}
Let $\HC=\HC_1\otimes \HC_2\otimes\HC_3\otimes\HC_4$  be a Hilbert space of dimension $d\cdot d \cdot d \cdot d=d^4$, and let $W:\HC_1\otimes \HC_2\to \HC_1\otimes \HC_2$ be a $t$-design with $t\geq 2$. For arbitrary linear operators $A,A':\HC\to \HC$ we  define
\begin{align}
    \Omega_{1}&= ( W \otimes\id_3\otimes\id_4)A( W\ad\otimes\id_3\otimes\id_4)(\ketbra{\vec{p}}{\vec{q}}\otimes\id_2\otimes\id_3\otimes\id_4)\,, \,\,\, \Omega_{2}= ( W \otimes\id_3\otimes\id_4)A( W\ad\otimes\id_3\otimes\id_4)\,, \\ 
    \Omega_{1}'&= (W \otimes\id_3\otimes\id_4)A'( W\ad\otimes\id_3\otimes\id_4)(\ketbra{\vec{p}'}{\vec{q}'}\otimes\id_2\otimes\id_3\otimes\id_4)\,, \,\,\, \Omega_{2}'= ( W \otimes\id_3\otimes\id_4)A'( W\ad\otimes\id_3\otimes\id_4)\,,
\end{align}
where $\vec{p}$, $\vec{q}$, $\vec{p}'$, and $\vec{q}'$ are bitstrings of length $d$. 

Let us first consider the quantity
\begin{equation}
   \Delta \Omega^{(1)}_{jkl}=\Tr_{kl}\left[\Tr_{1j}[\Omega_1] \Tr_{1j}[\Omega'_1]\right]-\frac{\Tr_{\text{L}}[\Tr_{1jk}[\Omega_1]_{1jk}\Tr[\Omega'_1]]}{d_k}\,, \label{eq:lemma6Proj}
\end{equation}
where  $\Tr_j$ indicates the partial trace over $\HC_j$, and where $\Tr_{jk} :=  \Tr_{j}\Tr_{k}$. Here the indexes $j,k$, and $l$ define Hilbert spaces such that  $\HC_j\otimes\HC_k\otimes\HC_{\text{L}}=\HC_2\otimes\HC_3\otimes\HC_4$. In addition, let us define $\HC_j=\HC_0 := \{\vec{0}\}$, in which case for any operator $O$ we also define  the partial trace over $\HC_0$ as  $\Tr_{0}\left[O\right]:=O$, and  $\Tr_{k0}\left[O\right]:=\Tr_k\left[O\right]$. From the previous definition, we then have  that $\HC_k$ can be chosen from the set  $\{\HC_2,\HC_3,\HC_2\otimes\HC_3,\HC_3\otimes\HC_4\}$, and for any such choice 
the following equality always holds
\begin{align}
\begin{aligned}
\int d\mu(W) \Delta\Omega^{(1)}_{jkl}=&\frac{t_1^{(1)} \delta_{\vec{p}\vec{q}'}\delta_{\vec{p}'\vec{q}}}{d^4-1}\left(\Tr_{123x}[\Tr_y[A]\Tr_y[A']]-\frac{\Tr_{3x}[\Tr_{12y}[A]\Tr_{12y}[A']]}{d^2}\right)\\
 &+\frac{t_2^{(1)} \delta_{\vec{p}\vec{q}}\delta_{\vec{p}'\vec{q}'}}{d^4-1}\left(\Tr_{3x}[\Tr_{12y}[A]\Tr_{12y}[A']]-\frac{\Tr_x[\Tr_{123y}[A]\Tr_{123y}[A']]}{d'}\right)\\
 &-\frac{t_3^{(1)} \delta_{\vec{p}\vec{q}'}\delta_{\vec{p}'\vec{q}}}{d^4-1}\left(\Tr_{12x}[\Tr_{3y}[A]\Tr_{3y}[A']]-\frac{\Tr_x[\Tr_{123y}[A]\Tr_{123y}[A']]}{d^2}\right)\\
 &-\frac{t_4^{(1)} \delta_{\vec{p}\vec{q}}\delta_{\vec{p}'\vec{q}'}}{d^4-1}\left(\Tr_{123x}[\Tr_{y}[A]\Tr_{y}[A']]-\frac{\Tr_{12x}[\Tr_{3y}[A]\Tr_{3y}[A']]}{d'}\right) \label{eq:lemma3.5}
\end{aligned}
\end{align}
where $d'=d^2$ if $\HC_4\leq\HC_k$, and  $d'=d$ otherwise. Here we employ the notation $A\leq B$ to indicate that $A$ is a subspace of $B$. Note that $\HC_y=\HC_4$, $\HC_x=\HC_0$ if $\HC_4\leq \HC_j$, and $\HC_y=\HC_0$, $\HC_x=\HC_4$ otherwise. In addition, we have
{\small
\begin{align}
  t_1^{(1)} &=
    \begin{cases}
      d & \text{if $\HC_2\leq\HC_j$}\\
      0 & \text{if $\HC_k=\HC_2$}\\
      d^2 & \text{otherwise}
    \end{cases}\,, \quad
    t_2^{(1)} =
    \begin{cases}
      d^2 & \text{if $\HC_2\leq\HC_j$}\\
      0 & \text{if $\HC_k=\HC_2$}\\
      d & \text{otherwise}
    \end{cases}\,, \quad
        t_4^{(1)} =
    \begin{cases}
      1 & \text{if $\HC_2\leq\HC_j$, and $\HC_3\in\HC_k$}\\
      0 & \text{if $\HC_k=\HC_1$}\\
      \frac{1}{d} & \text{otherwise}
    \end{cases} \\
    t_3^{(1)} &=
    \begin{cases}
      d & \text{if $\HC_k=\HC_3$, and $\HC_2\leq\HC_{\text{L}}$}\\
      1-d^2 & \text{if $\HC_k=\HC_2$}\\
      \frac{1}{d} & \text{if $\HC_k=\HC_2\otimes\HC_3$, or if $\HC_k=\HC_3\otimes\HC_4$, and $\HC_j=\HC_2$}\\
      1 & \text{otherwise}
    \end{cases}\,, 
\end{align}
}
where we can verify that for all $12$ possible cases that $\sum_{k=1}^4 |\frac{t_k^{(1)}}{(d^4-1)}|\leq 1$. 

Second, let us consider the quantity
\begin{equation}
   \Delta \Omega^{(2)}_{jkl}=\Tr_{kl}\left[\Tr_{1j}[\Omega_2] \Tr_{1j}[\Omega'_2]\right]-\frac{\Tr_{\text{L}}[\Tr_{12j}[\Omega_2]_{1jk}\Tr[\Omega'_2]]}{d_k}\,. \label{eq:lemma6NoProj}
\end{equation}
 Then, for $\HC_k\in\{\HC_2,\HC_3,\HC_2\otimes\HC_3,\HC_3\otimes\HC_4\}$, 
the following equality always holds
\begin{align}
\begin{aligned}
\int d\mu(W) \Delta\Omega^{(2)}_{jkl}=&\frac{t_1^{(2)}}{d^2+1} \left(\Tr_{123x}[\Tr_y[A]\Tr_y[A']]-\frac{\Tr_{3x}[\Tr_{12y}[A]\Tr_{12y}[A']]}{d^2}\right) \\
&+\frac{t_2^{(2)}}{d^2+1} \left(\Tr_{3x}[\Tr_{12y}[A]\Tr_{12y}[A']]-\frac{\Tr_x[\Tr_{123y}[A]\Tr_{123y}[A']]}{d'}\right)\\
 &-\frac{t_3^{(2)}}{d^2+1}\left(\Tr_{12x}[\Tr_{3y}[A]\Tr_{3y}[A']]-\frac{\Tr_x[\Tr_{123y}[A]\Tr_{123y}[A']]}{d^2}\right)\label{eq:lemma3.5.2}\,,
\end{aligned}
\end{align}
where $d'=d^2$ if $\HC_4\leq\HC_k$, and  $d'=d$ otherwise. Similarly, $\HC_y=\HC_4$, $\HC_x=\HC_0$ if $\HC_4\leq \HC_j$, and $\HC_y=\HC_0$, $\HC_x=\HC_4$ otherwise. In addition, we now have
{\small
\begin{align}
  t_1^{(2)} &=
    \begin{cases}
      0 & \text{if $\HC_2\leq\HC_j$}\\
      d & \text{otherwise}
    \end{cases}\,, \quad
    t_2^{(2)} =
    \begin{cases}
      0 & \text{if $\HC_2\leq\HC_j$, or $\HC_k=\HC_2$}\\
      \frac{d^2+1}{d} & \text{otherwise}
    \end{cases}\,, \quad
        t_3^{(2)} =
    \begin{cases}
      1 & \text{if $\HC_2\leq\HC_{\text{L}}$, and $\HC_k=\HC_3$}\\
      \frac{1}{d} & \text{if $\HC_{\text{L}}=\HC_2$, and $\HC_k=\HC_3\otimes\HC_4$}\\
      0 & \text{otherwise}
    \end{cases} \,,
\end{align}
}
where we can verify that for all $9$ nontrivial cases $\sum_{k=1}^3 |\frac{t_k^{(2)}}{d^2+1}|\leq 1$.

\end{lemma}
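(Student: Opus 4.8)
The plan is to evaluate $\int d\mu(W)\,\Delta\Omega^{(1)}_{jkl}$ and $\int d\mu(W)\,\Delta\Omega^{(2)}_{jkl}$ by direct application of the second-moment integration formula. The essential observation is that each integrand is a scalar that is a degree-$(2,2)$ polynomial in the entries of $W$: the product $\Tr_{1j}[\Omega_1]\,\Tr_{1j}[\Omega_1']$ (and likewise with $\Omega_2$) contains exactly two factors of $W$ and two of $W\ad$, while the outer traces $\Tr_{kl}$ and $\Tr_{\text{L}}$ close all remaining indices. Since $W$ acts on $\HC_1\ot\HC_2$, which has dimension $d^2$, the applicable formula is Eq.~\eqref{eq:delta} (equivalently Lemmas~\ref{lemma2} and~\ref{lemma3}) with $d$ replaced by $d^2$. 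This is precisely why the prefactor $1/(d^4-1)=1/((d^2)^2-1)$ appears in the projector case~\eqref{eq:lemma3.5}.

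First I would strip off the spectator factors $\HC_3\ot\HC_4$, on which $W$ acts trivially, by pushing the partial traces through $W\ot\id_3\ot\id_4$; this reduces each integral to a $W$-average of the type handled by Lemma~\ref{lemma3}, with the roles of $A,A'$ played by suitable partial traces down to $\HC_1\ot\HC_2$. Applying Eq.~\eqref{eq:delta} then generates two families of contributions: the ``direct'' contractions and the ``swap'' contractions, which produce, respectively, the index patterns $\delta_{\vec p\vec q}\delta_{\vec p'\vec q'}$ and $\delta_{\vec p\vec q'}\delta_{\vec p'\vec q}$ on the projector labels appearing in~\eqref{eq:lemma3.5}, together with the various surviving partial traces of $A$ and $A'$ that constitute the right-hand sides of~\eqref{eq:lemma3.5} and~\eqref{eq:lemma3.5.2}. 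The subtracted term $-\Tr_{\text{L}}[\Tr_{1jk}[\cdots]\,\Tr[\cdots]]/d_k$ built into the definitions~\eqref{eq:lemma6Proj} and~\eqref{eq:lemma6NoProj} is tailored to cancel the ``disconnected'' pieces of the moment formula, so that only the stated four-term (respectively three-term) structure survives.

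The remaining work is bookkeeping, organized by where the $W$-subsystem factor $\HC_2$ is routed. The constraint $\HC_k\in\{\HC_2,\HC_3,\HC_2\ot\HC_3,\HC_3\ot\HC_4\}$, combined with the partition $\HC_j\ot\HC_k\ot\HC_{\text{L}}=\HC_2\ot\HC_3\ot\HC_4$ (with $\HC_j$ possibly trivial), yields exactly $4+4+2+2=12$ configurations. In each I would track whether $\HC_2$ is traced as part of $\HC_j$, of $\HC_k$, or of $\HC_{\text{L}}$, since this fixes which partial traces of $A,A'$ occur and which powers of $d$ are produced; reading off the prefactor case by case gives the tabulated $t^{(1)}_1,\dots,t^{(1)}_4$. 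A clarifying structural remark: when $\HC_2\leq\HC_j$ the full $W$-subsystem $\HC_1\ot\HC_2$ is traced out, so in the no-projector quantity $\Delta\Omega^{(2)}$ the integrand loses all $W$-dependence and is handled directly—these are the three configurations excluded from the nine nontrivial cases—whereas in $\Delta\Omega^{(1)}$ the projector $\ketbra{\vec p}{\vec q}$ on $\HC_1$ obstructs this collapse, so all twelve cases require the Weingarten computation. The prefactor $1/(d^2+1)$ of~\eqref{eq:lemma3.5.2}, in place of $1/(d^4-1)$, emerges in the no-projector case once the connected terms are combined with the subtraction and the common factor is simplified via $d^4-1=(d^2-1)(d^2+1)$. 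Finally, the bounds $\sum_k|t^{(1)}_k/(d^4-1)|\leq1$ and $\sum_k|t^{(2)}_k/(d^2+1)|\leq1$ would be checked by direct inspection of the enumerated cases.

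I expect the main obstacle to be exactly this exhaustive case analysis: correctly accounting, in all twelve (respectively nine) configurations, for the interplay between the partial traces, the projector contractions, and the two moment-formula patterns, and in particular getting every power of $d$, every sign, and every vanishing coefficient right. There is no conceptual shortcut here—the content of the lemma is a carefully organized ledger of these contractions—so the difficulty lies in the uniformity and completeness of the casework rather than in any single nonroutine step.
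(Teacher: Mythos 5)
Your proposal follows exactly the route the paper takes: the paper's entire proof of this lemma is the single remark that Eqs.~\eqref{eq:lemma3.5} and~\eqref{eq:lemma3.5.2} follow by explicitly integrating each term via the second-moment formula~\eqref{eq:delta}, which is precisely your plan of applying the Weingarten second-moment rule on the $d^2$-dimensional support of $W$ and then doing the exhaustive casework over the $12$ (resp.\ $9$) partitions. Your additional structural observations --- the $1/(d^4-1)$ prefactor from $(d^2)^2-1$, the direct/swap contractions producing the two $\delta$ patterns, and the identification of the three trivial no-projector cases as exactly those with $\HC_2\leq\HC_j$ (where cyclicity removes the $W$-dependence) --- are all correct and in fact supply more justification than the paper itself records.
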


\begin{proof}
The proof of Eqs.~\eqref{eq:lemma3.5}, and~\eqref{eq:lemma3.5.2} can be obtain by explicitly integrating each term via~\eqref{eq:delta}. 
\end{proof}

In particular,  let us consider from Eq.~\eqref{eq:lemma3.5} the following case which will be relevant for our proofs. 
{\footnotesize
\begin{align}
\begin{aligned}
\int d\mu(W)\left(\Tr_{34}\left[\Tr_{12}[\Omega_1] \Tr_{12}[\Omega_1']\right]-\frac{\Tr[\Omega_1]\Tr[\Omega_1']}{d^2}\right)=&\frac{d^2 \delta_{\vec{p}_2\vec{q}_2'}\delta_{\vec{p}_2'\vec{q}_2}}{d^4-1}\left(\Tr_{234}[\Tr_1[A]\Tr_1[A']]-\frac{\Tr_{4}[\Tr_{123}[A]\Tr_{123}[A']]}{d^2}\right)\\
 &+\frac{d \delta_{\vec{p}_2\vec{q}_2}\delta_{\vec{p}_2'\vec{q}_2'}}{d^4-1}\left(\Tr_{4}[\Tr_{123}[A]\Tr_{123}[A']]-\frac{\Tr[A]\Tr[A']}{d}\right)\\
 &-\frac{\delta_{\vec{p}_2\vec{q}_2'}\delta_{\vec{p}_2'\vec{q}_2}}{d(d^4-1)}\left(\Tr_{23}[\Tr_{14}[A]\Tr_{14}[A']]-\frac{\Tr[A]\Tr[A']}{d^2}\right)\\
 &-\frac{\delta_{\vec{p}_2\vec{q}_2}\delta_{\vec{p}_2'\vec{q}_2'}}{d(d^4-1)}\left(\Tr_{234}[\Tr_{1}[A]\Tr_{1}[A']]-\frac{\Tr_{23}[\Tr_{14}[A]\Tr_{14}[A']]}{d}\right) \,.\label{eq:lemma6eq1}
\end{aligned}
\end{align}
}

\begin{lemma}
\label{lemma6}
Let $\HC=\HC_1\otimes \HC_2\otimes \HC_3$ be a tripartite Hilbert space of dimension $d=d_1d_2d_3$, and let $O_1=O_{\text{A}}\otimes\id_2\otimes\id_3$, and $O_2=\id_1\otimes O_{\text{B}}\otimes\id_3$ be linear operators on $S$. Here $\id_i$ indicates the identity over subsystem $S_i$, so that $O_1$ and $O_2$ have no overlapping support. Then for any linear operators $O_{\text{A}}:S_{1}\rightarrow S_{1}$, and $O_{\text{B}}:S_{2}\rightarrow S_{2}$ we have 
\begin{align}
\Tr_{jk}\left[\Tr_i\left[O_1\right]\Tr_i\left[O_2\right]\right]-\frac{\Tr_{k}\left[\Tr_{ij}\left[O_1\right]\Tr_{ij}\left[O_2\right]\right]}{d_j}
=0\,, 
    \label{eq:lemma6}
\end{align}
where $\Tr_i$ indicates the partial trace over $\HC_i$, $\Tr_{ij} :=  \Tr_{i}\Tr_{j}$, and where we defined $\HC_i\otimes\HC_j\otimes \HC_k=\HC$ such that $\HC_j=\HC_1,\HC_2$ or $\HC_3$. Moreover, one can always choose $\HC_i=\HC_0 := \{\vec{0}\}$ (or $\HC_k=\HC_0$), in which case we define the partial trace over $\HC_0$ as  $\Tr_{0}\left[O\right]:=O$, and  $\Tr_{j0}\left[O\right]:=\Tr_j\left[O\right]$.
\end{lemma}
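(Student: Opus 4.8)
The plan is to exploit the fact that, because $\HC_j$ is a single one of the three subsystems, at least one of the two operators necessarily acts as the identity on $\HC_j$; the subtracted term is then exactly what is produced when the trace $\Tr_j$ is factored through the product, so the two terms coincide.

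First I would record the structural observation. Since $O_1=O_{\text{A}}\otimes\id_2\otimes\id_3$ is supported only on $\HC_1$ and $O_2=\id_1\otimes O_{\text{B}}\otimes\id_3$ is supported only on $\HC_2$, and since $\HC_j\in\{\HC_1,\HC_2,\HC_3\}$, at least one operator restricts to $\id$ on $\HC_j$: for $\HC_j=\HC_1$ it is $O_2$, for $\HC_j=\HC_2$ it is $O_1$, and for $\HC_j=\HC_3$ it is both. I would also note that both terms in~\eqref{eq:lemma6} are invariant under exchanging $O_1\leftrightarrow O_2$, because $\Tr_{jk}$ and $\Tr_k$ are full traces over the operators they act on and are hence cyclic. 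This lets me assume without loss of generality that the operator which is trivial on $\HC_j$ is $O_2$.

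Next, writing $O_2=\id_j\otimes O_2'$ with $O_2'$ acting on $\HC_i\otimes\HC_k$, I would obtain $\Tr_i[O_2]=\id_j\otimes M_k$ for $M_k=\Tr_i[O_2']$ acting on $\HC_k$. Tracing once more over $\HC_j$ gives $\Tr_{ij}[O_2]=d_j\,M_k$, so $M_k=\Tr_{ij}[O_2]/d_j$. Substituting this into the first term of~\eqref{eq:lemma6} and pulling $\Tr_j$ through the product — legitimate because $\id_j$ acts trivially and $M_k$ lives only on $\HC_k$ — I obtain
\begin{align}
\Tr_{jk}\!\left[\Tr_i[O_1]\,\Tr_i[O_2]\right]
&=\Tr_{k}\!\left[\Tr_j\!\big[\Tr_i[O_1]\big]\,M_k\right]\notag\\
&=\frac{1}{d_j}\Tr_{k}\!\left[\Tr_{ij}[O_1]\,\Tr_{ij}[O_2]\right],\notag
\end{align}
using $\Tr_j\Tr_i=\Tr_{ij}$. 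This is precisely the subtracted term, so the difference in~\eqref{eq:lemma6} vanishes.

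Finally I would dispatch the degenerate index conventions: if $\HC_i=\HC_0$ then $\Tr_i$ is the identity map and every step above is unchanged, while if $\HC_k=\HC_0$ then $M_k$ is a scalar and $\Tr_k$ is trivial, again leaving the computation intact. The only real work here is the bookkeeping of which partial traces act on which factor under the relabeling $\{i,j,k\}$, together with these $\HC_0$ conventions; I expect no analytic obstacle, since the cancellation is an exact algebraic identity driven entirely by one operator being proportional to $\id_j$.
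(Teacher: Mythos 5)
Your proof is correct, and it takes a cleaner route than the paper's. The paper verifies the identity by brute-force computation for a single representative case ($\HC_i=\HC_0$, $\HC_j=\HC_2$, $\HC_k=\HC_1\otimes\HC_3$), writing out $\Tr_2[O_1]=d_2\,O_{\text{A}}\otimes\id_3$ and $\Tr_2[O_2]=\Tr[O_{\text{B}}]\,\id_1\otimes\id_3$ explicitly and then asserting that "all remaining cases follow similarly." You instead isolate the actual mechanism that makes every case work: since $\HC_j$ is a single tensor factor and the two operators are supported on disjoint single factors, at least one of them is proportional to $\id_j$; combined with the symmetry of both terms under $O_1\leftrightarrow O_2$ (cyclicity of the full traces $\Tr_{jk}$ and $\Tr_k$) and the standard identity $\Tr_j\!\left[A\,(\id_j\otimes M_k)\right]=\Tr_j[A]\,M_k$, the subtracted term is literally the first term with the factor $d_j$ from $\Tr_j[\id_j]$ divided back out. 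This buys you a uniform proof of all twelve-odd index configurations at once, including the degenerate $\HC_0$ conventions, where the paper leaves the reader to check each case by hand. The one bookkeeping point worth making explicit when you write it up is that after the WLOG swap the decomposition $O_2=\id_j\otimes O_2'$ has $O_2'$ acting on $\HC_i\otimes\HC_k$ in whatever order those factors sit inside $\HC$, but since partial traces are insensitive to that ordering this is harmless.
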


\begin{proof}
Let us show how this equality holds for the specific case when $\HC_i=\{\vec{0}\}$, $\HC_j=\HC_2$, and $\HC_k=\HC_1\otimes \HC_3$, and let us remark that all remaining  cases follow similarly We have
\begin{equation}
 \Tr_0\left[O_1\right]\Tr_0\left[O_2\right]=O_{\text{A}}\otimes O_{\text{B}}\otimes\id_3\,,    \quad   \Tr_{2}\left[O_1\right]=d_2O_{\text{A}}\otimes\id_3\,,\quad \Tr_2\left[O_2\right]=\Tr_2\left[O_{\text{B}}\right]\id_1\otimes \id_3\,,
\end{equation}
and Eq.~\eqref{eq:lemma6} becomes
\begin{align}
\Tr_{123}\left[O_1O_2\right]-\frac{\Tr_{13}\left[\Tr_{2}\left[O_1\right]\Tr_{2}\left[O_2\right]\right]}{d_2}&=\Tr_{123}\left[O_{\text{A}}\otimes O_{\text{B}}\otimes\id_3\right]-\frac{d_2\Tr_2\left[O_{\text{B}}\right] \Tr_{13}\left[O_{\text{A}}\otimes\id_3\right]}{d_2}\\
&=d_3\Tr_1\left[O_{\text{A}}\right]\Tr_2\left[O_{\text{B}}\right]-\frac{d_2d_3\Tr_1\left[O_{\text{A}}\right]\Tr_2\left[O_{\text{B}}\right]}{d_2}\\
&=0\,.
\end{align}

\end{proof}

\section{Proof of Proposition~1}\label{secSM:Prop1}

Here we provide a proof for Proposition~1, which we recall for convenience:

\begin{proposition}\label{prop1SM}
Let $\theta^{j}$ be uniformly distributed on $[-\pi,\pi]$ $\forall j$. For any $\delta \in (0,1)$, the probability that $C_{\text{G}}\le \delta$ satisfies
\begin{equation}
\text{\normalfont Pr}\lbrace C_{\text{G}}\le \delta \rbrace \le (1-\delta)^{-1}\left({1\over 2} \right)^{n}.
\end{equation}
For any $\delta \in [{1\over 2},1]$, the probability that $C_{\text{L}}\le \delta$ satisfies
\begin{equation}
\text{\normalfont Pr}\lbrace C_{\text{L}}\le \delta \rbrace \ge { (2\delta -1)^{2}\over {1\over 2n} + (2\delta -1)^{2} } \underset{n\to\infty}{\longrightarrow}1\,.
\end{equation}
\label{prop:oneSM}
\end{proposition}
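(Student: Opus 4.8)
The plan is to reduce both statements to elementary facts about the independent, identically distributed random variables $X_j := \cos^2(\theta^j/2)$, $j=1,\dots,n$. Since each $\theta^j$ is uniform on $[-\pi,\pi]$ and $\cos^2(\theta/2) = (1+\cos\theta)/2$, a one-line integral gives the moments $\Ebb[X_j] = 1/2$ and $\Ebb[X_j^2] = 3/8$, hence $\Var[X_j] = 1/8$. Writing $C_{\text{G}} = 1 - \prod_{j=1}^n X_j$ and $C_{\text{L}} = 1 - \frac{1}{n}\sum_{j=1}^n X_j$, the two events of interest become $\{C_{\text{G}} \le \delta\} = \{\prod_j X_j \ge 1-\delta\}$ and $\{C_{\text{L}} \le \delta\} = \{\frac{1}{n}\sum_j X_j \ge 1-\delta\}$, and everything reduces to a tail estimate for a product, respectively an average, of the $X_j$.

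For the global bound, set $Y := \prod_{j=1}^n X_j \ge 0$. By independence, $\Ebb[Y] = \prod_j \Ebb[X_j] = (1/2)^n$. Since $\delta \in (0,1)$ makes $1-\delta > 0$, Markov's inequality applied to the non-negative variable $Y$ yields
\[
\Pr\{C_{\text{G}} \le \delta\} = \Pr\{Y \ge 1-\delta\} \le \frac{\Ebb[Y]}{1-\delta} = (1-\delta)^{-1}\left(\frac{1}{2}\right)^n,
\]
which is exactly the claimed estimate.

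For the local bound, set $Z := \frac{1}{n}\sum_{j=1}^n X_j$, so that $\Ebb[Z] = 1/2$ and, by independence, $\Var[Z] = \Var[X_1]/n = 1/(8n)$. The restriction $\delta \in [1/2,1]$ guarantees $1-\delta \le 1/2 = \Ebb[Z]$, so $\{Z \ge 1-\delta\}$ is a lower-deviation event of the form $\{Z \ge \Ebb[Z] - \epsilon\}$ with $\epsilon := \delta - \tfrac12 \ge 0$. Here I would invoke the sharp one-sided Chebyshev (Cantelli) inequality $\Pr\{Z \le \Ebb[Z]-\epsilon\} \le \Var[Z]/(\Var[Z]+\epsilon^2)$, which bounds the complementary event, to obtain
\[
\Pr\{C_{\text{L}} \le \delta\} = \Pr\{Z \ge 1-\delta\} \ge \frac{\epsilon^2}{\Var[Z]+\epsilon^2} = \frac{(\delta-\tfrac12)^2}{\tfrac{1}{8n}+(\delta-\tfrac12)^2};
\]
multiplying numerator and denominator by $4$ and using $2\delta-1 = 2(\delta-\tfrac12)$ recovers the stated form, and the limit $n \to \infty$ is then immediate. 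Equivalently, this estimate can be produced by a Paley–Zygmund-type second-moment argument applied to $Z$, cf.~Ref.~\cite{paley_zygmund_1932}.

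The computations are routine, so there is no serious analytic obstacle; the point requiring care is matching each tail to the correct inequality and the correct sign. The global statement is an \emph{upper} bound on an \emph{upper}-tail event and follows from plain Markov, whereas the local statement is a \emph{lower} bound on $\Pr\{Z \ge 1-\delta\}$, which becomes nontrivial only once $1-\delta$ sits at or below the mean $\Ebb[Z]=1/2$ — precisely why the hypothesis $\delta \ge 1/2$ is needed — and which requires the sharp one-sided (Cantelli) form rather than the two-sided Chebyshev inequality in order to reproduce the stated constant exactly.
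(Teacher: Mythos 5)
Your proof is correct and follows essentially the same route as the paper's: Markov's inequality applied to $\prod_j\cos^2(\theta^j/2)$ for the global bound, and a sharp one-sided second-moment inequality for the local bound. The Cantelli inequality you invoke is, in the form you use it, identical to the variance-form Paley--Zygmund inequality $P(X\ge rE(X))\ge (1-r)^2E(X)^2/(\Var X+(1-r)^2E(X)^2)$ that the paper cites, so the two arguments coincide term for term.
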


Proposition~1 formalizes the narrow gorge phenomenon shown in Fig.~2 of the main text for the warm-up example. Specifically, it bounds the volume of parameter space that is trainable in the sense that the cost deviates from its maximum value of one. As a consequence, one finds that the probability that the global cost function is less than 1 vanishes exponentially as $n\rightarrow \infty$, whereas for the local cost function there are neighbohoods of the minimum that have constant probability as $n\rightarrow \infty$. In the following proof, it is helpful to define the trainable region for the global and local cost as $A_{\delta}^{G}:=\lbrace \theta :C_G(\theta)\leq \delta \rbrace$ and $A_{\delta}^{L}:=\lbrace \theta :C_{\text{L}}(\theta)\leq \delta \rbrace$.

\begin{proof}

Let us first consider the global cost function
\begin{equation}
C_G = \Tr[O_G V(\thv)\dya{\vec{0}}V(\thv)\ad]\,, \label{eq:Cost-TDSM} 
\end{equation}
with  $O_G=\id - \dya{\vec{0}}$, and with $V(\vec{\theta})=\bigotimes_{j=1}^{n}e^{-i \theta^j \sigma_{x}^{(j)} /2} $. The probability of the region $A_{\delta}^{G}$ can be bounded from above by defining the random variable $X=\prod_{j=1}^{n}\cos^{2}{\theta_{j}\over 2}=1-C_{\text{G}}$ and noting that $(1-\delta)P\lbrace X\ge 1-\delta\rbrace \le E(X)$. Then
\begin{align}
\Pr (A_{\delta}^{G}) &= P\lbrace \prod_{j=1}^{n}\cos^{2}{\theta^{j}\over 2} \ge 1-\delta \rbrace  \nonumber \\
&\le (1-\delta)^{-1}E\left( \prod_{j=1}^{n}\cos^{2}{\theta^{j}\over 2} \right) \nonumber \\
&= (1-\delta)^{-1} \left({1\over 2} \right)^{n}
\end{align}
Therefore, for any $\delta \in (0,1)$, the probability that $C_{\text{G}}\le \delta$ goes to zero exponentially with $n$.

Now consider the local cost function $C_{\text{L}}$ given by
\begin{align}
C_{\text{L}}=\Tr\left[O_{\text{L}}V(\thv)\dya{\vec{0}} V(\thv)\ad\right],
\label{eq:localcostfunctionsm}\quad \text{with}\quad O_{\text{L}}=\id - \frac{1}{n}\sum_{j=1}^n \dya{0}_j\otimes\id_{\overline{j}}\,,
\end{align}
and where $\id_{\overline{j}}$ is the identity on all qubits except qubit $j$. To show the dependence of the result on the range of local cost function values, we write a parametrized local cost function $C_{\text{L}}(\bm{\theta};\lambda):= 1-{1\over \lambda n}\sum_{j=1}^{n}\cos^{2}{\theta_{j}\over 2}$ and define $A_{\delta}^{L}:=    \lbrace \bm{\theta}: C_{\text{L}}(\bm{\theta};\lambda)\leq \delta\rbrace$. $C_{\text{L}}$ in the main text is obtained for $\lambda = 1$. The parameter $\lambda$ is introduced to show that the range of validity of the inequality is dependent on the cost function. For $\delta$ in the interval $\delta\in (1-(2\lambda)^{-1},1]$,
\begin{align}
\Pr\left(A_{\delta}^{L}\right) = \Pr\left(\left\lbrace \bm{\theta}:{1\over \lambda n}\sum_{j=1}^{n}\cos^{2}{\theta^j\over 2} \ge 1-\delta \right\rbrace \right) 
\geq {(1-2\lambda(1-\delta))^{2}\over {1\over 2n}+(1-2\lambda(1-\delta))^{2}}
\label{eqn:pz}
\end{align}
which tends to 1 as $n\rightarrow \infty$. The inequality follows from the Paley-Zygmund inequality \cite{paley_zygmund_1932} in the form
\begin{equation}
P(X\ge r E(X))\geq {(1-r)^{2}E(X)^{2}\over \text{Var}X + (1-r)^{2}E(X)^{2}}
\end{equation}
which holds for random variables $X\ge 0$ and scalar $r$ with $0\leq r \leq 1$. In particular, (\ref{eqn:pz}) follows by taking $X={1\over n\lambda} \sum_{j=1}^{n}\cos^{2}{\theta^{j}\over 2}$ and $r=2\lambda(1-\delta)$, so that $r$ varies from $1$ to $0$ as $\delta$ varies from $1-(2\lambda)^{-1}$ to $1$.
\end{proof}

\section{Proof of Proposition 2}
\label{secSM:Prop2}

Let us first recall that in the main text we analyze cost functions $C$ which can be expressed as the expectation value of a given observable $O$ as 
\begin{align}\label{eq:cost_function2}
    C=\Tr\left[ O V(\vec{\theta}) \rho V\ad(\vec{\theta}) \right]\,,
\end{align}
where  $V(\vec{\theta})$ is a parametrized quantum gate sequence and $\rho$ is a general input mixed quantum state on $n$ qubits.
Here we provide a proof for Proposition~2, which we recall for convenience:
\begin{proposition}
\label{prop2sm}
The average of the partial derivative of any cost function of the form~\eqref{eq:cost_function2} with respect to a parameter $\theta^\nu$ in a block $W$ of the ansatz $V(\thv)$ is
\begin{equation}
    \langle\partial_\nu C\rangle_V=0\,,
\end{equation}
provided that either $W_{\text{A}}$ or $W_{\text{B}}$ form a $1$-design.
\end{proposition}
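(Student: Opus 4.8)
The plan is to start from the explicit expression for the partial derivative, Eq.~\eqref{eq:grad-C}, and exploit its commutator structure: I will show that averaging over whichever of $W_{\text{A}}$ or $W_{\text{B}}$ forms a $1$-design collapses the relevant factor to an operator that acts as the identity on the block subsystem $S_w$, which then makes the commutator with $\id_{\overline{w}}\otimes\sigma_\nu$ vanish identically. To set up, abbreviate
\[
\rho_{\text{B}} := (\id_{\overline{w}}\otimes W_{\text{B}})\,V_{\text{L}}\rho V_{\text{L}}\ad\,(\id_{\overline{w}}\otimes W_{\text{B}}\ad)\,, \qquad O_{\text{A}} := (\id_{\overline{w}}\otimes W_{\text{A}}\ad)\,V_{\text{R}}\ad O\, V_{\text{R}}\,(\id_{\overline{w}}\otimes W_{\text{A}})\,,
\]
so that Eq.~\eqref{eq:grad-C} reads $\partial_\nu C = \tfrac{i}{2}\Tr[\rho_{\text{B}}\,[\id_{\overline{w}}\otimes\sigma_\nu,\,O_{\text{A}}]]$. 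Because the blocks of $V(\thv)$ are independent, the average $\langle\,\cdot\,\rangle_V$ factorizes over blocks, so I may perform the integral over $W_{\text{A}}$ (or $W_{\text{B}}$) alone while holding every other block fixed, and then pull the average through the trace and commutator by linearity.

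First I would treat the case in which $W_{\text{A}}$ forms a $1$-design. Since $O_{\text{A}}$ depends on $W_{\text{A}}$ only through the conjugation $(\id_{\overline{w}}\otimes W_{\text{A}}\ad)(\cdots)(\id_{\overline{w}}\otimes W_{\text{A}})$, which is a degree-$(1,1)$ polynomial in the entries of $W_{\text{A}}$, the $1$-design property lets me apply Lemma~\ref{lemma5} (using the left--right invariance of the first moment to accommodate the order of $W_{\text{A}}$ and $W_{\text{A}}\ad$) to obtain $\langle O_{\text{A}}\rangle_{W_{\text{A}}} = \tfrac{1}{2^m}\Tr_w[V_{\text{R}}\ad O V_{\text{R}}]\otimes\id_w$. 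This averaged operator has the form $M_{\overline{w}}\otimes\id_w$, and since $\sigma_\nu$ acts only on $S_w$ while $M_{\overline{w}}$ acts only on $S_{\overline{w}}$, the two commute: $[\id_{\overline{w}}\otimes\sigma_\nu,\,M_{\overline{w}}\otimes\id_w]=0$. Hence $\langle\partial_\nu C\rangle_V = \tfrac{i}{2}\Tr[\rho_{\text{B}}\,[\id_{\overline{w}}\otimes\sigma_\nu,\,\langle O_{\text{A}}\rangle_{W_{\text{A}}}]] = 0$.

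The case in which $W_{\text{B}}$ forms a $1$-design is symmetric. Here I would first use cyclicity of the trace together with the identity $\Tr[A\,[B,C]] = \Tr[[A,B]\,C]$ to rewrite $\partial_\nu C = \tfrac{i}{2}\Tr[[\rho_{\text{B}},\,\id_{\overline{w}}\otimes\sigma_\nu]\,O_{\text{A}}]$, moving the commutator onto the $\rho_{\text{B}}$ factor. Averaging over $W_{\text{B}}$ and again invoking Lemma~\ref{lemma5} gives $\langle\rho_{\text{B}}\rangle_{W_{\text{B}}} = \tfrac{1}{2^m}\Tr_w[V_{\text{L}}\rho V_{\text{L}}\ad]\otimes\id_w$, which is once more of the form $M'_{\overline{w}}\otimes\id_w$ and therefore commutes with $\id_{\overline{w}}\otimes\sigma_\nu$. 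Thus $[\langle\rho_{\text{B}}\rangle_{W_{\text{B}}},\,\id_{\overline{w}}\otimes\sigma_\nu]=0$ and $\langle\partial_\nu C\rangle_V=0$. Combining the two cases proves the proposition.

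The only genuine subtlety — and the one point to get right — is the tensor-product bookkeeping: one must verify that the first-moment twirl of a $1$-design acting on $S_w$ really produces an operator of the form $(\text{operator on }S_{\overline{w}})\otimes\id_w$, and that every such operator commutes with $\id_{\overline{w}}\otimes\sigma_\nu$. This is exactly the content of Lemma~\ref{lemma5}, so no estimate or second-moment calculation is required; the result follows from a single first-moment integration once the commutator is recognized to vanish. I do not expect any hard obstacle here, only the care needed to track which subsystem each factor acts on.
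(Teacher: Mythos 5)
Your proposal is correct and follows essentially the same route as the paper's proof in Supplementary Note 3: both cases reduce to a single first-moment (Lemma~\ref{lemma5}) twirl producing an operator of the form $M_{\overline{w}}\otimes\id_w$, whose commutator with $\id_{\overline{w}}\otimes\sigma_\nu$ vanishes. The only cosmetic difference is that the paper writes down the two equivalent commutator forms of $\partial_\nu C$ (Eqs.~\eqref{eq:gradientWA} and~\eqref{eq:gradientWB}) directly, whereas you derive the second from the first via $\Tr[A[B,C]]=\Tr[[A,B]C]$; the substance is identical.
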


 As discussed in the main text, we employ an Alternating Layered Ansatz, where each layer is composed of $m$-qubits gates or ``blocks''. In particular, each block $W_{kl}(\vec{\theta}_{kl})$ in $V(\vec{\theta})$ can be written as a product of $\zeta_{kl}$ independent gates from a gate alphabet $\mathcal{A}=\{G_{\mu}(\theta)\}$ as 
\begin{equation}
W_{kl}(\vec{\theta}_{kl})=G_{\zeta_{kl}}(\theta_{kl}^{\zeta_{kl}})\ldots G_{\nu}(\theta^\nu_{kl})\ldots G_{1}(\theta^{1}_{kl})\,,     \label{eq;W-gatessm}
\end{equation}
where $\theta_{kl}^\nu$ are continuous parameter, and where $G_\nu(\theta^\nu_{kl})= R_{\nu}(\theta^\nu_{kl}) Q_{\nu}$ with $Q_{\nu}$ an unparametrized gate, and $R_{\nu}(\theta^\nu_{kl}) = e^{-i\theta^\nu_{kl} \sigma_\nu /2 }$ such that $\sigma_\nu$ is a Pauli operator.

Consider now a block $W_{kl}(\vec{\theta}_{kl})$ in the $l$-th layer of the ansatz. For the rest of this Supplementary Information we simply use the notation $W$ when referring to this particular block. First, let $S_w$  denote the $m$-qubit subsystem that contains the qubits $W$ acts on, and let $S_{\overline{w}}$ be the $(n-m)$ subsystem on which $W$ acts trivially, with $\HC_w$, and $\HC_{\overline{w}}$ their respective associated Hilbert spaces. Then, consider a given trainable parameter $\theta^\nu$ in $W$, such that we can express $W=W_{\text{B}} W_{\text{A}}$, with
\begin{equation}\label{eq:WA-WB}
W_{\text{B}}=\prod_{\mu =1}^{\nu-1} G_{\mu}(\theta^{\mu})\,,\quad \text{and}\quad  W_{\text{A}}=\prod_{\mu =\nu}^{\zeta} G_{\mu}(\theta^{\mu})\,.
\end{equation}

Then, we recall that we have defined the forward light-cone $\LC$ of $W$ as all gates with at least one input qubit causally connected to the output qubits of $W$. We can then define $S_\LC$ as the subsystem of all qubits in $\LC$, and $\HC_\LC$ as its associated Hilbert space. Without loss of generality, the trainable gate sequence can be expressed as
\begin{equation}
    V(\vec{\theta})=V_{\text{R}} (\id_{\overline{w}}\otimes W)V_{\text{L}}\,, \label{eq:Ansatz-WSM}
\end{equation}
where  $\id_{\overline{w}}$ indicates the identity in $\HC_{\overline{w}}$, and where we assume without loss of generality that  $V_{\text{R}}$ contains the gates in $\LC$ and all the blocks $W_{kL}$ in the last layer of $V(\thv)$.

\begin{proof}
 The partial derivative of $W$ with respect to the angle $\theta_\nu$ is given by
\begin{align}
    \partial_\nu W = W_{\text{A}}\left(-\frac{i}{2}\sigma_\nu\right)W_{\text{B}}\,, \label{eq:parderW}
\end{align}
where here $\sigma_\nu$ is an operator  $\sigma_\nu:\HC_w\rightarrow\HC_w$, which acts non-trivially on a qubit given qubit $j$ in $\HC_w$, i.e., $\sigma_\nu := (\sigma_\nu)_j\otimes\id_{\overline{j}}$. Hence, by means of Eqs.~\eqref{eq:parderW} and~\eqref{eq:Ansatz-WSM} we have 
\begin{align*}
    \partial_\nu C =&\Tr\left[O\left(\partial_\nu V(\vec{\theta})\right)\rho V\ad(\vec{\theta})+V(\vec{\theta})\rho\left(\partial_\nu V\ad(\vec{\theta})\right)\right]\\
    =&\Tr\left[OV_{\text{R}}\left(\id_{\overline{w}}\otimes W_{\text{A}}\right)\left(\id_{\overline{w}}\otimes\left(-\frac{i}{2}\sigma_\nu\right)\right)\left(\id_{\overline{w}}\otimes W_{\text{B}}\right)V_{\text{L}}\rho V_{\text{L}}\ad\left(\id_{\overline{w}}\otimes W_{\text{B}}\ad W_{\text{A}}\ad\right)V_{\text{R}}\ad\right]\\
    &+\Tr\left[OV_{\text{R}}\left(\id_{\overline{w}}\otimes W_{\text{A}}W_{\text{B}}\right)V_{\text{L}}\rho V_{\text{L}}\ad\left(\id_{\overline{w}}\otimes W_{\text{B}}\ad\right)\left(\id_{\overline{w}}\otimes\left(+\frac{i}{2}\sigma_\nu\right)\right)\left(\id_{\overline{w}}\otimes W_{\text{A}}\ad\right)V_{\text{R}}\ad\right]\,.
\end{align*}
Which can be simplified as
\begin{equation}
    \partial_\nu C=\frac{i}{2} \Tr\left[\left(\id_{\overline{w}}\otimes W_{\text{B}}\right)V_{\text{L}} \rho V_{\text{L}}\ad\left(\id_{\overline{w}}\otimes W_{\text{B}}\ad\right)
    \left[\id_{\overline{w}}\otimes \sigma_\nu,\left(\id_{\overline{w}}\otimes W_{\text{A}}\ad\right)V_{\text{R}}\ad O V_{\text{R}} \left(\id_{\overline{w}}\otimes W_{\text{A}}\right)\right]\right]\,,
    \label{eq:gradientWA}
\end{equation}
or equivalently, as
\begin{equation}
    \partial_\nu C=-\frac{i}{2} \Tr\left[\left(\id_{\overline{w}}\otimes W_{\text{A}}\ad\right)V_{\text{R}}\ad O V_{\text{R}} \left(\id_{\overline{w}}\otimes W_{\text{A}}\right)\left[\id_{\overline{w}}\otimes \sigma_\nu,\left(\id_{\overline{w}}\otimes W_{\text{B}}\right)V_{\text{L}} \rho V_{\text{L}}\ad\left(\id_{\overline{w}}\otimes W_{\text{B}}\ad\right)\right]\right]\,.
    \label{eq:gradientWB}
\end{equation}

In order to compute the expectation value $\langle \partial_\nu C\rangle_V$ we need to consider three different scenarios: (1) when only $W_{\text{A}}$ is a $1$-design; (2) when only $W_{\text{B}}$ is a $1$-design; and (3) when both $W_{\text{A}}$ and $W_{\text{B}}$ form $1$-designs. 
We first consider the case when  $W_{\text{A}}$ is a $1$-design. Since $W_{\text{A}}$, $W_{\text{B}}$, $V_{\text{R}}$ and $V_{\text{L}}$ are independent, we can compute the expectation value over the ansatz as $\langle \partial_\nu C\rangle_V=\langle \langle\partial_\nu C\rangle_{W_{\text{A}}}\rangle_{V_{\text{L}},W_{\text{B}},V_{\text{R}}}$.  From  \eqref{eq:gradientWA} and the definition of a $1$-design in~\eqref{eq:SMt-design}, we can compute
\begin{align}
   \langle \partial_\nu C\rangle_{W_{\text{A}}}=& 
   -\frac{i}{2}   \Tr\left[\left(\id_{\overline{w}}\otimes W_{\text{B}}\right)V_{\text{L}}\rho V_{\text{L}}\ad\left(\id_{\overline{w}}\otimes W_{\text{B}}\ad\right)\left[\id_{\overline{w}}\otimes \sigma_\nu,\int d\mu(W_{\text{A}})\left(\id_{\overline{w}}\otimes W_{\text{A}}\right)V_{\text{R}}\ad O V_{\text{R}} \left(\id_{\overline{w}}\otimes W_{\text{A}}\ad\right)\right]\right]\nonumber\\
    =&-\frac{i}{2}   \Tr\left[\left(\id_{\overline{w}}\otimes W_{\text{B}}\right)V_{\text{L}}\rho V_{\text{L}}\ad\left(\id_{\overline{w}}\otimes W_{\text{B}}\ad\right)
    \left[\id_{\overline{w}}\otimes \sigma_\nu,\frac{1}{2^m}\Tr_w[V_{\text{R}}\ad O V_{\text{R}}]\otimes\id_w\right]\right]\nonumber\\
    =&0\,, 
    \label{eq:SM_gradWAzero}
\end{align}
where in the second equality we used Lemma~\ref{lemma5}. 

On the other hand, if  $W_{\text{B}}$ is a $1$-design, we can now employ  \eqref{eq:gradientWB} to get
\begin{align}
   \langle \partial_\nu C\rangle_{W_{\text{B}}}=& 
   -\frac{i}{2}  \Tr\left[\left(\id_{\overline{w}}\otimes W_{\text{A}}\ad\right) V_{\text{R}}\ad O V_{\text{R}}\left(\id_{\overline{w}}\otimes W_{\text{A}}\right)\left[\id_{\overline{w}}\otimes \sigma_\nu,\int d\mu(W_{\text{B}})\left(\id_{\overline{w}}\otimes W_{\text{B}}\right)V_{\text{L}}\rho V_{\text{L}}\ad\left(\id_{\overline{w}}\otimes W_{\text{B}}\ad\right)\right]\right]\nonumber\\
   =&-\frac{i}{2}  \Tr\left[\left(\id_{\overline{w}}\otimes W_{\text{A}}\ad\right) V_{\text{R}}\ad O V_{\text{R}}\left(\id_{\overline{w}}\otimes W_{\text{A}}\right)\left[\id_{\overline{w}}\otimes \sigma_\nu, \frac{1}{2^m}\Tr_w[V_{\text{L}}\rho V_{\text{L}}\ad]\otimes\id_w\right]\right]\nonumber\\
   =&0\,,
   \label{eq:SM_gradWBzero}
  \end{align}
which follows from the same argument used to derive \eqref{eq:SM_gradWAzero}. Finally, from Eqs.~\eqref{eq:SM_gradWAzero} and \eqref{eq:SM_gradWBzero}, we have $\langle \partial_\nu C\rangle_{V}=0$~~($\forall V_{\text{L}}, V_{\text{R}}$) if $W_{\text{A}}$ and $W_{\text{B}}$ are both $1$-designs.

\end{proof}

\section{Variance of the cost function partial derivative}
\label{secSM:Vari}

In this section, we derive the formula for the variance of the cost function gradient. Namely,  
\begin{align}
 \langle(\partial_\nu C)^2\rangle_{V}=\frac{2^{m-1}\Tr[\sigma_\nu^2]}{(2^{2m}-1)^2}\sum_{\substack{\vec{p}\vec{q}\\\vec{p}'\vec{q}'}}\left\langle\Delta\Omega_{\vec{p}\vec{q}}^{\vec{p}'\vec{q}'}\right\rangle_{V_{\text{R}}}\left\langle\Delta\Psi_{\vec{p}\vec{q}}^{\vec{p}'\vec{q}'}\right\rangle_{V_{\text{L}}}\,,\label{eq:Var-Csm}
\end{align}
with 
\begin{align}
    \Delta\Omega_{\vec{p}\vec{q}}^{\vec{p}'\vec{q}'} &= \Tr[ \Omega_{\vec{q}\vec{p}}\Omega_{\vec{q}'\vec{p}'}] -\frac{\Tr[ \Omega_{\vec{q}\vec{p}}]\Tr[\Omega_{\vec{q}'\vec{p}'}]}{2^{m}}\,, \label{eq:DeltaOmega} \\
    \Delta\Psi_{\vec{p}\vec{q}}^{\vec{p}'\vec{q}'} &=  \Tr[\Psi_{\vec{p}\vec{q}}\Psi_{\vec{p}'\vec{q}'}]-\frac{\Tr[\Psi_{\vec{p}\vec{q}}]\Tr[\Psi_{\vec{p}'\vec{q}'}]}{2^{m}}\,,  \label{eq:DeltaPsi}
\end{align}
and where
\begin{align}
    \Omega_{\vec{q}\vec{p}}&=\Tr_{\overline{w}}\left[(\ketbra{\vec{p}}{\vec{q}}\otimes\id_w)V_{\text{R}}\ad OV_{\text{R}}\right]\,,\label{eq:eq-Omegapq}\\ 
    \Psi_{\vec{p}\vec{q}}&=\Tr_{\overline{w}}\left[(\ketbra{\vec{q}}{\vec{p}}\otimes\id_w)V_{\text{L}} \rho V_{\text{L}}\ad\right]\,\label{eq:eq-Psipq}.
\end{align}

\begin{proof}

As shown in the previous section, $\langle\partial_\nu C\rangle_V=0$ when either $W_{\text{B}}$ or  $W_{\text{A}}$ of~\eqref{eq:WA-WB} are $1$-designs. Hence, we can compute the variance of $\partial_\nu C$ as $\Var[\partial_\nu C]=\langle(\partial_\nu C)^2\rangle_V-\langle\partial_\nu C\rangle_V^2=\langle(\partial_\nu C)^2\rangle_V$. From~\eqref{eq:gradientWB} and Lemma~\ref{lemma4} we have
\begin{align}
    (\partial_\nu C)^2=-\frac{1}{4}\sum_{\substack{\vec{p},\vec{q}\\\vec{p}',\vec{q}'}}\Tr\left[W_{\text{A}} \Omega_{\vec{q}\vec{p}}W_{\text{A}}\ad \Gamma_{\vec{p}\vec{q}}\right]\Tr\left[W_{\text{A}} \Omega_{\vec{q}'\vec{p}'}W_{\text{A}}\ad \Gamma_{\vec{p}'\vec{q}'}\right]\,,\label{eq:SM-var1}
\end{align}
with 
\begin{align}
    \Gamma_{\vec{p}\vec{q}}&=\Tr_{\overline{w}}\left[(\ketbra{\vec{q}}{\vec{p}}\otimes\id_w)\left[\id_{\overline{w}}\otimes\sigma_\nu,(\id_{\overline{w}}\otimes   W_{\text{B}})V_{\text{L}}\rho V_{\text{L}}\ad(\id_{\overline{w}}\otimes  W_{\text{B}}\ad)\right]\right]\nonumber\\
    &=\Tr_{\overline{w}}\left[[\id_{\overline{w}}\otimes\sigma_\nu,(\id_{\overline{w}}\otimes   W_{\text{B}})(\ketbra{\vec{q}}{\vec{p}}\otimes\id_w)V_{\text{L}}\rho V_{\text{L}}\ad(\id_{\overline{w}}\otimes  W_{\text{B}}\ad)]\right]\nonumber\\
    &=\left[\sigma_\nu, W_{\text{B}}\Tr_{\overline{w}}[(\ketbra{\vec{q}}{\vec{p}}\otimes\id_w)V_{\text{L}}\rho V_{\text{L}}\ad]  W_{\text{B}}\ad\right]\nonumber\\
    &=\left[\sigma_\nu, W_{\text{B}}\Psi_{\vec{p}\vec{q}}  W_{\text{B}}\ad\right]\,.
\end{align}

As previously mentioned, if  $W_{\text{A}}$, $W_{\text{B}}$, $V_{\text{R}}$ and $V_{\text{L}}$ are independent, the expectation value of  \eqref{eq:SM-var1} can be computed as $\langle(\partial_\nu C)^2\rangle_V=\langle(\partial_\nu C)^2\rangle_{V_{\text{L}},W_{\text{B}},W_{\text{A}},V_{\text{R}}}$. In addition, if $W_{\text{A}}$ is a $2$-design, we get from Lemma~\ref{lemma3} 
\begin{align}
\langle(\partial_\nu C)^2\rangle_{W_{\text{A}}}&=-\frac{1}{4}\sum_{\substack{\vec{p},\vec{q}\\\vec{p}',\vec{q}'}}\int d\mu(W_{\text{A}})\Tr\left[W_{\text{A}} \Omega_{\vec{q}\vec{p}}W_{\text{A}}\ad \Gamma_{\vec{p}\vec{q}}\right]\Tr\left[W_{\text{A}} \Omega_{\vec{q}'\vec{p}'}W_{\text{A}}\ad \Gamma_{\vec{p}'\vec{q}'}\right]\nonumber\\
    &=-\frac{1}{4}\sum_{\substack{\vec{p},\vec{q}\\\vec{p}',\vec{q}'}}\bigg(\frac{1}{2^{2m}-1}\left(\Tr[ \Omega_{\vec{q}\vec{p}}]\Tr[\Gamma_{\vec{p}\vec{q}}]\Tr[\Omega_{\vec{q}'\vec{p}'}]\Tr[\Gamma_{\vec{p}'\vec{q}'}]+\Tr[ \Omega_{\vec{q}\vec{p}}\Omega_{\vec{q}'\vec{p}'}]\Tr[\Gamma_{\vec{p}\vec{q}}\Gamma_{\vec{p}'\vec{q}'}]\right) \nonumber\\   &\quad-\frac{1}{2^{m}(2^{2m}-1)}\left(\Tr[ \Omega_{\vec{q}\vec{p}}\Omega_{\vec{q}'\vec{p}'}]\Tr[\Gamma_{\vec{p}\vec{q}}]\Tr[\Gamma_{\vec{p}'\vec{q}'}]+\Tr[ \Omega_{\vec{q}\vec{p}}]\Tr[\Omega_{\vec{q}'\vec{p}'}]\Tr[\Gamma_{\vec{p}\vec{q}}\Gamma_{\vec{p}'\vec{q}'}]\right)\bigg)\nonumber\\
    &=-\frac{1}{4(2^{2m}-1)}\sum_{\substack{\vec{p},\vec{q}\\\vec{p}',\vec{q}'}}\bigg(\Tr[ \Omega_{\vec{q}\vec{p}}\Omega_{\vec{q}'\vec{p}'}] -\frac{1}{2^{m}}\Tr[ \Omega_{\vec{q}\vec{p}}]\Tr[\Omega_{\vec{q}'\vec{p}'}]\bigg)\Tr[\Gamma_{\vec{p}\vec{q}}\Gamma_{\vec{p}'\vec{q}'}]\,,\label{eq:SM-varWA}
\end{align}
where in the third equality we used the fact that the trace of a commutator is zero: $\Tr[\Gamma_{\vec{p}\vec{q}}] =0$. 

If $W_{\text{B}}$ is also a $2$-design, then from~\eqref{eq:SM-varWA} we need to compute the following expectation value 
\begin{align}
\langle(\partial_\nu C)^2\rangle_{W_{\text{B}},W_{\text{A}}}&=-\frac{1}{4(2^{2m}-1)}\sum_{\substack{\vec{p},\vec{q}\\\vec{p}',\vec{q}'}}\bigg(\Tr[ \Omega_{\vec{q}\vec{p}}\Omega_{\vec{q}'\vec{p}'}] -\frac{1}{2^{m}}\Tr[ \Omega_{\vec{q}\vec{p}}]\Tr[\Omega_{\vec{q}'\vec{p}'}]\bigg)\int d\mu(W_{\text{B}})\Tr[\Gamma_{\vec{p}\vec{q}}\Gamma_{\vec{p}'\vec{q}'}]\,.\label{eq:labelrev}
\end{align}
Let us first note that 
\begin{align}
    \Gamma_{\vec{p}\vec{q}}\Gamma_{\vec{p}'\vec{q}'}&=\left[\sigma_\nu, W_{\text{B}}\Psi_{\vec{p}\vec{q}}  W_{\text{B}}\ad\right]\left[\sigma_\nu, W_{\text{B}}\Psi_{\vec{p}'\vec{q}'}  W_{\text{B}}\ad\right]\nonumber\\
    &=2\left(\sigma_\nu W_{\text{B}}\Psi_{\vec{p}\vec{q}}  W_{\text{B}}\ad\sigma_\nu W_{\text{B}}\Psi_{\vec{p}'\vec{q}'} W_{\text{B}}\ad\right)-2\left(W_{\text{B}}\sigma_\nu^2W_{\text{B}}\ad\Psi_{\vec{p}\vec{q}}\Psi_{\vec{p}'\vec{q}'}  \right)\,.
    \label{eq:SMVarWB}
\end{align}
This result can be used along with Lemmas~\ref{lemma1} and~\ref{lemma2} to compute the integral in  \eqref{eq:SMVarWB} as
\begin{align}
    \int d\mu(W_{\text{B}})\Tr[\Gamma_{\vec{p}\vec{q}}\Gamma_{\vec{p}'\vec{q}'}]=&2\int d\mu(W_{\text{B}})\Tr\left[\sigma_\nu W_{\text{B}}\Psi_{\vec{p}\vec{q}}  W_{\text{B}}\ad\sigma_\nu W_{\text{B}}\Psi_{\vec{p}'\vec{q}'} W_{\text{B}}\ad\right]\nonumber\\\
    &-2\int d\mu(W_{\text{B}})\Tr\left[W_{\text{B}}\sigma_\nu^2W_{\text{B}}\ad\Psi_{\vec{p}\vec{q}}\Psi_{\vec{p}'\vec{q}'}  \right]\nonumber\\
    =&-\frac{2^{m+1}}{2^{2m}-1}\Tr[\sigma_\nu^2]\left(\Tr[\Psi_{\vec{p}\vec{q}}\Psi_{\vec{p}'\vec{q}'}]-\frac{1}{2^{m}}\Tr[\Psi_{\vec{p}\vec{q}}]\Tr[\Psi_{\vec{p}'\vec{q}'}]\right)\,,
    \label{eq:SMVarInt}
\end{align}
where we used the fact that $\sigma_\nu$ is a Pauli operator, and hence its trace is equal to zero.

Then, combining Eqs.~\eqref{eq:labelrev} and~\eqref{eq:SMVarInt}, we  obtain
\begin{align}
 \langle(\partial_\nu C)^2\rangle_{V}=\frac{2^{m-1}\Tr[\sigma_\nu^2]}{(2^{2m}-1)^2}\sum_{\substack{\vec{p}\vec{q}\\\vec{p}'\vec{q}'}}\left\langle\Delta\Omega_{\vec{p}\vec{q}}^{\vec{p}'\vec{q}'}\right\rangle_{V_{\text{R}}}\left\langle\Delta\Psi_{\vec{p}\vec{q}}^{\vec{p}'\vec{q}'}\right\rangle_{V_{\text{L}}}\,.
\label{eq:SMVarWB2}
\end{align}

\end{proof}

\section{Variance of the cost function partial derivative for a  single layer of  the Alternating Layered Ansatz}
\label{secSM:tensorproduct}

In this section we explicitly evaluate Eqs.~\eqref{eq:Var-Csm}--\eqref{eq:DeltaPsi} for the special case when $V(\thv)$ is composed of a single layer  of  the Alternating Layered Ansatz.  This case is a generalization of the warm-up example of the main text, and constitutes the first step towards our main theorems. In particular, we remark that the tools employed here are the same as the ones used to derive our main result.

\subsection{Variance of global cost function partial derivative}
Let us first recall that the global cost function is
$C_G=1-\Tr\left[OV\rho V\ad\right]$, where $O=\bigotimes_{k=1}^{\xi}\widehat{O}_k$, and where $V(\thv)$ is given by a single layer of the Alternating Layered Ansatz, i.e., $V(\thv)=\bigotimes_{k=1}^{\xi}W_{k1}\left(\vec{\theta}_k\right)$. Moreover, we recall that  we assume without loss of generality that $V_{\text{R}}$ contains the gates in $\LC$ and all the blocks $W_{k1}$ in the last (and in this case only) layer of $V(\thv)$. The latter means that here $V_{\text{L}}=\id$, and $V_{\text{R}}=\id_h\otimes\left(\bigotimes_{k\neq h}W_{k1}\left(\vec{\theta}_k\right)\right)$. In addition, for simplicity, we have defined $W_{k}:= W_{k1}$. Here, $\xi$ is the total number of blocks so that $n=\xi m$, and we assume that the angle $\theta_\nu$ we want to train is in the $h$-th block $W_{h}$. 

\subsubsection{Expectation value over $V_{\text{R}}$}

First, let us compute $\Omega_{\vec{qp}}$. From \eqref{eq:eq-Omegapq}, we obtain
\begin{align}
    \Omega_{\vec{qp}}= \widehat{O}_{h} \prod_{k\neq h}^{\xi}\Tr\left[W_k\ad \widehat{O}_k W_k \ketbra{\vec{p}_k}{\vec{q}_k}\right]\,.
\end{align}
Replacing this result in Eq.~\eqref{eq:DeltaOmega} and employing  Lemma~\ref{lemma3} $(\xi-1)$-times results in
\begin{align*}
    \left\langle\Delta\Omega_{\vec{p}\vec{q}}^{\vec{p}'\vec{q}'}\right\rangle_{V_{\text{R}}}
    =&\frac{1}{(2^{2m}-1)^{\xi-1}}\left(\Tr\left[\widehat{O}_h^2\right]-\frac{1}{2^m}\Tr\left[\widehat{O}_h\right]^2\right)\\
    &\times \prod_{k\neq h}^{\xi} \left(\delta_{(\vec{p},\vec{q})_{S_k}}\delta_{(\vec{p}' ,\vec{q}')_{S_k}}\left(\Tr\left[\widehat{O}_k\right]^2-\frac{1}{2^m}\Tr\left[\widehat{O}_k^2\right]\right)
    +\delta_{(\vec{p'}, \vec{q})_{S_k}}\delta_{(\vec{p}, \vec{q'})_{S_k}}\left(\Tr\left[\widehat{O}_k^2\right]-\frac{1}{2^m}\Tr\left[\widehat{O}_k\right]^2\right)\right)\nonumber\,.
\end{align*}
Now, let us consider $\widehat{O}_k$ $(k=1,2,\cdots, \xi)$ to be rank-$1$ projector, i.e. $\Tr\left[\widehat{O}_k\right]=\Tr\left[\widehat{O}_k^2\right]=\rank\left[\widehat{O}_k\right]=1$. 
Therefore, we can obtain 
\begin{align}\label{eqSM:deltaOmegafinal}
   \left\langle\Delta\Omega_{\vec{p}\vec{q}}^{\vec{p}'\vec{q}'}\right\rangle_{V_{\text{R}}} &= \frac{1}{(2^{2m}-1)^{\xi-1}}\left(1-\frac{1}{2^m}\right)^{\xi} \prod_{k\neq h}^{\xi}\left(\delta_{(\vec{p},\vec{q})_{S_k}}\delta_{(\vec{p}' ,\vec{q}')_{S_k}}
    +\delta_{(\vec{p'}, \vec{q})_{S_k}}\delta_{(\vec{p}, \vec{q'})_{S_k}}\right)\,.
\end{align}

\subsubsection{Expectation value over $V_{\text{L}}$}

Next, let us consider $\Psi_{\vec{pq}}$ in \eqref{eq:eq-Psipq}. Here, we can set $V_{\text{L}}=\id$, which leads to  $\Psi_{\vec{pq}}=\Tr_{\overline{h}}\left[\left(\ketbra{\vec{p}}{\vec{q}}\otimes\id_{h}\right)\rho\right]$.  
Note that any quantum state $\rho$ can be always written as 
\begin{align}\label{eqSM:rho}
  \rho = \sum_\lambda p_\lambda \dya{\psi_\lambda}\,, \quad \text{with}\quad   \ket{\psi_\lambda} = \sum_{\vec{\alpha}^{\lambda}}c_{\vec{\alpha}^{\lambda}}\ket{\vec{\alpha}_{1}^{\lambda}}\otimes\cdots\otimes\ket{\vec{\alpha}_{h}^{\lambda}}\otimes\cdots \otimes \ket{\vec{\alpha}_{\xi}^{\lambda}}\,,
\end{align}
where $\vec{\alpha}:=\vec{\alpha_1}\cdots\vec{\alpha_\xi}$, and where $\vec{\alpha_i}$ are bitstrings of length $m$. Hence we find
\begin{align}\label{eqsm:psi2}
    \Psi_{\vec{pq}} = \sum_{\lambda} p_\lambda  \sum_{\vec{\alpha}^{\lambda},\vec{\alpha'}^{\lambda}}c_{\vec{\alpha}^{\lambda}}c^*_{\vec{\alpha'}^{\lambda}}\left(\prod_{k\neq h}^{\xi}\delta_{(\vec{q},\vec{\alpha}^{\lambda})_{k}}\delta_{(\vec{p},\vec{\alpha'}^{\lambda})_{S_k}}\right)\ketbra{\vec{\alpha}_{h}^{\lambda}}{\vec{\alpha'}_h^{\lambda}}\,.
\end{align}
Then,  since $V_{\text{L}}=\id$, we have $ \left\langle\Delta\Psi_{\vec{p}\vec{q}}^{\vec{p}'\vec{q}'}\right\rangle_{V_{\text{L}}}=\Delta\Psi_{\vec{p}\vec{q}}^{\vec{p}'\vec{q}'}$, and we can use~\eqref{eqsm:psi2} to  get 
\begin{equation}
\label{eqSM:psiFinal}
\begin{split}
  \Delta\Psi_{\vec{p}\vec{q}}^{\vec{p}'\vec{q}'} =& \sum_{\lambda,\lambda'}p_\lambda p_{\lambda'}\!\!\!\!\!\sum_{\substack{\vec{\alpha}^{\lambda},\vec{\alpha'}^{\lambda}\\\vec{\beta}^{\lambda'},\vec{\beta'}^{\lambda'}}}\!\!\!c_{\vec{\alpha}^{\lambda}}c^*_{\vec{\alpha'}^{\lambda}}c_{\vec{\beta}^{\lambda'}}c^*_{\vec{\beta'}^{\lambda'}}\left(\prod_{k\neq h}^{\xi}\delta_{(\vec{q},\vec{\alpha}^{\lambda})_{S_k}}\delta_{(\vec{p},\vec{\alpha'}^{\lambda})_{S_k}}\delta_{(\vec{q'},\vec{\beta}^{\lambda'})_{S_k}}\delta_{(\vec{p'},\vec{\beta'}^{\lambda'})_{S_k}}\right)\\
  &\times\left(\delta_{(\vec{\alpha'}^{\lambda},\vec{\beta}^{\lambda'})_{S_h}}\delta_{(\vec{\alpha}^{\lambda},\vec{\beta'}^{\lambda'})_{S_h}}-\frac{1}{2^m}\delta_{(\vec{\alpha}^{\lambda},\vec{\alpha'}^{\lambda})_{S_h}}\delta_{(\vec{\beta}^{\lambda'},\vec{\beta'}^{\lambda'})_{S_h}}\right)\,.
  \end{split}
\end{equation}
Finally, from~\eqref{eqSM:deltaOmegafinal}, \eqref{eqSM:psiFinal}, and the fact that $\Tr\left[\sigma_\nu^2\right]=2^m$,  we obtain {\small
\begin{align*}
    \Var\left[\partial_\nu C_G\right] =&\frac{2^{2m-1}}{(2^{2m}-1)^2(2^{2m}-1)^{\xi-1}}\left(1-\frac{1}{2^m}\right)^{\xi}
    \sum_{\lambda,\lambda'}p_\lambda p_{\lambda'} \sum_{\substack{\vec{\alpha}^{\lambda},\vec{\alpha'}^{\lambda}\\\vec{\beta}^{\lambda'},\vec{\beta'}^{\lambda'}}}c_{\vec{\alpha}^{\lambda}}c^*_{\vec{\alpha'}^{\lambda}}c_{\vec{\beta}^{\lambda'}}c^*_{\vec{\beta'}^{\lambda'}}\\
    &\times\left(\delta_{(\vec{\alpha'}^{\lambda},\vec{\beta}^{\lambda'})_{S_h}}\delta_{(\vec{\alpha}^{\lambda},\vec{\beta'}^{\lambda'})_{S_h}}-\frac{1}{2^m}\delta_{(\vec{\alpha}^{\lambda},\vec{\alpha'}^{\lambda})_{S_h}}\delta_{(\vec{\beta}^{\lambda'},\vec{\beta'}^{\lambda'})_{S_h}}\right)\prod_{k\neq h} \left(\delta_{(\vec{\alpha}^{\lambda},\vec{\alpha'}^{\lambda})_{S_k}}\delta_{(\vec{\beta}^{\lambda'},\vec{\beta'}^{\lambda'})_{S_k}}+\delta_{(\vec{\alpha}^{\lambda},\vec{\beta'}^{\lambda'})_{S_k}}\delta_{(\vec{\alpha'}^{\lambda},\vec{\beta}^{\lambda'})_{S_k}}\right)\,,
\end{align*}}
where we used the fact that 
{\footnotesize
\begin{align*}
\sum_{\substack{\vec{p}\vec{q}\\\vec{p}'\vec{q}'}}&\left(\prod_{k\neq h}^{\xi}\left(\delta_{(\vec{p}, \vec{q})_{S_k}}\delta_{(\vec{p}', \vec{q}')_{S_k}}
    +\delta_{(\vec{p}',\vec{q})_{S_k}}\delta_{(\vec{p}, \vec{q'})_{S_k}}\right)\delta_{(\vec{q},\vec{\alpha}^{\lambda})_{S_k}}\delta_{(\vec{p},\vec{\alpha'}^{\lambda})_{S_k}}\delta_{(\vec{q'},\vec{\beta}^{\lambda'})_{S_k}}\delta_{(\vec{p'},\vec{\beta'}^{\lambda'})_{S_k}}\right)\\
    &=\prod_{k\neq h} \left(\delta_{(\vec{\alpha}^{\lambda},\vec{\alpha'}^{\lambda})_{S_k}}\delta_{(\vec{\beta}^{\lambda'},\vec{\beta'}^{\lambda'})_{S_k}}+\delta_{(\vec{\alpha}^{\lambda},\vec{\beta'}^{\lambda'})_{S_k}}\delta_{(\vec{\alpha'}^{\lambda},\vec{\beta}^{\lambda'})_{S_k}}\right)\nonumber\,.
\end{align*} }
Let us define {\footnotesize
\begin{align*}
    J:=&\sum_{\lambda,\lambda'}p_\lambda p_{\lambda'} \sum_{\substack{\vec{\alpha}^{\lambda},\vec{\alpha'}^{\lambda}\\\vec{\beta}^{\lambda'},\vec{\beta'}^{\lambda'}}}c_{\vec{\alpha}^{\lambda}}c^*_{\vec{\alpha'}^{\lambda}}c_{\vec{\beta}^{\lambda'}}c^*_{\vec{\beta'}^{\lambda'}}\\
    &\times\left(\delta_{(\vec{\alpha'}^{\lambda},\vec{\beta}^{\lambda'})_{S_h}}\delta_{(\vec{\alpha}^{\lambda},\vec{\beta'}^{\lambda'})_{S_h}}-\frac{1}{2^m}\delta_{(\vec{\alpha}^{\lambda},\vec{\alpha'}^{\lambda})_{S_h}}\delta_{(\vec{\beta}^{\lambda'},\vec{\beta'}^{\lambda'})_{S_h}}\right)\prod_{k\neq h} \left(\delta_{(\vec{\alpha}^{\lambda},\vec{\alpha'}^{\lambda})_{S_k}}\delta_{(\vec{\beta}^{\lambda'},\vec{\beta'}^{\lambda'})_{S_k}}+\delta_{(\vec{\alpha}^{\lambda},\vec{\beta'}^{\lambda'})_{S_k}}\delta_{(\vec{\alpha'}^{\lambda},\vec{\beta}^{\lambda'})_{S_k}}\right)\,,
\end{align*}}
\!\!\!which,  has the form of $J = 1-\frac{1}{2^m}+\sum_{l=1}^{2^{\xi-1}-1}\left(\AC_{\text{L}}-\frac{1}{2^m}\BC_{\text{L}}\right)$,
where $\AC_{\text{L}},~\BC_{\text{L}}$ are the purities of reduced states of $\rho$. The latter can be understood in the following way. 
Let us define the set $\SC=\{S_1,S_2,\cdots, S_\xi\}$, whose elements represents each block. Then, suppose that we want to consider the partial trace of $\rho$ over several subsystems $\HC_{\overline{\vec{K}}}$, where $\overline{\vec{K}}\subset\mathcal{S}$. The reduced state lives in the composite Hilbert space $\HC_{\vec{K}}$, where $\vec{K}=\SC\setminus \overline{\vec{K}}$, and we can write $\vec{\alpha}:= \vec{\alpha}_{\vec{K}}\cdot \vec{\alpha}_{\overline{\vec{K}}}$. Let us define the reduced state as $\rho_{\vec{K}} := \Tr_{\overline{\vec{K}}}\left[\rho\right]$, and its purity can be explicitly written as
\begin{align*}
    \Tr\left[\rho_{\vec{K}}^2\right] = \sum_{\lambda,\lambda'}p_{\lambda} p_{\lambda'} \sum_{\vec{\alpha}^{\lambda},\vec{\beta}^{\lambda'}} c_{\vec{\alpha}^{\lambda}}c^*_{\vec{\alpha}_{\overline{\vec{K}}}^{\lambda}\cdot\vec{\beta}_{\vec{K}}^{\lambda'}}c_{\vec{\beta}^{\lambda'}}c^*_{\vec{\alpha}_{\vec{K}}^{\lambda}\cdot\vec{\beta}_{\overline{\vec{K}}}^{\lambda'}}\,,
\end{align*}
which appears in the expression of $J$.
Since $\Tr\left[\rho_{\vec{K}}^2\right]\leq1$, we have $J<2^{\xi-1}$; therefor, we can write 
\begin{align}\label{eq:SMGlobal}
    \Var\left[\partial_\nu C_G\right] < \frac{2^{2m-2}}{(2^{2m}-1)2^{2m\xi-\xi}(1+2^{-m})^{\xi}}<\frac{1}{2^{\left(2-\frac{1}{m}\right)n}}~~ (\forall m\in\mathbb{N})\,,
\end{align}
where in the last inequality we used the fact that $n=m\xi$. Equation~\eqref{eq:SMGlobal} shows that for the single layer of the alternating layered ansatz the global cost functions presents a barren plateau.

\subsection{Variance of the local cost function partial derivative}
Here we consider the case when $V(\thv)$ is given by a single layer of the Alternating Layered Ansatz, and when the local cost function is
\begin{align*}
    C_{\text{L}}=1-\frac{1}{n}\sum_{j=1}^n\Tr\left[\left(\dya{0}_j\otimes\id_{\overline{j}}\right)V\rho V\ad\right]\,.
\end{align*}
with  $O_{\text{L}} = \frac{1}{n}\sum_{j=1}^n\dya{0}_j\otimes\id_{\overline{j}}$, and
where $j$ denotes the $j$-th qubit. Moreover, let us assume that we are training a parameter $\theta_\nu$ in $h$-th block. 
This case is simpler than the one previously considered for the global cost function as  the gradient is non vanishing only when we measure a qubit $j$ in $S_{h}$. We can then redefine $O_{\text{L}}^h: \HC_h \rightarrow \HC_h$ as $\widehat{O}_{\text{L}}^h=\frac{1}{n}\sum_{j=1}^m\dya{0}_j\otimes\id_{\overline{j}}$,
where each $j$ is such that $j\in S_h$.
Then, from~\eqref{eq:SMVarWB2} we have
\begin{equation}
    \Tr\left[ \widehat{O}_{\text{L}}^h\right] = \frac{m2^{m-1}}{n} \,,\quad
    \Tr\left[\left(\widehat{O}_{\text{L}}^h\right)^2\right]=
    \frac{m(m+1)2^{m-2}}{n^2} \,.
    \label{eq:localoperatortrace}
\end{equation}

From \eqref{eq:eq-Omegapq}, $\Omega_{\vec{qp}}$ can be written as $\Omega_{\vec{qp}}=\delta_{\vec{pq}}W_hO_{\text{L}}^hW_h\ad$, which leads to 
\begin{align}
    \Delta\Omega_{\vec{p}\vec{q}}^{\vec{p}'\vec{q}'} 
    =\frac{m2^{m-2}}{n^2}\delta_{\vec{pq}}\delta_{\vec{p'q'}}\,.
\end{align}
Next, from \eqref{eq:eq-Psipq} we have  $\Psi_{\vec{pq}} = \Tr_{\overline{h}}\left[\left(\ketbra{\vec{q}}{\vec{p}}\otimes\id_{h}\right)\rho\right]$,
and it is straightforward to see that 
\begin{align}
    \sum_{\vec{pq}}\delta_{\vec{pq}}\Psi_{\vec{pq}}=\sum_{\vec{pq}}\delta_{\vec{pq}}\Tr_{\overline{h}}\left[\left(\ketbra{\vec{q}}{\vec{p}}\otimes\id_{h}\right)\rho\right]=\rho_h\,,
\end{align}
where we defined $\rho_h:=\Tr_{\overline{h}}\left[\rho\right]$. Then, from \eqref{eq:SMVarWB2}, and by using $ \left\langle\Delta\Psi_{\vec{p}\vec{q}}^{\vec{p}'\vec{q}'}\right\rangle_{V_{\text{L}}}=\Delta\Psi_{\vec{p}\vec{q}}^{\vec{p}'\vec{q}'}$, and $\Tr[\sigma_\nu^2]=2^m$, we can write 
\begin{align*}
 \Var\left[\partial_\nu C_{\text{L}}\right]=
 \frac{m2^{3(m-1)}}{n^2(2^{2m}-1)^2}\left(\Tr[\rho_h^2]-\frac{1}{2^m}\right)=\frac{m\cdot 2^{3(m-1)}}{n^2(2^{2m}-1)^2} D_{HS}\left(\rho_h,\frac{\id}{2^m}\right)\,,
\end{align*}
where $D_{HS}\left(\rho_h,\id/2^m\right)=\Tr[(\rho_h-\id/2^m)^2]$ is the Hilbert-Schmidt distance between $\rho_h$ and $\id/2^m$. 
If  $D_{HS}\left(\rho_h,\id/2^m\right) \in \Omega\left(1/\poly(n)\right)$,
we have that the variance of the cost function partial derivative is polinomially vanishing with $n$ as
\begin{align}
  \Var[\partial_\nu C_{\text{L}}] \in \Omega\left(\frac{1}{\poly(n)}\right)\,,
\end{align}
and hence in this case $C_{\text{L}}$ presents no  barren plateau.

\section{Proof of Theorem $2$}
\label{sec:proofTheorem2}

First, let us recall that we are considering  $m$-local cost functions  where each operator $O_i$ acts nontrivially on $m$ qubits $O_i=\id_{\overline{m}}\otimes\widehat{O}_i$ (here $\id_{\overline{m}}$ indicates the identity on all but $m$ qubits), and where $\widehat{O}_i$ can be expressed as $\widehat{O}_i=\widehat{O}_i^{\mu_i}\otimes \widehat{O}_{i}^{\mu_i'}$. Hence, we have
\begin{equation}\label{eq:Omlsm}
    O=c_0\id+\sum_{i=1}^N c_i \widehat{O}_i^{\mu_i}\otimes \widehat{O}_{i}^{\mu_i'} \,,
\end{equation}
where  $\widehat{O}_i^{\mu_i}$ are operators acting on $m/2$ qubits which can be written as tensor product of Pauli operators. Here we recall that we have defined  $S_{k}$ as  the $m$-qubit subsystem on which $W_{kL}$ acts and let $\SC = \{S_{k}\}$ be the set of all such subsystems. As detailed in the main text the summation  in Eq.~\eqref{eq:Omlsm} includes two possible cases: First, when  $\widehat{O}_i^{\mu_i}$ ($\widehat{O}_i^{\mu_i'}$) acts on the first (last) $m/2$ qubits of a given $S_{k}$, and second, when $\widehat{O}_i^{\mu_i}$ ($\widehat{O}_i^{\mu_i'}$) acts on the last (first) $m/2$ qubits of a given $S_{k}$ ($ S_{k+1}$). This type of cost function includes any ultralocal (i.e., where the $O_i$ are one-body) cost function, and also VQE Hamiltonians with up to $m/2$ neighbor interactions.

Here we provide a proof of Theorem $2$ in the main text, which we now reiterate for convenience.

\setcounter{theorem}{1}
\begin{theorem}\label{thm2sm}
Consider a trainable parameter  $\theta^\nu$ in a block $W$ of the ansatz in Fig. 3 of the main text.  Let $\Var[\partial_\nu C]$ be the variance of the partial derivative of an $m$-local cost function $C$ (with $O$ given by~\eqref{eq:Omlsm}) with respect to $\theta^\nu$. If $W_{\text{A}}$, $W_{\text{B}}$ of~\eqref{eq:WA-WB}, and each block in $V(\vec{\theta})$ form a local $2$-design, then $\Var[\partial_\nu C]$ is lower bounded by 
\begin{equation}
    G_{n}(L,l)\leq\Var[\partial_\nu C]\,,\label{eq:varMaint2sm}
\end{equation}
with 
\begin{align}\label{eqSM:varMaint22sm}
    G_{n}(L,l)&=
    \frac{2^{m(l+1)-1}}{(2^{2m}-1)^2(2^{m}+1)^{L+l}}\sum_{ i\in i_\LC} \sum_{\substack{(k,k')\in k_{\LC_{\text{B}}} \\ k' \geq k}} c_i^2  \epsilon(\rho_{k,k'}) \epsilon(\widehat{O}_i)\,,
\end{align}
where $i_\LC$ is the set of $i$ indices whose associated operators $\widehat{O}_i$ act on qubits in the forward light-cone $\LC$ of $W$, and $k_{\LC_{\text{B}}}$ is the set of $k$ indices whose associated subsystems $S_k$ are in the backward light-cone $\LC_{\text{B}}$ of $W$. Here we defined the function $\epsilon(M) = D_{\HS}\left(M,\Tr(M)\id / d_M\right)$ where $D_{\HS}$ is the Hilbert-Schmidt distance and $d_M$ is the dimension of the matrix $M$. In addition, $\rho_{k,k'}$ is partial trace of the input state $\rho$ down to the subsystems $S_k S_{k+1}... S_{k'}$. 
\end{theorem}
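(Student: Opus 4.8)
The plan is to start from the master variance formula of Supplementary Note~\ref{secSM:Vari}, namely $\Var[\partial_\nu C]=\frac{2^{m-1}\Tr[\sigma_\nu^2]}{(2^{2m}-1)^2}\sum_{\vec{p}\vec{q}\vec{p}'\vec{q}'}\langle\Delta\Omega_{\vec{p}\vec{q}}^{\vec{p}'\vec{q}'}\rangle_{V_{\text{R}}}\langle\Delta\Psi_{\vec{p}\vec{q}}^{\vec{p}'\vec{q}'}\rangle_{V_{\text{L}}}$, and to exploit the linearity of $C$ in $O$. Writing $O=c_0\id+\sum_i c_i\widehat{O}_i$, the constant and linear pieces drop out of $\Delta\Omega$ (which is built from a second moment), so the variance becomes a double sum over operator labels $i,j$ weighted by $c_ic_j$. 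First I would dispose of the operators lying outside the forward light-cone $\LC$ of $W$: if $S_k\not\subset S_\LC$ then $V_{\text{R}}$ acts trivially on the support of $\widehat{O}_i$, so $\Omega_{\vec{q}\vec{p}}^i\propto\id_w$ and $\Delta\Omega^i=\Tr[\Omega^i\Omega^i]-\Tr[\Omega^i]^2/2^m$ vanishes (this is precisely the content of Lemma~\ref{lemma6}). This restricts the surviving terms to $i\in i_\LC$.

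Next I would evaluate the $V_{\text{R}}$-average $\langle\Delta\Omega^{ij}\rangle_{V_{\text{R}}}$ by integrating block-by-block over the forward light-cone using the Weingarten identities (Lemmas~\ref{lemma2} and~\ref{lemma3}, and their tensor-network form, Lemma~\ref{lemma3.5}). Because the $\widehat{O}_i$ are tensor products of Paulis acting on a single $S_k$, the cross terms with $i\neq j$ generate factors $\Tr[\widehat{O}_i\widehat{O}_j]=0$ and vanish, collapsing the double sum to its diagonal $i=j$ and producing the factor $\epsilon(\widehat{O}_i)=D_{\HS}(\widehat{O}_i,\Tr(\widehat{O}_i)\id/d)$. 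The residual dependence on the free bitstrings is a delta structure of the form $\delta_{(\vec{p},\vec{q})_{S_{\overline{w}}^+}}\delta_{(\vec{p}',\vec{q}')_{S_{\overline{w}}^+}}\delta_{(\vec{p},\vec{q}')_{S_{\overline{w}}^-}}\delta_{(\vec{p}',\vec{q})_{S_{\overline{w}}^-}}$, arising from the contractions of $P_{\vec{p}\vec{q}}=\ketbra{\vec{q}}{\vec{p}}$.

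I would then feed these deltas into $\langle\Delta\Psi_{\vec{p}\vec{q}}^{\vec{p}'\vec{q}'}\rangle_{V_{\text{L}}}$; performing the constrained bitstring sum identifies the result with an averaged Hilbert--Schmidt distance $\langle D_{\HS}(\tilde{\rho}^-,\Tr_w[\tilde{\rho}^-]\otimes\id/2^m)\rangle_{V_{\text{L}}}$, where $\tilde{\rho}=V_{\text{L}}\rho V_{\text{L}}\ad$ and $\tilde{\rho}^-$ is its reduction onto $S_w\cup S_{\overline{w}}^-$. Using contractivity of $D_{\HS}$ under the partial trace I would lower bound this by $D_{\HS}(\tilde{\rho}_w,\id/2^m)/2^{m(L-l+2)/2}$ with $\tilde{\rho}_w=\Tr_{S_{\overline{w}}^-}[\tilde{\rho}^-]$, reducing the inequality in the theorem to a statement about a single reduced block. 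This inequality, together with the constant-coefficient bound below, is what turns the otherwise exact expression into the claimed lower bound $G_n(L,l)$.

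Finally I would compute $\langle D_{\HS}(\tilde{\rho}_w,\id/2^m)\rangle_{V_{\text{L}}}$ by iterated Weingarten integration over the blocks of $V_{\text{L}}$ in the backward light-cone, obtaining $\sum_{(k,k')\in k_{\LC_{\text{B}}},\,k'\geq k}t_{k,k'}\,\epsilon(\rho_{k,k'})$, where each integration contributes a per-layer factor $2^{m/2}/(2^m+1)$, so that $t_{k,k'}\geq 2^{ml}/(2^m+1)^{2l}$ uniformly. Collecting the prefactors of the master formula (with $\Tr[\sigma_\nu^2]=2^m$) and the light-cone factors then assembles $G_n(L,l)$. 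The main obstacle I anticipate lies in these last two steps: correctly carrying out and bookkeeping the iterated Weingarten integration over the (one- or two-dimensional) arrangement of blocks, tracking which reduced states $\rho_{k,k'}$ survive and how each integration accumulates the $2^{m/2}/(2^m+1)$ factor, and establishing the uniform lower bound on $t_{k,k'}$ jointly with the partial-trace lower bound on $D_{\HS}$. Both demand careful control of the light-cone geometry rather than a routine calculation, and it is there that the $(2^m+1)^{L+l}$ scaling---hence the eventual exponential-in-$L$ behavior---is pinned down.
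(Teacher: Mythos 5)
Your proposal follows essentially the same route as the paper's proof in Supplementary Note 6: the master variance formula, restriction to $i\in i_\LC$, block-by-block Weingarten integration collapsing the double sum to its diagonal via the traceless/non-overlapping-support argument (Lemma 6), identification of the constrained bitstring sum with $D_{\HS}(\tilde\rho^-,\Tr_w[\tilde\rho^-]\otimes\id/2^m)$, the norm-equivalence/partial-trace chain of inequalities giving the $2^{-m(L-l+2)/2}$ factor, and the backward-light-cone integration producing $\sum t_{k,k'}\epsilon(\rho_{k,k'})$ with $t_{k,k'}\geq 2^{ml}/(2^m+1)^{2l}$. The assembled prefactors reproduce $G_n(L,l)$ exactly, and you correctly locate the genuine labor in the light-cone bookkeeping of the iterated integrations.
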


\begin{proof}

Let us first consider the case when the operators  $O_i$ are of the form~\eqref{eq:Omlsm} and act non trivially in a given subsystem of $\mathcal{S}$. We can expand 
\begin{equation}
    \Var[\partial_\nu C]=\frac{2^{m-1}\Tr[\sigma_\nu^2]}{(2^{2m}-1)^2}\sum_{i,j}\sum_{\substack{\vec{p}\vec{q}\\\vec{p}'\vec{q}'}}c_ic_j\left\langle\Tr[ \Omega_{\vec{q}\vec{p}}^i\Omega_{\vec{q}'\vec{p}'}^j] -\frac{\Tr[ \Omega_{\vec{q}\vec{p}}^i]\Tr[\Omega_{\vec{q}'\vec{p}'}^j]}{2^{m}}\right\rangle_{V_{\text{R}}}\left\langle\Delta\Psi_{\vec{p}\vec{q}}^{\vec{p}'\vec{q}'}\right\rangle_{V_{\text{L}}}\,,\label{eq:varlocal}
\end{equation}
where we defined
\begin{align}
    \Omega_{\vec{q}\vec{p}}^i&=\Tr_{\overline{w}}\left[(\ketbra{\vec{p}}{\vec{q}}\otimes\id_w)V_{\text{R}}\ad O_i V_{\text{R}}\right]\,. 
    \label{eq:Omegamni}
\end{align}

\subsubsection{Expectation value over $V_{\text{R}}$}

We recall here that  we assume without loss of generality that  $V_{\text{R}}$ contains the gates in $\LC$ and all the blocks $W_{kL}$ in the last layer of $V(\thv)$. Hence,  we can express
\begin{align}\label{eq:V-LCb}
    V_{\text{R}} =   \VLb\otimes \VL\,,
\end{align}
where $\VL$ contains  all the blocks in the forward light-cone $\LC$ of $W$, and where $\VLb$ consists of all the remaining blocks in the last layer of the ansatz.  When analyzing the operators $\Omega_{\vec{p}\vec{q}}^i$ we have to consider two cases: 1) when $O_i$ only acts non-trivially on qubits in $S_{\LCb}$, and 2)  when $\widehat{O}_i$ acts non-trivially on qubits in  $S_\LC$. If $O_i$ only acts non-trivially on qubits in $S_{\LCb}$, it is straightforward to show from~\eqref{eq:Omegamni} that $\Omega_{\vec{p}\vec{q}}^i\propto\id_w$. Hence, in this case, we find
\begin{equation}
    \Tr[ \Omega_{\vec{q}\vec{p}}^i\Omega_{\vec{q}'\vec{p}'}^j] -\frac{\Tr[ \Omega_{\vec{q}\vec{p}}^i]\Tr[\Omega_{\vec{q}'\vec{p}'}^j]}{2^{m}}=0\,.
\end{equation}
Let us then define $i_\LC$ as the set of $i$ indices whose associated operators $O_i$ act on qubits in the forward light-cone $\LC$ of $W$.  In what follows we
assume that the indexes $i,j\in i_\LC$, i.e., we assume that $O_i$ and $O_j$ act non-trivially on qubit in $S_\LC$. The latter leads to
\begin{equation}
    \left\langle\Tr[ \Omega_{\vec{q}\vec{p}}^i\Omega_{\vec{q}'\vec{p}'}^j] -\frac{\Tr[ \Omega_{\vec{q}\vec{p}}^i]\Tr[\Omega_{\vec{q}'\vec{p}'}^j]}{2^{m}}\right\rangle_{V_{\text{R}}}=\delta_{(\vec{p},\vec{q})_{S_{\LCb}}}\delta_{(\vec{p}',\vec{q}')_{S_{\LCb}}}\left\langle\Tr[ \Omega_{\vec{q}\vec{p}}^i\Omega_{\vec{q}'\vec{p}'}^j] -\frac{\Tr[ \Omega_{\vec{q}\vec{p}}^i]\Tr[\Omega_{\vec{q}'\vec{p}'}^j]}{2^{m}}\right\rangle_{V_{\LC}}. \label{eq:EV-pre-TN}
\end{equation}
Here  the delta functions arise from~\eqref{eq:Omegamni} by noting that $V_{\text{R}}\ad O_i V_{\text{R}}=V_\LC\ad O_i V_\LC$.

In order to explicitly evaluate the  expectation value in  \eqref{eq:EV-pre-TN} we use the fact that each block in the layered ansatz is a $2$-design, and hence one can algorithmically integrate over each block using the Weingarten calculus. Specifically, as discussed in the main text we employ the tensor network representation of  $\Tr[ \Omega_{\vec{q}\vec{p}}^i\Omega_{\vec{q}'\vec{p}'}^i]$ and $\Tr[ \Omega_{\vec{q}\vec{p}}^i]\Tr[\Omega_{\vec{q}'\vec{p}'}^j]$, and we use the  {\it Random Tensor Network Integrator} (RTNI) package of Ref.~\cite{Fukuda_2019}, which allows for the computation of averages of tensor networks containing multiple Haar-distributed random unitary matrices and deterministic symbolic tensors.

Using this procedure, $\langle\dots\rangle_{V_{\LC}}$ can be computed by integrating over each block inside of $V_\LC$ over the unitary group with respect to the Haar measure.  After each integration the result of the average is a sum of up to four new tensor  according to  \eqref{eq:delta}, and Lemmas~\ref{lemma2} and~\ref{lemma3}. After all the blocks in $V_\LC$ have been integrated, the final result can be expressed as
\begin{equation}
    \left\langle\Tr[ \Omega_{\vec{q}\vec{p}}^i\Omega_{\vec{q}'\vec{p}'}^j] -\frac{\Tr[ \Omega_{\vec{q}\vec{p}}^i]\Tr[\Omega_{\vec{q}'\vec{p}'}^j]}{2^{m}}\right\rangle_{V_{\LC}}=\sum_\tau t_\tau^{ij} 
    \delta_{(\vec{p},\vec{q})_{S_{\overline{\tau}}}}\delta_{(\vec{p}',\vec{q}')_{S_{\overline{\tau}}}}\delta_{(\vec{p},\vec{q}')_{S_\tau}}\delta_{(\vec{p}',\vec{q})_{S_\tau}}\Delta O^{ij}_\tau\,,
    \label{eq:Tn_Sumsm}
\end{equation}  
where $t_\tau\in \mathbb{R}$, $S_\tau\cup S_{\overline{\tau}}=S_\LC\cap S_{\overline{w}}$ (with $S_\tau\neq\emptyset)$, and where we have defined
\begin{equation}
\Delta O^{ij}_\tau=\Tr_{x_\tau y_\tau}\left[\Tr_{z_\tau}\left[O_i\right]\Tr_{z_\tau}\left[O_j\right]\right]-\frac{\Tr_{x_\tau}\left[\Tr_{y_\tau z_\tau}\left[O_i\right]\Tr_{y_\tau z_\tau}\left[O_j\right]\right]}{2^m}\,.
\label{eq:DeltaO}
\end{equation}
Here we use the notation $\Tr_{x_\tau}$ to indicate the trace over the Hilbert space associated with subsystem $S_{x_\tau}$. We also define $\mathcal{S}^\LC=\{S_k:S_k\subset S_\LC\}$ as the set of subspaces $S_k$ which belong to the light-cone $\LC$, so that we have $S_{y_\tau}\in\mathcal{S}^\LC$,  $S_{x_\tau}\cup S_{y_\tau}\cup S_{z_\tau}=S_\LC$, and $S_{x_\tau},S_{z_\tau}\in\mathcal{P}(\mathcal{S}^\LC)$, with $\mathcal{P}(\mathcal{S}^\LC)=\{\emptyset, S_1,\ldots,S_\xi,S_1\cup S_2\ldots\}$ the power set of $\mathcal{S}^\LC$. If one chooses $S_{z_\tau}=\emptyset$ (or $S_{x_\tau}=\emptyset$), with associated Hilbert space $H_{0}=\{\vec{0}\}$, we define $\Tr_{\emptyset}\left[O\right]:=O$. 

\begin{figure}
    \centering
    \includegraphics[width=.95\columnwidth]{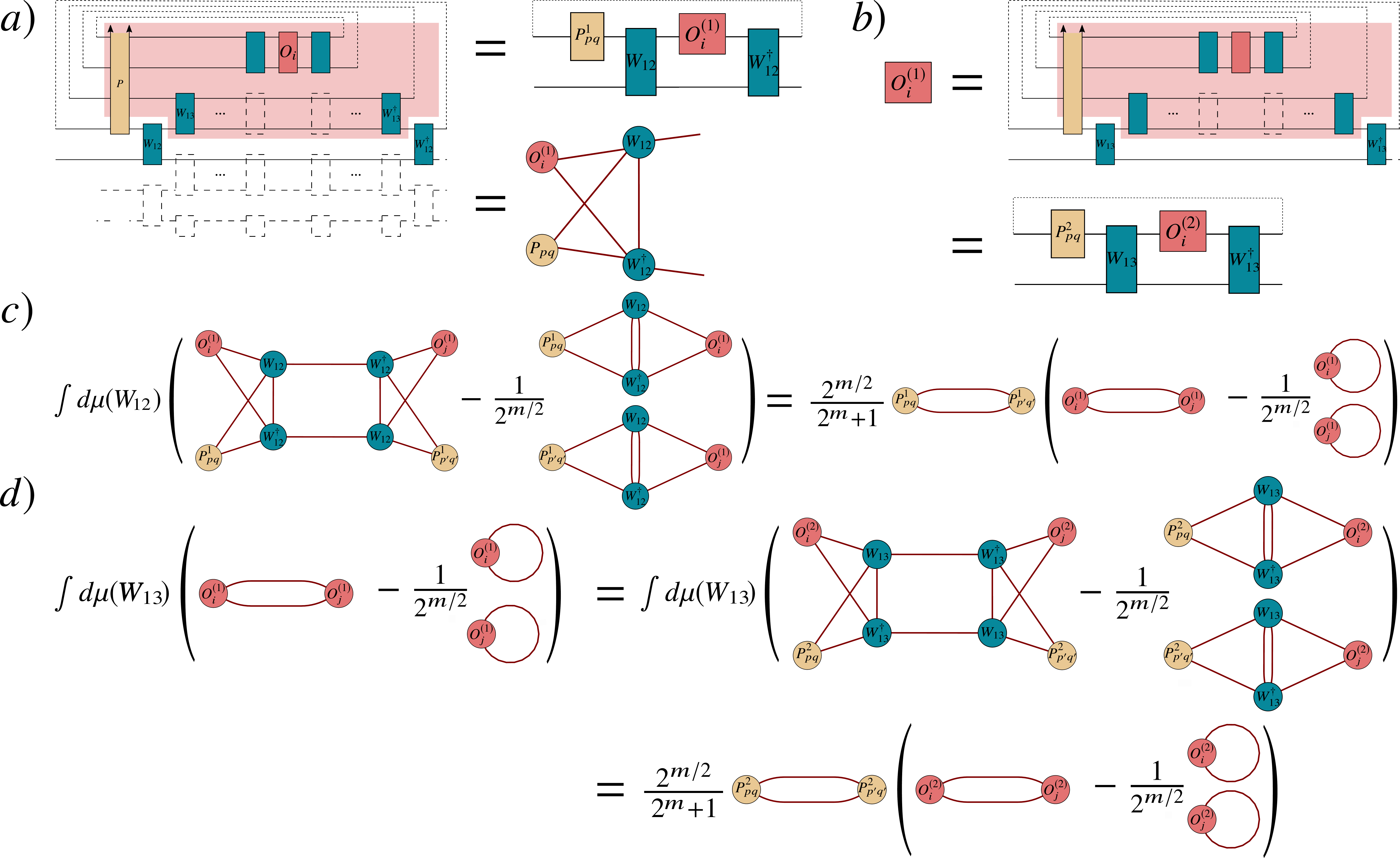}
    \caption{Schematic representation of the method employed to derive Eq.~\eqref{eq:Tn_Sumsm}. Here we consider the case when $O_i$ and $O_j$ act nontrivially only on the topmost $m$ qubits in $V_{\LC}$. (a) Tensor-network representation of the term $\Omega_{\vec{q}\vec{p}}^i$.     All but $L-l$ blocks in $V_\LC$ simplify to identity when computing $V_\LC\ad O_i V_\LC$. One can always define operators  $O_i^{(1)}$, and $O_j^{(1)}$ such that Eq.~\eqref{eq:proof1} holds. Here $W_{12}$ is the topmost block in the first layer of $V_\LC$, and we define $P^1_{\vec{p}\vec{q}}$ as a projector operator acing on $m/2$ qubits.  (b) Tensor-network representation of the operator $O_i^{(1)}$. Note that from $O_i^{(1)}$ we can also define an operator $O_i^{(2)}$ which will be of the same form as $O_i^{(1)}$. (c) By means of the Weingarten calculus we can integrate over $W_{12}$ to compute the expectation value of Eq.~\eqref{eq:proof1}. The result is a single tensor. This result can be obtained by employing Eq.~\eqref{eq:lemma6eq1} of Lemma~\ref{lemma3.5}. (d) In order the integrate the next block we use the operators $O_i^{(2)}$ to express the tensor networks in a form that coincides with panel (c). We can then repeat this procedure $L-l$ times in order to integrate over all blocks and compute the expectation value in \eqref{eq:EV-pre-TN}. The final result is of the form of Eq.~\eqref{eq:proofeq3}.    }
    \label{fig:SM_Proof}
\end{figure}

Equation~\eqref{eq:Tn_Sumsm} can be derived by iteratively applying Lemma~\ref{lemma3.5}  each time a block in  $V_\LC$ is integrated in~\eqref{eq:EV-pre-TN}, and noting that at each integration the resulting tensor can always be written in a way such that Lemma~\ref{lemma3.5} can be applied again. As an example, let us consider the case when $O_i$ and $O_j$ act nontrivially only on the topmost $m$ qubits in $V_{\LC}$. As shown in Sup. Fig.~\ref{fig:SM_Proof}(a), all but $L-l$ blocks in $V_\LC$ simplify to identity when computing $V_\LC\ad O_i V_\LC$. Then, let us 
denote by $W_{12}$ the topmost gate in the first layer in $V_{\LC}$, and let $\widetilde{S}_1$, $\widetilde{S}_2$ be the sets of $m/2$ adjacent qubits (with associated Hilbert spaces $\widetilde{\HC}_1$, $\widetilde{\HC}_2$) such that $W_{12}$ acts on  $\widetilde{\HC}_1\otimes \widetilde{\HC}_2$, and such that $\widetilde{S}_2\subset S_w$. One can always write
\begin{equation}
  \left\langle\Tr[ \Omega_{\vec{q}\vec{p}}^i\Omega_{\vec{q}'\vec{p}'}^j] -\frac{\Tr[ \Omega_{\vec{q}\vec{p}}^i]\Tr[\Omega_{\vec{q}'\vec{p}'}^j]}{2^{m}}\right\rangle_{V_{\LC}}\propto\left\langle \Tr_{\widetilde{\HC}_2}\left[\Tr_{\widetilde{\HC}_1}[\Omega_1] \Tr_{\widetilde{\HC}_1}[\Omega'_1]\right]-\frac{\Tr[\Omega_1]\Tr[\Omega'_1]}{2^m}\right\rangle_{V_{\LC}}\,,\label{eq:proof1}
\end{equation}
 with $\Omega_{1}= W_{12} O_i^{(1)}W_{12}\ad(\ketbra{\vec{p}}{\vec{q}}_{\widetilde{S}_1}\otimes\id_{\widetilde{S}_2})$,  $\Omega_{1}'= W_{12} O_j^{(1)} W_{12}\ad(\ketbra{\vec{p}'}{\vec{q}'}_{\widetilde{S}_1}\otimes\id_{\widetilde{S}_2})$, and where $O_i^{(1)}$, and $O_j^{(1)}$ are defined according to Sup. Fig.~\ref{fig:SM_Proof}(a). The proportionallity factor in Eq.~\eqref{eq:proof1} is given by delta functions over $\vec{p}$, and $\vec{q}$. Moreover, the right-hand side of~\eqref{eq:proof1} is exactly of the form~\eqref{eq:lemma6Proj}, and hence  by applying  Lemma~\ref{lemma3.5} (or more specifically, Eq.~\eqref{eq:lemma6eq1}) one finds the result of Sup. Fig.~\ref{fig:SM_Proof}(c). As schematically depicted in Sup. Fig.~\ref{fig:SM_Proof}(b), one can then define operators $ O_i^{(2)}$ and repeat this calculation $L-l$ times as in Sup. Fig.~\ref{fig:SM_Proof}(d). One finally finds  
\begin{equation}
    \left\langle\Tr[ \Omega_{\vec{q}\vec{p}}^i\Omega_{\vec{q}'\vec{p}'}^j] -\frac{\Tr[ \Omega_{\vec{q}\vec{p}}^i]\Tr[\Omega_{\vec{q}'\vec{p}'}^j]}{2^{m}}\right\rangle_{V_{\LC}}\propto \frac{2^{m(L-l)/2}}{(2^{m}+1)^{L-l}} \left(\Tr\left[O_iO_j\right]-\frac{\Tr\left[O_i\right]\Tr\left[O_j\right]}{2^m}\right)
    \,, \label{eq:proofeq3}
\end{equation}  
where once again the proportionally factor is given by delta functions over $\vec{p}$ and $\vec{q}$. For more general operators $O_i$ and $O_j$, each time a block is integrated, one can always rewrite  the resulting non-trivial terms in the form of either Eq.~\eqref{eq:lemma6Proj}, or Eq.~\eqref{eq:lemma6NoProj}. Remarkably, this means that the final result can always be expressed in terms of contractions of the form~\eqref{eq:DeltaO}, and hence Eq.~\eqref{eq:Tn_Sumsm} holds.

As previously mentioned, Eq.~\eqref{eq:Tn_Sumsm} is valid for arbitrary operators $O_i$ and $O_j$. However, from Lemma~\ref{lemma6} we know that if $O_i$ and $O_j$ have no overlapping support, then $\Delta O^{ij}_\tau=0$, for all $\tau$. Hence we only have to consider the cases when $i=j$. Moreover, if $\widehat{O}_i$ act non trivially in a given subsystem of $\mathcal{S}$ the summation in Eq.~\eqref{eq:Tn_Sumsm} simplifies and we find
\begin{equation}
    \left\langle\Tr[ \Omega_{\vec{q}\vec{p}}^i\Omega_{\vec{q}'\vec{p}'}^j] -\frac{\Tr[ \Omega_{\vec{q}\vec{p}}^i]\Tr[\Omega_{\vec{q}'\vec{p}'}^j]}{2^{m}}\right\rangle_{V_{\LC}}=
    \epsilon(\widehat{O}_i)\sum_\tau \hat{t}_\tau^{ii} \delta_{(\vec{p},\vec{q})_{S_{\overline{\tau}}}}\delta_{(\vec{p}',\vec{q}')_{S_{\overline{\tau}}}}\delta_{(\vec{p},\vec{q}')_{S_\tau}}\delta_{(\vec{p}',\vec{q})_{S_\tau}} \,,
    \label{eq:Tn_Sum-localsm}
\end{equation}  
where we now denote the coefficients as $\hat{t}_\tau^{ii}$ since we now have  $\hat{t}_\tau^{ii}\geq 0$,  and where 
\begin{align}\label{eq:epsiloniSI}
    \epsilon(\widehat{O}_i)=\Tr\left[\widehat{O}_i^2\right]-\frac{\Tr\left[\widehat{O}_i\right]^2}{2^m}=D_{HS}\left(\widehat{O}_i,\Tr[\widehat{O}_i]\frac{\id}{2^m}\right)\,.
\end{align}
Moreover, we also find that the following inequality holds $\forall i$
\begin{equation}
    \sum_\tau \hat{t}_\tau^{ii}\geq \frac{2^{m(L-l)/2}}{(2^{m}+1)^{L-l}}\,, \label{eq:lower-bound-tk}
\end{equation}
where we recall that $L$ is the number of layers in the ansatz, and that the block $W$ is in  the $l$th-layer of $V(\vec{\theta})$.

Combining Eqs.~\eqref{eq:varlocal} and~\eqref{eq:Tn_Sum-localsm}, we find
\begin{equation}
    \Var[\partial_\nu C]=\frac{2^{m-1}\Tr[\sigma_\nu^2]}{(2^{2m}-1)^2}\sum_{\substack{\vec{p}\vec{q}\\\vec{p}'\vec{q}'}}\sum_{i\in i_\LC}c_i^2\epsilon(\widehat{O}_i)\sum_\tau \hat{t}_\tau^{ii} \delta_{(\vec{p},\vec{q})_{S_{\LCb}\cup S_{\overline{\tau}} }}\delta_{(\vec{p}',\vec{q}')_{ S_{\LCb}\cup S_{\overline{\tau}}}}\delta_{(\vec{p},\vec{q}')_{S_\tau}}\delta_{(\vec{p}',\vec{q})_{S_\tau}}
    \left\langle\Delta\Psi_{\vec{p}\vec{q}}^{\vec{p}'\vec{q}'}\right\rangle_{V_{\text{L}}}\,.\label{eq:varlocalfull}
\end{equation}

Before proceeding to compute the expectation value over $V_{\text{L}}$, let us analyze the lower bound  in~\eqref{eq:lower-bound-tk}. For simplicity of notation, let us define $T_\tau:=\sum_\tau \hat{t}_\tau^{ii}$. In what follows we compare $T_\tau$ for two relevant (and extremum) cases: 1) when $O_i$ acts nontrivially on the topmost $m$ qubits in $V_{\LC}$ (see panel (a) of Fig.~\eqref{fig:SM_Proof_New}), and 2) when  $O_i$ acts nontrivially on $S_w$ (see panel (b) of Fig.~\eqref{fig:SM_Proof_New}).  As discussed below, $T_\tau$ is larger for Case 2, that for Case 1. 

\begin{figure}
    \centering
    \includegraphics[width=.95\columnwidth]{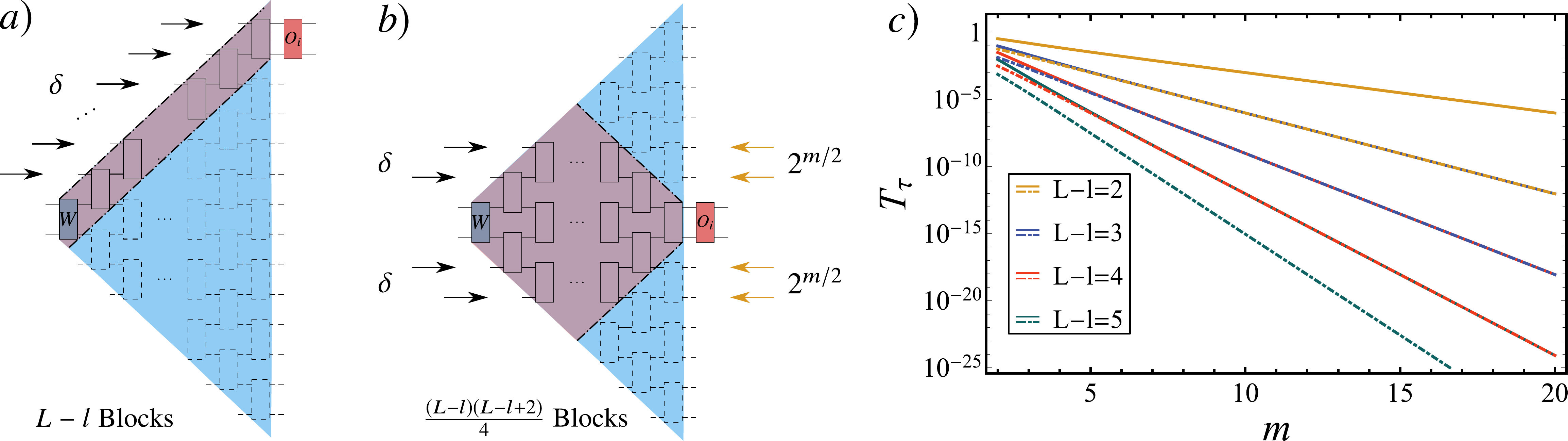}
\caption{a) Schematic representation of Case 1. The operator $O_i$ acts nontrivially on the $m$ topmost qubits in $V_{\LC}$. Note that this scenario corresponds to the one analyzed in Sup. Fig.~\ref{fig:SM_Proof}. Here one needs to integrate over $(L-l)$ blocks. The arrows indicate the tensor indexes that, when contracted after integrating over each block, will lead to  delta functions in $\vec{p}$ and $\vec{q}$. b) Schematic representation of Case 2. The operator $O_i$ acts non trivially on $S_w$. Here one needs to integrate over $(L-l)(L-l+2)/4$ blocks. For the case of $L$ odd, the operator $O_i$ acts nontrivially on the top $m/2$ qubits of $S_w$ and the $m/2$ qubits above those. The arrows indicate the tensor indexes that, when contracted after integrating over each block, will lead to  either delta functions in $\vec{p}$ and $\vec{q}$, or to traces over the identity on $m/2$ qubits. c) Solid lines represent the value of $T_\tau$ versus $m$ for Case 2 in panel b) (i.e., from Eq.~\eqref{eq:Ttau2}).    Dashed lines represent the value of $T_\tau$ versus $m$ for Case 1 in panel a) (i.e., from Eq.~\eqref{eq:Ttau1}). The $y$ axis is shown in a logarithmic scale. For all values of $(L-l)$ considered the values of $T_\tau$ Case 2 are larger that those Case 1. For a given value of $(L-l)$ the difference between values of $T_\tau$ in~\eqref{eq:Ttau1}  and in~\eqref{eq:Ttau2} scales as  $1/2^{m(L-l-1)/2}$. The previous explains why as $m$ increases we the dashed line for $(L-l)$ converges to the solid line with $(L-l+1)$.    }
    \label{fig:SM_Proof_New}
\end{figure}

As schematically depicted in Sup. Fig.~\eqref{fig:SM_Proof_New}(a), for Case 1 one needs to integrate over $(L-l)$ blocks, and we know from Eq.~\eqref{eq:proofeq3} (and Sup. Fig.~\ref{fig:SM_Proof}) that 
\begin{equation}\label{eq:Ttau1}
T_\tau=\frac{2^{m(L-l)/2}}{(2^{m}+1)^{L-l}}\,.    
\end{equation}
On the other hand, for Case 2 one needs to integrate over $(L-l)(L-l+2)/4$ blocks. While a closed formula cannot be derived for this case, we can explicitly integrate over each block to obtain:
\begin{equation}\label{eq:Ttau2}
     T_\tau =
    \begin{cases}
      \frac{2^{3m/2} (1 + 2^{m/2})^2}{(1 + 2^m)^3} & \text{if $L-l=2$}\\
      \frac{2^{2m} (1 + 2^{m/2})^2 (1 + 2^{m/2} + 2^{m})}{1 + d^2)^5} & \text{if $L-l=3$}\\
      \frac{2^{5m/2} (1 + 2^{m/2} + 2^{m}) (1 + 2^{m/2} (3 + 2^{m/2} (8 + 2^{m/2} (3 + 2^{m/2}))))}{(1 + 2^m)^7} & \text{if $L-l=4$}\\
      \frac{2^{3m} (1 + 2^{m/2} (2 + 2^{m/2} (4 + 2^{m/2} (2 + 2^{m/2})))) (1 + 
   2^{m/2} (3 + 2^{m/2} (8 + 2^{m/2} (3 + 2^{m/2}))))}{(1 + 2^m)^9} & \text{if $L-l=5$}\\
    \end{cases} \,.
\end{equation}
For all cases we can see from Sup. Fig.~\eqref{fig:SM_Proof_New}(c) that the values of $ T_\tau$ for Case 2 are larger than the ones obtained for Case 1.

The latter can be understood from two key facts. First, that while integrating over a block leads to coefficients smaller than one according to~\eqref{eq:delta}, the more blocks one integrates over, the more contributions we get for $\epsilon(\widehat{O}_i)$. Second, from the fact that in Case 2 the integration over some blocks will lead to multiplicative factors $2^{m/2}$ which will compensate the factors from~\eqref{eq:delta}. These large factors $2^{m/2}$  essentially arise when integrating blocks in $V_\mathcal{L}$ which act on qubits which one does not measure, and which hence lead to traces over the identity over $m/2$ qubits.  Note that these factors do not arise in Case I. In Sup. Fig.~\eqref{fig:SM_Proof_New}(a) we have schematically indicated by arrows the indexes that  will lead  to delta functions in $\vec{p}$ and $\vec{q}$ after each block is integrated. Each time we integrate a block  containing these indexes one obtains tensor contractions of $\ketbra{\vec{p}}{\vec{q}}$. On the other hand, for Case 2, one will not only get these tensor contractions in $\vec{p}$ and $\vec{q}$, but as indicated in Sup.  Fig.~\eqref{fig:SM_Proof_New}(b) we have also indexes that will lead to identity-tensor contractions and hence to factors $2^{m/2}$.

In fact, for a given $L$ the difference between the $T_\tau$ in~\eqref{eq:Ttau1} and in~\eqref{eq:Ttau2} scales as  $1/2^{m(L-l-1)/2}$, which means that as $(L-l)$ increases the inequality in~\eqref{eq:lower-bound-tk} becomes tighter. Finally, we remark that the previous explains why in Fig.~\eqref{fig:SM_Proof_New}(c) the values of $T_\tau$ from Case 1 and fixed $(L-l)$ converge to the values of Case 2 and $(L-l+1)$.

\subsubsection{Expectation value over $V_{\text{L}}$}

Let us now consider the term
\begin{equation}
    \left\langle\Delta\Psi_{\vec{p}\vec{q}}^{\vec{p}'\vec{q}'}\right\rangle_{V_{\text{L}}}=\left\langle\Tr[\Psi_{\vec{p}\vec{q}}\Psi_{\vec{p}'\vec{q}'}]-\frac{\Tr[\Psi_{\vec{p}\vec{q}}]\Tr[\Psi_{\vec{p}'\vec{q}'}]}{2^{m}}\right\rangle_{V_{\text{L}}}\,.\label{eq:DHS0}
\end{equation}
Since the bitstring $\vec{p}$ ($\vec{q}$) can be expressed as a bit-wise concatenation of the form $\vec{p}=(\vec{p})_{S_{\LCb}\cup S_{\overline{\tau}}}\cdot(\vec{p})_{S_{\tau}}$ (and similarly for $\vec{q}$),  we can first evaluate   
\begin{equation}
    \sum_{(\vec{p}\vec{q})_{S_{\LCb}\cup S_{\overline{\tau}}}}\delta_{(\vec{p},\vec{q})_{S_{\LCb}\cup S_{\overline{\tau}}}}\Psi_{\vec{q}\vec{p}}=\Tr_{\tau}\left[(\ketbra{\vec{p}}{\vec{q}}_{S_\tau}\otimes\id_w)\tilde{\rho}_{Wk}\right]\,,\label{eq:sumPsimm}
\end{equation}
where $\Tr_{\tau}$ is the partial trace over the Hilbert space $\HC_\tau$ of the qubits in $S_\tau$, and where we defined
\begin{equation}
\tilde{\rho}_{Wk}=\Tr_{{\LCb}\cup {\overline{\tau}}}[V_{\text{L}}\rho V_{\text{L}}\ad]\,,
\end{equation}
as the reduced state of  $V_{\text{L}}\rho V_{\text{L}}\ad$ on $\HC_w\otimes \HC_{\tau}$. Then, from the term $\delta_{(\vec{p},\vec{q}')_{S_\tau}}\delta_{(\vec{p}',\vec{q})_{S_\tau}}$ in~\eqref{eq:varlocalfull}, we get
\begin{align}
    \sum_{(\vec{p}\vec{q})_{S_\tau}}\delta_{(\vec{p},\vec{q}')_{S_\tau}}\delta_{(\vec{p}',\vec{q})_{S_\tau}}\left( \Tr[\Psi_{\vec{p}\vec{q}}\Psi_{\vec{q}\vec{p}}]-\frac{\Tr[\Psi_{\vec{p}\vec{q}}]\Tr[\Psi_{\vec{q}\vec{p}}]}{2^{m}}\right)    &= D_{HS}\left(\tilde{\rho}_{w\tau}, \tilde{\rho}_{\tau}\otimes\frac{\id}{2^m}\right)\,, \label{eq:DHSsm}
\end{align}
where $D_{HS}\left(\rho,\sigma\right)=\Tr[(\rho-\sigma)^2]$ is the Hilbert-Schmidt distance, and where we defined 
\begin{equation}
    \tilde{\rho}_{\tau}=\Tr_{w}(\tilde{\rho}_{w\tau})\,, \quad \text{and} \quad \tilde{\rho}_w=\Tr_{\tau}(\tilde{\rho}_{w\tau})=\Tr_{{\overline{w}}}\left[V_{\text{L}}\rho V_{\text{L}}\ad\right]\,,
\end{equation}
as the reduces states of $V_{\text{L}}\rho V_{\text{L}}\ad$  on subsystem $\HC_{\tau}$, and $\HC_w$, respectively. Equation~\eqref{eq:DHSsm} quantifies how far $\tilde{\rho}_{w\tau}$ is from being a tensor product state where the state on subsystem $S_w$ is maximally mixed.  Evidently, if $ \tilde{\rho}_w$ is maximally mixed then it will be impossible to train any angle in $W$. 

Let us now analyze the following chain of inequalities valid for the Hilbert-Schmidt distance $D_{HS}\left(\tilde{\rho}_{w \tau_t}, \tilde{\rho}_{\tau_t}\otimes\frac{\id}{2^m}\right)$ and any choice of $S_\tau$:
\begin{align}
     D_{HS}\left(\tilde{\rho}_{w \tau_t}, \tilde{\rho}_{\tau_t}\otimes\frac{\id}{2^m}\right)&\geq \frac{4D_{T}\left(\tilde{\rho}_{w \tau_t}, \tilde{\rho}_{\tau_t}\otimes\frac{\id}{2^m}\right)^2}{2^m d_\tau}\label{eq:ineq-TD0}\\
    &\geq \frac{4D_{T}\left(\widetilde{\rho}_w,\frac{\id}{2^m}\right)^2}{2^{m(L-l+2)/2}}\nonumber\\
    &\geq \frac{D_{HS}\left(\widetilde{\rho}_w,\frac{\id}{2^m}\right)}{2^{m(L-l+2)/2}}\,,\label{eq:ineq-TD}
\end{align}
with $D_{T}(A,B)=\Tr\left[\sqrt{(A-B)^2}\right]$ the Trace Distance between the Hermitian operators $A$ and $B$. In the first line we employ the matrix norm equivalence, and we denote as $d_\tau$ the dimension of $\HC_\tau$. The second line is derived from the fact that the Trace Distance is non-increasing over partial trace~\cite{nielsen_chuang}, and from the fact that $d_\tau\leq 2^{m(L-l)/2}$ $\forall \tau$. Finally, the inequality in~\eqref{eq:ineq-TD} employs again the matrix norm equivalence.  From Eqs.~\eqref{eq:ineq-TD}, and~\eqref{eq:varlocalfull}, we find that the following lower bound holds
\begin{equation}
    \Var[\partial_\nu C]\geq\frac{\Tr[\sigma_\nu^2]}{2(2^{2m}-1)^2(2^{m}+1)^{L-l}}\sum_{i\in i_\LC}c_i^2\epsilon_i
    \left\langle D_{HS}\left(\widetilde{\rho}_w,\frac{\id}{2^m}\right)\right\rangle_{V_{\text{L}}}\,.\label{eq:varlocalfull-LB}
\end{equation}

Following a similar procedure as the one previously employed to compute expectation values over $V_{\text{R}}$, we can once again leverage the tensor network representation of quantum circuits to algorithmically integrate over each block and compute $\langle D_{HS}\left(\widetilde{\rho}_W,\frac{\id}{2^m}\right)\rangle_{V_{\text{L}}}$.  After considering that all the blocks in $V_{\text{L}}$ which are not in the backpropagated light-cone of $W$ will simplify to identity, we find 
\begin{align}
    \left\langle D_{HS}\left(\tilde{\rho}_w,\frac{\id}{2^m}\right)\right\rangle_{V_{\text{L}}}&=\sum_{\substack{(k,k')\in k_{\LC_{\text{B}}} \\ k' \geq k}} t_{k,k'}  \epsilon(\rho_{k,k'})\,, \label{eq:average-DHS-rho}
\end{align}
where $t_{k,k'}\geq 0$, and where  $k_{\LC_{\text{B}}}$ is the set of $k$ indices whose associated subsystems $S_k$ are in the backward light-cone $\LC_{\text{B}}$ of $W$. Here we defined the function $\epsilon(M) = D_{\HS}\left(M,\Tr(M)\id / d_M\right)$ where $D_{\HS}$ is the Hilbert-Schmidt distance and $d_M$ is the dimension of the matrix $M$. In addition, $\rho_{k,k'}$ is partial trace of the input state $\rho$ down to the subsystems $S_k S_{k+1}... S_{k'}$. 
In addition, we find that the following inequality holds $\forall k,k'$
\begin{equation}
    t_{k,k'}  \geq \frac{2^{ml}}{(2^m+1)^{2l}}\,.
\end{equation}

Hence, we have $\Var[\partial_\nu C]\geq G_{n}(L,l)$, where 
\begin{equation}
    G_{n}(L,l)=
    \frac{2^{m(l+1)-1}}{(2^{2m}-1)^2(2^{m}+1)^{L+l}} \sum_{ i\in i_\LC} \sum_{\substack{(k,k')\in k_{\LC_{\text{B}}} \\ k' \geq k}} c_i^2  \epsilon(\rho_{k,k'}) \epsilon(\widehat{O}_i)\,,
\end{equation}
where we used the fact that $\Tr[\sigma_\nu^2]=2^m$.

This result can be trivially generalized to the case when $\widehat{O}_i$ and $\widehat{O}_j$ are of the general form in Eq.~\eqref{eq:Omlsm} such that they can have overlapping support on at most $m/2$ qubits. Then, from \eqref{eq:DeltaO} it is straightforward to see that $\Delta O^{ij}_\tau=0$ for all $\tau$, $i$, and $j$, as one would always have to compute traces of the form $\Tr[\widehat{O}_i^{\mu_i}]$, which vanish since $\widehat{O}_i^{\mu_i}$ can be written as a tensor product of Pauli operators.

\end{proof}

\section{Proof of Theorem $1$}
\label{sec:proofTheorem1}

The following theorem provides an upper bound on the variance of the partial derivative of a global cost function which can be expressed as the expectation value of an operator of the form
\begin{equation}
O=c_0\id+\sum_{i=1}^{N} c_i \widehat{O}_{i1}\otimes \widehat{O}_{i2}\otimes \dots\otimes \widehat{O}_{i\xi}\,.\label{eqSM:OG}
\end{equation}
Specifically, we consider two cases of interest:  (i) When $N=1$ and each $\widehat{O}_{1k}$ is a non-trivial projector ($\widehat{O}_{1k}^2=\widehat{O}_{1k}\neq\id$) of rank $r_k$ acting on subsystem $S_k$, or (ii) When $N$ is arbitrary and $\widehat{O}_{ik}$ is traceless with  $\Tr[\widehat{O}_{ik}^2]\leq 2^m$ (for example, when $\widehat{O}_{ik}=\bigotimes_{j=1}^m \sigma^\mu_j$ is a tensor product of Pauli operators $\sigma^\mu_j\in\{\id_j,\sigma^x_j,\sigma^y_j,\sigma^z_j\}$, with at least one $\sigma^\mu_j\neq\id$). In Section~\ref{subsec:Projectors} we first consider case (i), while in Section~\ref{subsec:Pauli} we prove Theorem 1 for case (ii).

We now reiterate Theorem 1 for convenience:
\setcounter{theorem}{0}

\begin{theorem}\label{thm1sm}
Consider a trainable parameter  $\theta^\nu$ in a block $W$ of the ansatz in Fig. 3 of the main text.  Let $\Var[\partial_\nu C]$ be the variance of the partial derivative of a global cost function $C$ (with $O$ given by~\eqref{eqSM:OG}) with respect to $\theta^\nu$. If $W_{\text{A}}$, $W_{\text{B}}$ of~\eqref{eq:WA-WB}, and each block in $V(\vec{\theta})$ form a local $2$-design, then $\Var[\partial_\nu C]$ is upper bounded by 
\begin{equation}
    \Var[\partial_\nu C]\leq F_{n}(L,l)\,.\label{eq:varMain1sm} 
\end{equation}
\begin{itemize}
\item[(i)] For $N=1$ and when each $\widehat{O}_{1k}$ is a non-trivial projector, then defining  $R=\prod_{k=1}^{\xi}  r^2_{k}$, we have
\begin{align}
\label{eq:varMain2sm}
F_{n}(L,l) = \frac{2^{2m+(2m-1)(L-l)}}{(2^{2m}-1)\cdot3^{\frac{n}{m}}\cdot2^{(2-\frac{3}{m})n}}c_1^2 R\,.
\end{align}
\item[(ii)] For arbitrary $N$ and when each $\widehat{O}_{ik}$ satisfies $\Tr[\widehat{O}_{ik}]=0$ and $\Tr[\widehat{O}_{ik}^2]\leq 2^m$,  then
\begin{equation}
    F_{n}(L,l)=\frac{2^{2m(L-l+1)+1}}{3^{\frac{2n}{m}}\cdot 2^{\left(3-\frac{4}{m}\right)n}}\sum_{i,j=1}^N c_i c_j \,.
    \label{eq:globalUpperPauliMT}
\end{equation}
\end{itemize}
\end{theorem}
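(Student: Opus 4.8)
\emph{Proof proposal.} The plan is to start from the master variance formula~\eqref{eq:Var-Csm}, in which the constant term $c_0\id$ of $O$ plays no role (its partial derivative vanishes identically), and to exploit the light-cone geometry exactly as in the proof of Theorem~\ref{thm2sm}. The one structural novelty relative to the local case is that a global operator $\widehat{O}_1\otimes\cdots\otimes\widehat{O}_\xi$ has support on blocks lying \emph{outside} the forward light-cone $\LC$ of $W$, so I would enlarge $V_{\text{R}}$ to contain both the gates in $\LC$ and every block $W_{kL}$ of the final layer, writing $V_{\text{R}}=\VLb\otimes\VL$ and $O=c_0\id+c_1\,\widehat{O}_\LC\otimes\widehat{O}_{\LCb}$ with $\widehat{O}_\LC=\bigotimes_{k\in k_\LC}\widehat{O}_k$ and $\widehat{O}_{\LCb}=\bigotimes_{k'\in k_{\LCb}}\widehat{O}_{k'}$. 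Since the blocks are mutually independent, the average over $V_{\text{R}}$ factorizes into an in-cone average over $\VL$ times a product of single-block averages over the out-of-cone blocks $W_{kL}$.

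The two factors are then handled separately. For each out-of-cone block the quantity $\langle\Omega_k\rangle_{W_{kL}}$ is a product of two traces of the form $\Tr[W\,\cdot\,W\ad\,\cdot\,]$, so Lemma~\ref{lemma3} applies directly; using $\Tr[\widehat{O}_k]=\Tr[\widehat{O}_k^2]=r_k$ for a rank-$r_k$ projector yields the bound~\eqref{eq:varWKL}, contributing one factor $1/(2^{2m}-1)$ and one factor $r_k^2$ per block together with a pair of delta functions. For the in-cone factor I would reuse verbatim the algorithmic Weingarten integration of Theorem~\ref{thm2sm}: integrating $\VL$ block by block through Lemma~\ref{lemma3.5} collapses the tensor network into the sum~\eqref{eq:varVLC}, $\sum_\tau t_\tau^{ij}\,\Delta O_\tau^\LC\,\delta_\tau$, whose coefficients satisfy $\sum_\tau t_\tau\leq 2$ and whose operator weights obey $\Delta O_\tau^\LC\leq\prod_{k\in k_\LC}r_k^2$.

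To assemble the estimate I would combine the delta functions produced by both factors with $\langle\Delta\Psi_{\vec{p}\vec{q}}^{\vec{p}'\vec{q}'}\rangle_{V_{\text{L}}}$; the surviving contractions reorganize into Hilbert--Schmidt distances between reduced states of $\VL\rho\VL\ad$, of the type appearing in~\eqref{eq:DHS}, each bounded crudely by $D_{\HS}\leq 2$. Collecting prefactors leaves $c_1^2 R$ with $R=\prod_k r_k^2$ in the numerator, one power of $1/(2^{2m}-1)$ per out-of-cone block, a depth factor $2^{(2m-1)(L-l)}$ from the $(L-l)$ in-cone layers, and dimensional powers of $2$ from $\Tr[\sigma_\nu^2]=2^m$ and the bitstring summations. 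The single clean algebraic input is the elementary inequality $2^{2m}-1\geq\tfrac34\,2^{2m}$ ($m\geq1$) applied to the $n/m$ out-of-cone factors: this is what supplies the $3^{n/m}$ in the announced denominator, the accompanying powers of $2$ then combining with the numerator factors to give $2^{(2-3/m)n}$.

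For case (ii) the same skeleton applies, carried out after restoring the full double sum $\sum_{i,j}c_ic_j$. First, tracelessness $\Tr[\widehat{O}_{ik}]=0$ annihilates the disconnected term in each out-of-cone single-block average of Lemma~\ref{lemma3}, leaving only the connected contribution controlled uniformly by $\Tr[\widehat{O}_{ik}\widehat{O}_{jk}]\leq\sqrt{\Tr[\widehat{O}_{ik}^2]\Tr[\widehat{O}_{jk}^2]}\leq 2^m$. Second, Lemma~\ref{lemma6} forces the in-cone weight $\Delta O_\tau^{ij}$ to vanish whenever $\widehat{O}_i$ and $\widehat{O}_j$ have non-overlapping support, pruning the double sum. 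Third, the surviving powers reorganize into the stated denominator $3^{2n/m}\,2^{(3-4/m)n}$ and depth factor $2^{2m(L-l+1)}$. I expect the genuine obstacle to be neither a missing lemma nor a new idea but the bookkeeping: tracking precisely how many powers of $2$ and $3$ survive the nested bitstring summations across the two light cones, and confirming that in case (ii) the cross-term bound holds term by term in $(i,j)$ rather than only after summation.
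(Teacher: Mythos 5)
Your proposal follows essentially the same route as the paper's proof: the $V_{\text{R}}=V_{\LCb}\otimes V_\LC$ split, Lemma~\ref{lemma3} for the out-of-cone blocks giving $r_k^2/(2^{2m}-1)$ per block, the Weingarten integration of $V_\LC$ reused from Theorem~\ref{thm2sm} with $\Delta O_\tau\leq\prod_{k\in k_\LC}r_k^2$ and $\sum_\tau t_\tau\leq 2$, the $D_{\HS}\leq 2$ bound after recombining deltas with $\langle\Delta\Psi\rangle_{V_{\text{L}}}$, and the $1-2^{-2m}\geq 3/4$ step producing the $3^{n/m}$; case (ii) is likewise handled as in the paper via tracelessness and the term-by-term cross bound $\Tr[\widehat{O}_{ik}\widehat{O}_{jk}]\leq 2^m$. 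The only stray remark is the appeal to Lemma~\ref{lemma6} to prune the double sum in case (ii), which is vacuous here since global operators always have overlapping support, but this does not affect the bound.
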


\subsection{$N=1$, and $\widehat{O}_{1k}$ is a non-trivial projector}\label{subsec:Projectors}

For simplicity, let us now use $\widehat{O}_k$ when refering to the operators $\widehat{O}_{ik}$. Let us first consider the case when  each $\widehat{O}_k$ is a non-trivial projector ($\widehat{O}_k^2=\widehat{O}_k$) of rank $r_k$ acting on subsystem $S_k$:
\begin{align}
r_{k}:= \Tr\left[\widehat{O}_{k}\right]=\Tr\left[\widehat{O}^2_{k}\right] = \rank\left[\widehat{O}_{k}\right]\,.
\end{align}

\begin{proof}

In the previous section, we defined $V_{\text{R}}$ as containing all gates in the forward light-cone $\LC$ of $W$ and all the gates in the last layer of the ansatz. Hence, we can express $V_{\text{R}}=V_\LC\otimes V_{\LCb}$, where $V_{\LCb}:\HC_{\LCb}\rightarrow\HC_{\LCb}$ is given by
\begin{equation}
    V_{\LCb}=\bigotimes_{k\in k_{\LCb}} W_{kL}\,,\label{eq:V-LCb-Tp}
\end{equation} 
where where $k_{\LCb}$ is the set of $k$ indices whose associated subsystems of qubits $S_k$ are outside of the forward light-cone $\LC$ of $W$. One can always write  $\ketbra{\vec{q}}{\vec{p}}\otimes \id_w$ as a projector onto $H_\LC$ times a projector onto $H_{\LCb}$:
\begin{align}
\qquad \ketbra{\vec{q}}{\vec{p}}\otimes\id_w=\left(\bigotimes_{k\in k_{\LCb}}\ketbra{\vec{q}}{\vec{p}}_{k}\right)\otimes \ketbra{\vec{q}}{\vec{p}}_{\LC}\otimes\id_w\,.\label{eq:P-Tp}
\end{align}
Combining \eqref{eq:eq-Omegapq} with Eqs.~\eqref{eq:V-LCb-Tp} and~\eqref{eq:P-Tp} leads to
\begin{align}
    \Omega_{\vec{q}\vec{p}}&=\left(\prod_{k\in k_{\LCb}}\Tr_k\left[\ketbra{\vec{p}}{\vec{q}}_{k}W_{kL}\ad \widehat{O}_k W_{kL} \right]\right)\Omega_{\vec{q}\vec{p}}^\LC\,,
\label{eq:OmegamniCG}
\end{align}
where
\begin{equation}
   \Omega_{\vec{q}\vec{p}}^\LC= \Tr_{\LC\cap\overline{w}}\left[(\ketbra{\vec{p}}{\vec{q}}\otimes\id_w)V_\LC O^\LC V_\LC\right]\,,
\end{equation}
and where $O^\LC=\bigotimes_{k\in k_{\LCb}}\widehat{O}_k$. Here $\Tr_k$ indicates the trace over $\HC_k$, while $\Tr_{\LC\cap\overline{w}}$ is the trace over the Hilbert space associated with the qubits in $S_\LC\cap S_{\overline{w}}$.

In order to explicitly evaluate the expectation value $\langle\dots\rangle_{V_{\text{R}}}$ in
\begin{equation}
    \Var[\partial_\nu C]=\frac{2^{m-1}\Tr[\sigma_\nu^2]}{(2^{2m}-1)^2}c_1^2\sum_{\substack{\vec{p}\vec{q}\\\vec{p}'\vec{q}'}}\left\langle\left(\Tr[ \Omega_{\vec{q}\vec{p}}\Omega_{\vec{q}'\vec{p}'}] -\frac{\Tr[ \Omega_{\vec{q}\vec{p}}]\Tr[\Omega_{\vec{q}'\vec{p}'}]}{2^{m}}\right)\right\rangle_{V_{\text{R}}}\left\langle\Delta\Psi_{\vec{p}\vec{q}}^{\vec{p}'\vec{q}'}\right\rangle_{V_{\text{L}}}\,,\label{eq:varlocal2}
\end{equation}
we use the fact that the blocks in $V(\vec{\theta})$ are independent, and hence $\langle\dots\rangle_{V_{\text{R}}}=\langle\dots\rangle_{V_{\LCb},V_\LC}$. Then, we have
\begin{align}
&\left\langle\Tr[ \Omega_{\vec{q}\vec{p}}\Omega_{\vec{q}'\vec{p}'}] -\frac{\Tr[ \Omega_{\vec{q}\vec{p}}]\Tr[\Omega_{\vec{q}'\vec{p}'}]}{2^{m}}\right\rangle_{V_{\text{R}}}= \left\langle\Tr[ \Omega_{\vec{q}\vec{p}}^\LC\Omega_{\vec{q}'\vec{p}'}^\LC] -\frac{\Tr[ \Omega_{\vec{q}\vec{p}}^\LC]\Tr[\Omega_{\vec{q}'\vec{p}'}^\LC]}{2^{m}}\right\rangle_{V_\LC} \left(\prod_{k/S_k\subset S_{\LCb}}\left\langle\Omega_k\right\rangle_{W_{kL}}\right)\,,
\label{eq:OmegaLCsm}
\end{align}
with
\begin{equation}\label{eq:omegak}
\Omega_k=\Tr_k\left[\ketbra{\vec{p}}{\vec{q}}_{k}W_{kL}\ad \widehat{O}_{k} W_{kL} \right]\Tr_k\left[\ketbra{\vec{p}'}{\vec{q}'}_{k}W_{kL}\ad \widehat{O}_{k} W_{kL} \right]\,.
\end{equation}

Let us first compute the expectation value of each $\Omega_k$.
From Lemma~\ref{lemma3}, we have:
\begin{align}
\left\langle\Omega_k\right\rangle_{W_{kL}}&=\frac{r_k}{2^{2m}-1}\left(\left(r_k-\frac{1}{2^m}\right)\delta_{(\vec{p},\vec{q})_{S_k}}\delta_{(\vec{p'},\vec{q'})_{S_k}}+\left(1-\frac{r_k}{2^m}\right)\delta_{(\vec{p},\vec{q'})_{S_k}}\delta_{(\vec{p'},\vec{q})_{S_k}}\right)\nonumber\\
&\leq \frac{1}{2^{2m}-1}r_k^2\left(\delta_{(\vec{p},\vec{q})_{S_k}}\delta_{(\vec{p'},\vec{q'})_{S_k}}+\delta_{(\vec{p},\vec{q'})_{S_k}}\delta_{(\vec{p'},\vec{q})_{S_k}}\right)\,,
\end{align}
where in the inequality we have dropped the negative terms and used the fact that $r_k\leq r_k^2$. Then, we have 
\begin{align*}
    \prod_{k\in k_{\LCb}}\left\langle\Omega_k\right\rangle_{W_{kL}}\leq \frac{1}{(2^{2m}-1)^{\xi_{\LCb}}}\prod_{k\in k_{\LCb}}r_k^2 \left(\delta_{(\vec{p},\vec{q})_{S_k}}\delta_{(\vec{p'},\vec{q'})_{S_k}}+\delta_{(\vec{p},\vec{q'})_{S_k}}\delta_{(\vec{p'},\vec{q})_{S_k}}\right)\,.
    \end{align*}
Combining this result with Eq.~\eqref{eq:varlocal2} leads to the upper bound  
\begin{equation}
\begin{aligned}
    \Var[\partial_\nu C ]&\leq \frac{2^{m-1}\Tr[\sigma_\nu^2]}{(2^{2m}-1)^2(2^{2m}-1)^{\xi_{\LCb}}}c_1^2\sum_{\substack{\vec{p}\vec{q}\\\vec{p}'\vec{q}'}}  \Bigg(\left\langle\Tr[ \Omega_{\vec{q}\vec{p}}^\LC\Omega_{\vec{q}'\vec{p}'}^\LC] -\frac{\Tr[ \Omega_{\vec{q}\vec{p}}^\LC]\Tr[\Omega_{\vec{q}'\vec{p}'}^\LC]}{2^{m}}\right\rangle_{V_\LC}\\
    &\quad \times\left(\prod_{k\in k_{\LCb}}r_k^2 \left(\delta_{(\vec{p},\vec{q})_{S_k}}\delta_{(\vec{p'},\vec{q'})_{S_k}}+\delta_{(\vec{p},\vec{q'})_{S_k}}\delta_{(\vec{p'},\vec{q})_{S_k}}\right)\right) \left\langle\Delta\Psi_{\vec{p}\vec{q}}^{\vec{p}'\vec{q}'}\right\rangle_{V_{\text{L}}}\Bigg)\,.\label{eq:var-Glob-UB}
\end{aligned}
\end{equation}

As discussed in the previous section, one can compute  the expectation values in Eq.~\eqref{eq:var-Glob-UB}  by systematically integrating over each block over the unitary group with the respect to the Haar measure. From Eq.~\eqref{eq:Tn_Sumsm} we can find
 \begin{equation}
\begin{aligned}
    \Var[\partial_\nu C ]&\leq \frac{2^{m-1}\Tr[\sigma_\nu^2]}{(2^{2m}-1)^2(2^{2m}-1)^{\xi_{\LCb}}}c_1^2\sum_{\substack{\vec{p}\vec{q}\\\vec{p}'\vec{q}'}}\sum_\tau  \Bigg( t_\tau 
    \delta_{(\vec{p},\vec{q})_{S_{\overline{\tau}}}}\delta_{(\vec{p}',\vec{q}')_{S_{\overline{\tau}}}}\delta_{(\vec{p},\vec{q}')_{S_\tau}}\delta_{(\vec{p}',\vec{q})_{S_\tau}}\Delta O_\tau\\
    &\quad \times \left(\prod_{k\in k_{\LCb}}r_k^2\left(\delta_{(\vec{p},\vec{q})_{S_k}}\delta_{(\vec{p'},\vec{q'})_{S_k}}+\delta_{(\vec{p},\vec{q'})_{S_k}}\delta_{(\vec{p'},\vec{q})_{S_k}}\right)\right) \left\langle\Delta\Psi_{\vec{p}\vec{q}}^{\vec{p}'\vec{q}'}\right\rangle_{V_{\text{L}}}\Bigg)\,.\label{eq:var-Glob-UB2}
\end{aligned}
\end{equation}
Note that by expanding the product $\prod_{k\in k_{\LCb}}\left(\delta_{(\vec{p},\vec{q})_{S_{k}}}\delta_{(\vec{p'},\vec{q'})_{S_{k}}}+\delta_{(\vec{p},\vec{q'})_{S_{k}}}\delta_{(\vec{p'},\vec{q})_{S_{k}}}\right)\delta_{(\vec{p},\vec{q})_{S_{\overline{\tau}}}}\delta_{(\vec{p}',\vec{q}')_{S_{\overline{\tau}}}}\delta_{(\vec{p},\vec{q}')_{S_\tau}}\delta_{(\vec{p}',\vec{q})_{S_\tau}}$, one obtains a sum of $2^{\xi_{\LCb}}$ terms, and according to Eqs.~\eqref{eq:DHS0}--\eqref{eq:DHSsm}, each term in the summation leads to a Hilbert-Schmidt distances between two quantum states. Then, since $D_{HS}(\rho_1,\rho_2)\leq 2$ for any pair of states $\rho_1$, $\rho_2$, we find 
\begin{equation}
    \sum_{\substack{\vec{p}\vec{q}\\\vec{p}'\vec{q}'}}  \delta_{(\vec{p},\vec{q})_{S_{\overline{\tau}}}}\delta_{(\vec{p}',\vec{q}')_{S_{\overline{\tau}}}}\delta_{(\vec{p},\vec{q}')_{S_\tau}}\delta_{(\vec{p}',\vec{q})_{S_\tau}} \left( \prod_{k\in k_{\LCb}}\left(\delta_{(\vec{p},\vec{q})_{S_k}}\delta_{(\vec{p'},\vec{q'})_{S_k}}+\delta_{(\vec{p},\vec{q'})_{S_k}}\delta_{(\vec{p'},\vec{q})_{S_k}}\right)\right) \left\langle\Delta\Psi_{\vec{p}\vec{q}}^{\vec{p}'\vec{q}'}\right\rangle_{V_{\text{L}}}\leq 2\cdot 2^{\xi_{\LCb}}\,. 
\end{equation}
Replacing this result in~\eqref{eq:var-Glob-UB2} leads to 
\begin{align}   
    \Var[\partial_\nu C ]&\leq \frac{2^{m}2^{\xi_{\LCb}}\Tr[\sigma_\nu^2]}{(2^{m}-1)^2(2^{2m}-1)^{\xi_{\LCb}}}c_1^2\left(\prod_{k\in k_{\LCb}} r^2_{k}\right) \sum_{\tau}t_{\tau}\Delta O_{\tau}  \,.\label{eq_varCG-upper}
\end{align}

Next, we consider the terms $\Delta O_\tau$. From Eq.~\eqref{eq:DeltaO} we can show that
\begin{align}
    \Delta O_{\tau} \leq \prod_{k\in k_{\LC}}r^2_k\,,
\end{align}
so that~\eqref{eq_varCG-upper} becomes 
\begin{align}
    \Var[\partial_\nu C ]&\leq \frac{2^{m}2^{\xi_{\LCb}}\Tr[\sigma_\nu^2]}{(2^{2m}-1)^2(2^{2m}-1)^{\xi_{\LCb}}}  c_1^2 R \sum_\tau t_\tau \,,
\end{align}
where $R=\prod_{k=1}^{\xi} r^2_{k}$.

Let us finally show that $\sum_\tau t_\tau \leq 2$ $\forall l ,L$. We recall from Eq.~\eqref{eq:varlocal2} that the coefficients $t_\tau$ are obtained by integrating each block in $V(\vec{\theta})$ over the unitary group with the respect to the Haar measure. Each time a block is integrated one obtains four new tensors weighted by the coefficients $\eta_i$, with $i=1,\ldots,4$ such that $\sum_{\mu=1}^4|\eta_\mu|\leq 1$ for all $m$. Consider now the average $\left\langle\Tr[\Omega_{\vec{q}\vec{p}}^i\Omega_{\vec{q}'\vec{p}'}^j]\right\rangle_{V_{\text{R}}}$, once all the blocks have been integrated we find 
\begin{equation}
    \left\langle\Tr[\Omega_{\vec{q}\vec{p}}^i\Omega_{\vec{q}'\vec{p}'}^j]\right\rangle_{V_{\text{R}}}=\sum_{\mu_1=1}^{4}\eta_{\mu_1}\left(\sum_{\mu_2=1}^{4}\eta_{\mu_2}\left(\dots\left(\sum_{\mu_m=1}^{4}\eta_{\mu_m}\right)\right)\right) T_{\mu_1,\mu_2,\ldots,\mu_m}(O)\,,\label{eq:absval}
\end{equation}
where $\mu_m$ is the number of averaged blocks, and where $T_{\mu_1,\mu_2,\ldots,\mu_m}(O)$ is a tensor contraction of $O$. A similar equation can be obtained for $\left\langle\Tr[ \Omega_{\vec{q}\vec{p}}^i]\Tr[\Omega_{\vec{q}'\vec{p}'}^j]\right\rangle_{V_{\text{R}}}$ as 
\begin{equation}
    \left\langle\Tr[ \Omega_{\vec{q}\vec{p}}^i]\Tr[\Omega_{\vec{q}'\vec{p}'}^j]\right\rangle_{V_{\text{R}}}=\sum_{\mu_1'=1}^{4}\eta_{\mu_1'}\left(\sum_{\mu_2'=1}^{4}\eta_{\mu_2'}\left(\dots\left(\sum_{\mu_m'=1}^{4}\eta_{\mu_m'}\right)\right)\right)T_{\mu_1',\mu_2',\ldots,\mu_m'}(O)\,,\label{eq:absval2}
\end{equation}
so that 
\begin{equation}\label{eq:ttau2}
   \sum_\tau t_\tau =\sum_{\mu_1=1}^{4}\eta_{\mu_1}\left(\sum_{\mu_2=1}^{4}\eta_{\mu_2}\left(\dots\left(\sum_{\mu_m=1}^{4}\eta_{\mu_m}\right)\right)\right)-\frac{1}{2^m}\sum_{\mu_1'=1}^{4}\eta_{\mu_1'}\left(\sum_{\mu_2'=1}^{4}\eta_{\mu_2'}\left(\dots\left(\sum_{\mu_m'=1}^{4}\eta_{\mu_m'}\right)\right)\right)\,.
\end{equation}
Taking the absolute value on both side and using the fact that $\sum_{\mu=1}^4|\eta_\mu|\leq 1$, one gets $\sum_\tau t_\tau=|\sum_\tau t_\tau|\leq1+\frac{1}{2^m}\leq 2$.

Therefore, by using $\Tr\left[\sigma_\nu^2\right]=2^m$, we have
\begin{align*}
    \Var[\partial_\nu C ]&\leq\frac{2^{2m+ \frac{n}{m}+l-L}}{(2^{2m}-1)^2(2^{2m}-1)^{ \frac{n}{m}+l-L-1}}c_1^2 R\,.
\end{align*}
Moreover, we also find that 
\begin{align}
\frac{2^{2m+ \frac{n}{m}+l-L}}{(2^{2m}-1)^2(2^{2m}-1)^{ \frac{n}{m}+l-L-1}}c_1^2R = \frac{2^{2m}}{2^{2m}-1}\cdot\left(2^{2m-1}-\frac{1}{2}\right)^{L-l}\cdot \frac{2^{\frac{n}{m}}}{2^{2n}(1-2^{-2m})^{\frac{n}{m}}}c_1^2 R\,,
\end{align}
and since $2^{2m-1}-\frac{1}{2}< 2^{2m-1}$ and $1\leq\frac{1}{1-2^{-2m}}\leq \frac{4}{3}$, $\Var[\partial_\nu C ]$ can be upper bounded as $\Var[\partial_\nu C ]< F_n(L,l)$, where
\begin{align}
    F_n(L,l)= \frac{2^{2m+(2m-1)(L-l)}}{(2^{2m}-1)\cdot3^{\frac{n}{m}}\cdot2^{(2-\frac{3}{m})n}}c_1^2 R\,.
\label{eq:varglobalupper}
\end{align}
\end{proof}

\subsection{Arbitrary $N$ and $\widehat{O}_{ik}$ is a traceless operators  such that  $\Tr[\widehat{O}_{ik}^2]\leq 2^m$}\label{subsec:Pauli}

We now consider the case when $\widehat{O}_{ik}$ is traceless. 
Specifically, we assume
\begin{align}\label{eq:SMtraceless}
    \Tr\left[\widehat{O}_{ik}\right]=0\,, \quad \text{and}\quad \Tr\left[\widehat{O}_{ik}^2\right]\leq 2^m\,.
\end{align}
Note that if $\widehat{O}_{ik}$ is a tensor product of Pauli operators then $\Tr\left[\widehat{O}_{ik}^2\right]= 2^m$.

\begin{proof}
Let us first analyze the case of $N=1$. Combining Lemma~\ref{lemma3} with Eqs.~\eqref{eq:OmegaLCsm}, and~\eqref{eq:omegak}, we find
\begin{align}
\left\langle\Omega_k\right\rangle_{W_{kL}}&\leq\frac{2^m}{2^{2m}-1}\left(\delta_{(\vec{p'},\vec{q})_{S_k}}\delta_{(\vec{p},\vec{q'})_{S_k}}-\frac{1}{2^m}\delta_{(\vec{p},\vec{q})_{S_k}}\delta_{(\vec{p'},\vec{q'})_{S_k}}\right)\leq \frac{2^m}{2^{2m}-1}\delta_{(\vec{p'},\vec{q})_{S_k}}\delta_{(\vec{p},\vec{q'})_{S_k}}\,,
\end{align}
which yields
\begin{align}\label{eq:omegakpauli}
    \prod_{k\in k_{\LCb}} \left\langle\Omega_k\right\rangle_{W_{kL}} \leq \frac{2^{\xi_{\LCb}m}}{(2^{2m}-1)^{\xi_{\LCb}}}\prod_{k\in k_{\LCb}} \delta_{(\vec{p'},\vec{q})_{S_k}}\delta_{(\vec{p},\vec{q'})_{S_k}}\,.
\end{align}
Therefore, from \eqref{eq:var-Glob-UB2}, and~\eqref{eq:omegakpauli}, we can obtain 
\begin{align}
    \Var\left[\partial_\nu C\right] \leq& \frac{2^{2m-1}}{(2^{2m}-1)^2(2^{2m}-1)^{\xi_{\LCb}}}c_1^2\cdot \frac{2^{\xi_{\LCb}m}}{(2^{2m}-1)^{\xi_{\LCb}}}\\
    &\times\sum_{\substack{\vec{p}\vec{q}\\\vec{p}'\vec{q}'}}\sum_\tau   t_\tau \Delta O_\tau \left(\prod_{k\in k_{\LCb}}\delta_{(\vec{p'},\vec{q})_{S_k}}\delta_{(\vec{p},\vec{q'})_{S_k}}\right)
    \delta_{(\vec{p},\vec{q})_{S_{\overline{\tau}}}}\delta_{(\vec{p}',\vec{q}')_{S_{\overline{\tau}}}}\delta_{(\vec{p},\vec{q}')_{S_\tau}}\delta_{(\vec{p}',\vec{q})_{S_\tau}} \left\langle\Delta\Psi_{\vec{p}\vec{q}}^{\vec{p}'\vec{q}'}\right\rangle_{V_{\text{L}}}\,.
\end{align}
Let us now consider the terms $\Delta O_{\tau}$ from \eqref{eq:DeltaO}. It is straightforward to see that $\Delta O_{\tau}=0$ if $S_{\tau_z}\neq\emptyset$, due to the fact that $\Tr[\widehat{O}_{ik}]=0$.  Hence, let us define $\tau_p$ as the set of indexes such that  $S_{\tau_z}=\emptyset$. Then, we have 
\begin{align}
\Delta O_{\tau} \leq
\begin{cases}
2^{m(L-l+1)}&~~(\tau\in\tau_p)\\
0&~~(\tau\not\in\tau_p)
\end{cases}\,.
\end{align}
The latter implies  
\begin{align}
\sum_\tau t_\tau \Delta O_{\tau}\leq\sum_{\tau\in\tau_p} t_\tau \Delta O_{\tau}\leq 2\times 2^{m(L-l+1)}\,,
\end{align}
where we used the fact that $\sum t_{\tau} \leq 2$ (see Eq.~\eqref{eq:ttau2} in the previous section).

According to Eqs.~\eqref{eq:DHS0}--\eqref{eq:DHSsm}, the product $\left(\prod_{k\in k_{\LCb}}\delta_{(\vec{p'},\vec{q})_{S_{k}}}\delta_{(\vec{p},\vec{q'})_{S_{k}}}\right)\delta_{(\vec{p},\vec{q})_{S_{\overline{\tau}}}}\delta_{(\vec{p}',\vec{q}')_{S_{\overline{\tau}}}}\delta_{(\vec{p},\vec{q}')_{S_\tau}}\delta_{(\vec{p}',\vec{q})_{S_\tau}}$, leads to a Hilbert-Schmidt distances between two quantum states, which is always upper bounded by $2$. We then find  
\begin{align}\label{eq:varPauli}
    \Var\left[\partial_\nu C\right] \leq \frac{2^{2m+1}}{(2^{2m}-1)^2(2^{2m}-1)^{\xi_{\LCb}}}c_1^2\cdot \frac{2^{\xi_{\LCb}m}}{(2^{2m}-1)^{\xi_{\LCb}}}\cdot 2^{m(L-l+1)}\,.
\end{align}
Since $n=m(L-l+1)+m\xi_{\LCb}$, we can rewrite~\eqref{eq:varPauli} as
\begin{align}
    \Var\left[\partial_\nu C\right] \leq \frac{2^{2m+n+1}(2^{2m}-1)^{2(L-l)}}{2^{4n}(1-2^{-2m})^{\frac{2n}{m}}}c_1^2\,.
\end{align}
Finally, noting that $2^{2m}-1\leq2^{2m}$ and $1\leq \frac{1}{1-2^{-2m}}\leq \frac{4}{3}$, we have 
\begin{align}\label{eq:boundvarproof2}
    \Var\left[\partial_\nu C\right] \leq \frac{2^{2m(L-l+1)+1}}{3^{\frac{2n}{m}} 2^{\left(3-\frac{4}{m}\right)n}}c_1^2\,.
\end{align}

Let us now consider the arbitrary $N$ case, where  the operator $O$ can be expressed as
\begin{equation}\label{eq:Op-O-PS-sm}
    O=c_0\id+\sum_{i=1}^{N} c_i O_i \,, 
\end{equation}
and where each $O_i$ can be expressed as $O_i=\widehat{O}_{i1}\otimes \widehat{O}_{i2}\otimes \dots\otimes \widehat{O}_{i\xi}$ with $\widehat{O}_{ik}$ satisying~\eqref{eq:SMtraceless}.

Here it is convinient to define $C_i=\Tr[O_iV(\thv)\rho V\ad(\thv)]$, so that $C=c_0+\sum_ic_iC_i$. When computing the variance we now have to consider the cross terms arising from $C_i$, and $C_j$ with $i\neq j$:
\begin{equation}\label{eq:expCij}
\Var[\partial_\nu C]=\langle(\partial_\nu C)^2\rangle_V=\sum_ic_i^2 \langle(\partial_\nu C_i)^2\rangle_V+\sum_{i\neq j}c_ic_j\langle\partial_\nu C_i\partial_\nu C_j\rangle_V\,.
\end{equation}
First, let us remark that 
\begin{align}
    \Tr\left[\widehat{O}_{ik}\right]=0\,, \quad \text{and}\quad \Tr\left[\widehat{O}_{ik}\widehat{O}_{jk'}\right]\leq 2^m\,, \,\, \forall{i,j,k,k'} \quad \text{and}\quad 
\begin{cases}
\Delta O_{\tau}^{ij} \leq2^{m(L-l+1)}&~~(\tau\in\tau_p)\\
\Delta O_{\tau}^{ij} =0&~~(\tau\not\in\tau_p)
\end{cases}\,,
\end{align}
where $\Delta O_{\tau}^{ij}$ was defined in \eqref{eq:DeltaO}, and where $\Tr\left[\widehat{O}_{ik}\widehat{O}_{jk'}\right]\leq 2^m$ follows from the fact that $\Tr\left[\widehat{O}_{ik}^2\right]\leq 2^m$ and $\Tr\left[\widehat{O}_{jk'}^2\right]\leq 2^m$. Hence, it is straigtforward to see that the upper bound in Eq.~\eqref{eq:boundvarproof2} also holds for the cross terms in \eqref{eq:expCij}, and we then have
\begin{equation}\label{eq:upperCij}
    \Var[\partial_\nu C]\leq F_n(L,l)\,.
\end{equation}
where
\begin{equation}\label{eq:upperCij2}
  F_n(L,l) = \frac{2^{2m(L-l+1)+1}}{3^{\frac{2n}{m}} 2^{\left(3-\frac{4}{m}\right)n}} \sum_{ij}c_ic_j\,.
\end{equation}
\end{proof}

\section{Proofs of Corollaries}
\label{sec:proofcor}

\subsection{Proof of Corollary 1}
\label{sec:Proofcor1}

\begin{corollary}\label{cor1sm}
Consider the function $F_n(L,l)$. 

\begin{itemize}
    \item[(i)] Let $N=1$ and let each $\widehat{O}_{1k}$ be a non-trivial projector, as in case (i) of  Theorem~\ref{thm1sm}. If $c_1^2 R \in\OC(2^n)$ and if the number of layers $L\in\OC(\poly(\log(n)))$, then
    \begin{equation}
F_{n}\left(L,l\right)\in\OC\left(2^{-\left(1-\frac{1}{m}\log_23\right) n}\right)\,,
\end{equation}
which implies that $\Var[\partial_\nu C]$ is exponentially vanishing in $n$ if $m\geq2$.

\item[(ii)] Let $N$ be arbitrary, and let each $\widehat{O}_{ik}$ satisfy $\Tr[\widehat{O}_{ik}]=0$ and $\Tr[\widehat{O}_{ik}^2]\leq 2^m$, as in case (ii) of  Theorem~\ref{thm1sm}. If $N\in\OC(2^n)$,  $c_i\in\OC(1)$, and if the number of layers $L\in\OC(\poly(\log(n)))$, then
\begin{equation}
F_{n}\left(L,l\right)\in\OC\left(\frac{1}{2^{\left(1-\frac{1}{m}\right)n}}\right)\,,
\end{equation}
which implies that $\Var[\partial_\nu C]$ is exponentially vanishing in $n$ if $m\geq2$.
\end{itemize}

\end{corollary}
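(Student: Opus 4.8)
The plan is to treat this purely as an asymptotic analysis of the two closed-form expressions for $F_n(L,l)$ supplied by Theorem~\ref{thm1sm}, exploiting two structural facts: that $m$ is a fixed constant (it does not scale with $n$) and that $L$ grows only quasipolynomially in $n$. First I would record that every $m$-dependent prefactor appearing in Theorem~\ref{thm1sm} — the factors $2^{2m}$, $(2^{2m}-1)$, $2m-1$, and so on — is $\OC(1)$ and may be swept into the implicit constant of the $\OC$ bound. The only genuinely $n$-dependent quantities are then the layer-dependent numerator, the rank/coefficient data, and the denominators $3^{n/m}2^{(2-3/m)n}$ in case (i) and $3^{2n/m}2^{(3-4/m)n}$ in case (ii). To make the exponential rates directly comparable I would convert every base-$3$ factor to base $2$ via $3^{n/m}=2^{(n/m)\log_2 3}$, so that each bound collapses to a single power $2^{E(n)}$ multiplied by a subexponential correction.

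Second, I would dispose of the layer-dependent numerator. Since $L\in\OC(\poly(\log n))$ and the layer index satisfies $1\leq l\leq L$, we have $L-l\leq L\in\OC(\poly(\log n))$, so the factor $2^{(2m-1)(L-l)}$ in case (i) and $2^{2m(L-l+1)+1}$ in case (ii) are both of the form $2^{\OC(\poly(\log n))}$. The key observation is that $\poly(\log n)=o(\varepsilon n)$ for every $\varepsilon>0$, hence such a quasipolynomial factor equals $2^{o(n)}$; multiplied against a power with a strictly negative leading exponent it neither changes the sign of that exponent nor the claimed decay, being absorbed by any bound whose exponent is strictly looser than the exact one.

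With these reductions the remaining arithmetic is mechanical. In case (i) I would substitute $c_1^2R\in\OC(2^n)$ and collect exponents, obtaining a dominant power $2^{(-1+(3-\log_2 3)/m)n}$ up to the subexponential layer factor; since $3-\log_2 3<\log_2 3$, this is bounded above by $2^{-(1-\frac1m\log_2 3)n}$, which is exactly the stated form, and whose exponent is negative precisely when $m>\log_2 3$, i.e. for integer $m\geq 2$. In case (ii) I would bound $\sum_{i,j=1}^N c_ic_j=\left(\sum_{i=1}^N c_i\right)^2\leq N^2\max_i c_i^2\in\OC(2^{2n})$ using $N\in\OC(2^n)$ and $c_i\in\OC(1)$, then collect exponents to get a dominant power $2^{(-1+(4-2\log_2 3)/m)n}$, which since $4-2\log_2 3<1$ is bounded above by $2^{-(1-\frac1m)n}$, negative for $m\geq 2$. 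In both cases the conclusion for $\Var[\partial_\nu C]$ follows at once from the upper bound $\Var[\partial_\nu C]\leq F_n(L,l)$ of Theorem~\ref{thm1sm}.

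The step requiring the most care — more bookkeeping than difficulty — is justifying why the clean exponents stated in the corollary are legitimate even though direct substitution produces strictly more negative (tighter) exponents. The slack between $(3-\log_2 3)/m$ and $(\log_2 3)/m$ in case (i), and between $(4-2\log_2 3)/m$ and $1/m$ in case (ii), is precisely what absorbs both the $\OC(1)$ prefactors and the $2^{o(n)}$ layer dependence, so the proof must verify these inequalities between the exact and the stated exponents rather than attempt to match them, and must confirm that the resulting sign change in the dominant exponent occurs exactly at the threshold $m\geq 2$ claimed in the statement.
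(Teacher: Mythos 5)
Your proposal is correct and follows essentially the same route as the paper's own proof: both absorb the quasipolynomial layer factor $2^{\OC(\poly(\log n))}$ into the exponent slack $(2\log_2 3-3)\,n/m$ between the exact decay rate and the stated one (the paper packages exactly this slack as the absorbing factor $(9/8)^{n/m}$), and both then reduce to the same base-$2$ exponent bookkeeping with $c_1^2R\in\OC(2^n)$ in case (i) and $\sum_{i,j}c_ic_j\in\OC(2^{2n})$ in case (ii). Your verification that the dominant exponent changes sign precisely at $m\geq 2$ matches the paper's conclusion.
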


Let us first consider case (i) in Theorem~\ref{thm1sm}.
\begin{proof}
Let us assume $L\in\OC\left(\poly(\log(n))\right)$, so that we have $2^{(2m-1)(L-l)}\in\OC\left(2^{\poly(\log(n))}\right)$. Here, note that  
\begin{align*}
    \lim_{n\to \infty}\left|{\frac{\poly(\log(n))}{n}}\right|=0,
\end{align*}
which means that $n$ grows faster than the any polylogarithmic functions of $n$. Therefore, we can write  
$\OC\left(\poly(\log(n))\right)\subset\OC\left(n\right)$. For a nonzero constant $a$, we can write $\OC\left(|a|n\right)=\OC\left(n\right)$, and we can choose $a=\frac{1}{m}\log_2\left(9/8\right)$. Therefore, we can also write 
$\OC\left(2^{\poly(\log(n))}\right)\subset\OC\left(\left(9/8\right)^{\frac{n}{m}}\right)$.  In addition, if we have $c_1^2 R\in\OC\left(2^n\right)$, from \eqref{eq:varglobalupper}, one can obtain  
\begin{align}
    F_{n}(L,l)\in\frac{1}{3^{\frac{n}{m}} 2^{\left(2-\frac{3}{m}\right)n}}\OC\left(\left(9/8\right)^{\frac{n}{m}}\right)\OC\left(2^n\right)=\OC\left(\frac{1}{2^{\left(1-\frac{1}{m}\log_23\right) n}}\right),
\end{align}
where $1-\frac{1}{m}\log_2 3>0$ for $m\geq 2$. Hence the upper bound of $\Var[\partial_\nu C_G]$ exponentially vanishing when $m\geq 2$. 
\end{proof}

Let us now consider case (ii) in Theorem~\ref{thm1sm}.
\begin{proof}
We now have  $2^{2m(L-l+1)+1)}\in\OC\left(2^{\poly(\log(n))}\right)\subset\OC\left(\left(9/8\right)^{\frac{n}{m}}\right)$. We have from Eq.~\eqref{eq:upperCij2} that if  $c_i,c_j \in\OC\left(1\right)$, the following bound holds 
\begin{align*}
    F_{n}(L,l)\in \frac{1}{3^{\frac{2n}{m}}2^{\left(3-\frac{4}{m}\right)n}}\OC\left(\left(9/8\right)^{\frac{n}{m}}\right) \OC\left(2^{2n}\right)=
    \OC\left(\frac{1}{2^{\left(1-\frac{1}{m}\right)n}}\right)\,.
\end{align*}
Hence, the upper bound of $\Var[\partial_\nu C_G]$ exponentially vanishing when $m\geq 2$.
\end{proof}

\subsection{Proof of Corollary 2}
\label{sec:Proofcor2}

\begin{corollary}\label{cor2SM}
Consider the function $F_n(L,l)$. Let $O$ be an operator of the form~\eqref{eq:Omlsm}, as in Theorem~\ref{thm2sm}. If at least one term $c_i^2  \epsilon(\rho_{k,k'}) \epsilon(\widehat{O}_i)$  in the sum in \eqref{eqSM:varMaint22sm} vanishes no faster than $\Omega(1/\poly(n))$, and if the number of layers $L$ is $\OC(\log(n))$, then 
\small
\begin{equation}
    G_n(L,l) \in \Omega\left(\frac{1}{\poly(n)}\right)\,.\label{eq:varGlobalCor2SM}
\end{equation}
On the other hand, if at  least one term $c_i^2  \epsilon(\rho_{k,k'}) \epsilon(\widehat{O}_i)$ in the sum in~\eqref{eqSM:varMaint22sm} vanishes no faster than  $\Omega\left(1/2^{\poly(\log(n))}\right)$, and if the number of layers is $\OC(\poly(\log(n)))$, then 
\begin{equation}
    G_n(L,l) \in \Omega\left(\frac{1}{2^{\poly(\log(n))}}\right)\,.\label{eq:varGlobalCor3SM}
\end{equation}
\end{corollary}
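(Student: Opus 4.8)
The plan is to exploit two features of the lower bound in Theorem~2: every summand $c_i^2\,\epsilon(\rho_{k,k'})\,\epsilon(\widehat{O}_i)$ is nonnegative (because $\epsilon(M)=D_{\HS}(M,\Tr(M)\id/d_M)\geq 0$ and $c_i^2\geq 0$), and $m$ is a fixed constant that does not scale with $n$. Since the double sum consists of nonnegative terms, I would first discard all but the single summand that is assumed to decay no faster than $\Omega(1/\poly(n))$ (respectively $\Omega(1/2^{\poly(\log n)})$); this is a valid lower bound on $G_n(L,l)$ because the discarded terms only add nonnegative contributions. The problem then reduces to lower-bounding the scalar prefactor $P:=2^{m(l+1)-1}/[(2^{2m}-1)^2(2^m+1)^{L+l}]$ and multiplying the two $\Omega(\cdot)$ estimates together.

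For the prefactor, I would isolate its $l$-dependence. Writing $2^{m(l+1)}/(2^m+1)^{L+l}=2^m\,(2^m/(2^m+1))^{l}\,(2^m+1)^{-L}$, one sees that the $l$-dependent factor $(2^m/(2^m+1))^{l}$ is decreasing in $l$, so over the admissible range $1\le l\le L$ it is minimized at $l=L$. Substituting $l=L$ yields the clean, $l$-free lower bound
\begin{equation}
P\geq \frac{2^{m-1}}{(2^{2m}-1)^2}\,\gamma^{L},\qquad \gamma:=\frac{2^m}{(2^m+1)^2}\in(0,1),
\end{equation}
in which the leading fraction and $\gamma$ are constants precisely because $m$ is fixed. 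This is the step that does the real work: it collapses all the $L$- and $l$-dependence into a single constant base raised to the power $L$, and it holds uniformly for every admissible $l$.

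The remaining step is asymptotic bookkeeping. For the first claim, $L\in\OC(\log n)$ means $L\leq a\log_2 n$ for some constant $a$ and all large $n$; since $0<\gamma<1$ this gives $\gamma^{L}\geq \gamma^{a\log_2 n}=n^{-a\log_2(1/\gamma)}\in\Omega(1/\poly(n))$, hence $P\in\Omega(1/\poly(n))$. Multiplying $P$ by the retained term, which is $\Omega(1/\poly(n))$ by hypothesis, and using that a product of two inverse polynomials is again an inverse polynomial, yields $G_n(L,l)\in\Omega(1/\poly(n))$. For the second claim, $L\in\OC(\poly(\log n))$ gives $L\le a(\log_2 n)^d$, so $\gamma^{L}\geq 2^{-a\log_2(1/\gamma)(\log_2 n)^d}\in\Omega(1/2^{\poly(\log n)})$; combining with the retained term (now $\Omega(1/2^{\poly(\log n)})$) and noting that a sum of two polylogarithmic exponents is again polylogarithmic gives $G_n(L,l)\in\Omega(1/2^{\poly(\log n)})$.

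I do not expect a deep obstacle here; the only points requiring care are the monotonicity argument that lets me replace $l$ by $L$ uniformly (so the conclusion holds for every admissible $l$ rather than a single value), and the elementary but essential observation that a constant base raised to an $\OC(\log n)$ exponent decays only polynomially while an $\OC(\poly(\log n))$ exponent decays only subexponentially. Everything else follows from nonnegativity of the summands and from closure of the $\poly(n)$ and $2^{\poly(\log n)}$ families under products.
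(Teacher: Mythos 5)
Your proposal is correct and follows essentially the same route as the paper's own proof: drop all but the one summand guaranteed by hypothesis (valid by nonnegativity), show the prefactor $\tfrac{2^{m(l+1)-1}}{(2^{2m}-1)^2(2^m+1)^{L+l}}$ is $\Omega(1/\poly(n))$ (resp.\ $\Omega(1/2^{\poly(\log n)})$) using $l\leq L$ and $L\in\OC(\log n)$ (resp.\ $\OC(\poly(\log n))$), and multiply. Your explicit monotonicity-in-$l$ step and the constant $\gamma=2^m/(2^m+1)^2$ just make uniform in $l$ what the paper asserts more tersely via $(2^m+1)^{-(L+l)}\in\Omega(1/\poly(n))$.
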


\begin{proof}
Let us  assume that at least one term $ c_i^2  \epsilon(\rho_{k,k'}) \epsilon(\widehat{O}_i)$ in~\eqref{eqSM:varMaint22sm} vanishes no faster than $\Omega\left(1/\poly(n)\right)$. 
Then, if  we also assume $L\in \OC\left(\log(n)\right)$,  we have $(2^m+1)^{-(L+l)}\in\Omega\left(1/\poly(n)\right)$. The latter  implies
\begin{align*}
    G_{n}(L,l)\in\Omega\left(\frac{1}{\poly(n)}\right) \Omega\left(\frac{1}{\poly(n)}\right)=\Omega\left(\frac{1}{\poly(n)}\right)\,. 
\end{align*} 

On the other hand, if at  least one term $ c_i^2  \epsilon(\rho_{k,k'}) \epsilon(\widehat{O}_i)$ in~\eqref{eqSM:varMaint22sm} vanishes no faster than  $\Omega\left(1/2^{\poly(\log(n))}\right)$, and if $L\in \OC\left(\poly(\log(n))\right)$, we have $(2^m+1)^{-(L+l)}\in \Omega\left(1/2^{\poly(\log(n))}\right)$. Therefore, we obtain 
\begin{align*}
    G_{n}(L,l)\in\Omega\left(\frac{1}{2^{\poly(\log(n))}}\right) \Omega\left(\frac{1}{2^{\poly(\log(n))}}\right)=\Omega\left(\frac{1}{2^{\poly(\log(n))}}\right)\,.
\end{align*}
\end{proof}

\section{Faithfulness of local cost function for Quantum autoencoder}
\label{sec:faithfulness}

Recall that the global cost function $(C_G')$ and local cost function $(C_{\text{L}}')$ for the quantum autoencoder are defined as 
\begin{align*}
    \begin{split}
        C_G' &=1- \Tr\left[\dya{\vec{0}}\rho_{\text{B}}^{\text{out}}\right]\\
        C_{\text{L}}' & =  1- \frac{1}{n_{\text{B}}}\sum_{j=1}^{n_{\text{B}}}\Tr\left[\left(\dya{0}_j\otimes\id_{\overline{j}}\right)\rho_{\text{B}}^{\text{out}}\right]\,.
    \end{split}
\end{align*}
We relate the $C_{\text{L}}'$ and $C_G'$ cost functions in the following proposition. Because this proposition establishes that $C_{\text{L}}'$ and $C_G'$ vanish under the same conditions, and because $C_G'$ is faithful~\cite{Romero}, this in turn proves that $C_{\text{L}}'$ is a faithful cost function.
\begin{proposition}
The cost functions for the quantum autoencoder satisfy \begin{equation}
\label{eq:faithfulness}
    C_{\text{L}}'\leq C_G' \leq n_{\text{B}} C_{\text{L}}'\,.
\end{equation}
\end{proposition}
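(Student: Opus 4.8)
The plan is to strip away the autoencoder-specific packaging and reduce everything to two positive-semidefinite operator inequalities on the trash state, after which the result follows by taking a trace against $\rho_{\text{B}}^{\text{out}}$. First I would fix notation: write $\rho := \rho_{\text{B}}^{\text{out}}$ for the $n_{\text{B}}$-qubit reduced state, let $P := \dya{\vec{0}}$ be the projector onto the all-zeros string, and let $P_j := \dya{0}_j \otimes \id_{\overline{j}}$ be the projector that fixes qubit $j$ to $\ket{0}$. The key structural facts are that the $P_j$ mutually commute and that $P = \prod_{j=1}^{n_{\text{B}}} P_j$. In this language the two cost functions read $C_G' = \Tr[(\id - P)\rho]$ and $C_{\text{L}}' = \frac{1}{n_{\text{B}}}\sum_{j=1}^{n_{\text{B}}}\Tr[(\id - P_j)\rho]$, so since $\rho \geq 0$ it suffices to establish the operator inequalities
\[
\frac{1}{n_{\text{B}}}\sum_{j=1}^{n_{\text{B}}} (\id - P_j) \;\leq\; \id - P \;\leq\; \sum_{j=1}^{n_{\text{B}}} (\id - P_j)
\]
in the positive-semidefinite ordering, and then apply $\Tr[\,\cdot\,\rho]$ to each piece.

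For the left inequality (which gives $C_{\text{L}}' \leq C_G'$) I would argue termwise that $P \leq P_j$ for every $j$. Since $\ket{\vec{0}}$ has its $j$-th qubit in state $\ket{0}$, the range of $P$ is contained in the range of $P_j$; more concretely, $P_j - P = P_j\bigl(\id - \prod_{k \neq j} P_k\bigr)$ is a product of two commuting projectors and hence positive semidefinite. Averaging $P \leq P_j$ over $j$ yields the left inequality immediately.

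The right inequality (which gives $C_G' \leq n_{\text{B}} C_{\text{L}}'$) is the crux, and it is exactly an operator-level union bound. Because all the projectors commute, I would simultaneously diagonalize them in the computational basis $\{\ket{x}\}_{x \in \{0,1\}^{n_{\text{B}}}}$: on $\ket{x}$ the operator $\id - P$ acts as $1$ when $x \neq \vec{0}$ and as $0$ when $x = \vec{0}$, while $\sum_j (\id - P_j)$ acts as the Hamming weight $|x|$. The scalar inequality to check is then trivial --- both sides vanish for $x = \vec{0}$, and for $x \neq \vec{0}$ the right side is $|x| \geq 1$ --- so $\id - P \leq \sum_j (\id - P_j)$ holds on every basis vector and hence as operators. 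I do not anticipate a genuine obstacle here; the only points requiring care are that positivity of $\rho_{\text{B}}^{\text{out}}$ is what lets the operator inequalities pass to the trace, and that commutativity of the $P_j$ is what makes simultaneous diagonalization (and thus the clean reduction to the weight-versus-indicator comparison) legitimate. Finally, I would note that since $C_G'$ is faithful, the sandwich shows $C_{\text{L}}'$ vanishes exactly when $C_G'$ does, establishing faithfulness of the local cost.
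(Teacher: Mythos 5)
Your proof is correct and rests on the same two facts as the paper's: the termwise bound ($P \leq P_j$ for each $j$, averaged over $j$) for the left inequality, and the union bound for the right one — the paper phrases these as classical probability inequalities for the events $E_j$ of measuring qubit $j$ in $\ket{0}$, while you state them as operator inequalities verified in the joint eigenbasis and then trace against $\rho_{\text{B}}^{\text{out}}\geq 0$. Your operator-level formulation makes explicit the commutativity/simultaneous-diagonalization step that licenses the paper's classical-probability reasoning, but the substance is the same.
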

\begin{proof}
Let us first prove $C_{\text{L}}'\leq C_G'$.  Given the state $\rho_{\text{B}}^{\text{out}}$, we can define $E_j$ as the event of qubit $j$ being measured on the $\ket{0}_j$ state, such that the  probability of $E_j$ is given by $\Pr\left(E_j\right) = \Tr\left[\widehat{O}_j^L\rho_{\text{B}}^{\text{out}}\right]$,
where $\widehat{O}_j^L = \dya{0}_j\otimes \id_{\overline{j}}$. 
Then, we can write
\begin{align}\label{eq:CGprob}
    C_G' =1-\Pr\left(\bigcap_{j=1}^{n_{\text{B}}}E_j\right)
\end{align}
Similarly, for the local cost function, we have
\begin{align}\label{eq:CLprob}
    C_{\text{L}}' =1-\frac{1}{n_{\text{B}}}\sum_{j=1}^{n_{\text{B}}}\Pr\left(E_j\right).
\end{align}
Then, it is known that for any set of events $\{E_1,\cdots, E_{n_{\text{B}}}\}$, the following property always holds
\begin{align}\label{eq:unionprob1}
1-\Pr\left(\bigcap_{j=1}^{n_{\text{B}}} E_j\right) \ge \frac{1}{n_{\text{B}}}\sum_{j=1}^{n_{\text{B}}}\left(1-\Pr\left(E_j\right)\right)\,.
\end{align}
From the definition of $C_G'$ and $C_{\text{L}}'$ in Eqs.~\eqref{eq:CGprob}--~\eqref{eq:CLprob}, and from Eq.~\eqref{eq:unionprob1} we have
\begin{align}\label{eq:boundlower}
    C_{\text{L}}'\leq C_G'\,.
\end{align}

Next, let us prove $C_G'\leq n_{\text{B}} C_{\text{L}}'$. 
Since for any set of events $\{E_1,\cdots,E_{n_{\text{B}}}\}$, we have 
\begin{align}\label{eq:unionprob2}
  1-\Pr\left(\bigcap_{j=1}^{n_{\text{B}}} E_j\right) \leq \sum_{j=1}^{n_{\text{B}}}\left(1-\Pr\left(E_j\right)\right).
\end{align}
By definition, we finally have
\begin{align}\label{eq:boundupper}
C_G'\leq n_{\text{B}} C_{\text{L}}'. 
\end{align}
Combining Eqs.~\eqref{eq:boundlower} and~\eqref{eq:boundupper}, we get~\eqref{eq:faithfulness}, which indicates $C_{\text{L}}'=0~\Longleftrightarrow~C_G'=0$.
\end{proof}

\end{document}